\documentclass[a4paper]{article}
\usepackage[T1]{fontenc}
\usepackage[utf8]{inputenc}
\usepackage{authblk}
\usepackage[dvipsnames]{xcolor}
\usepackage{a4wide}
\usepackage{graphicx}
\usepackage{amsmath}
\usepackage{amssymb}
\usepackage{cases}
\usepackage{hyperref, graphicx}
\usepackage{enumitem}
\usepackage{tikz}
\usetikzlibrary{automata, arrows.meta, positioning,calc}

\usepackage[vlined,english,boxed,linesnumbered]{algorithm2e}
      \SetAlgoLined
      \SetKwInput{Input}{Input}
      \SetKwInput{Output}{Output}
      \SetKw{Return}{Return}
      \SetKwIF{If}{ElseIf}{Else}{If}{then}{else if}{else}{endif}
      \SetKwFor{For}{For}{do}{done}
      \SetKwFor{for}{For}{}{}
      \SetKwFor{While}{While}{do}{done}
      \SetKwFor{Repeat}{Repeat}{}{}
      \SetKwRepeat{Do}{Do}{while}
      \SetKwFor{Procedure}{Procedure}{}{}
      \SetKwFor{Function}{Function}{}{}
      \SetKw{Return}{Return}%
      \SetKw{return}{return}%
      \SetKwComment{Comment}{/* }{ */}
      \SetSideCommentRight
      \DontPrintSemicolon

\usetikzlibrary{positioning}
\usetikzlibrary{graphs,graphs.standard,quotes}

\usepackage{amsmath}

\usepackage{amsthm}

\usepackage{subcaption}

\newtheorem{theorem}{Theorem}
\newtheorem{lemma}{Lemma}
\newtheorem{proposition}{Proposition}

\let\bottom=\perp
\newcommand{\card}[1]{\left|{#1}\right|}
\newcommand{\set}[1]{\left\{{#1}\right\}}
\DeclareMathOperator{\vertexsorting}{VertexSorting}
\DeclareMathOperator{\extremalnodes}{ExtremeNodes}
\DeclareMathOperator{\triangleextremities}{TriangleExtremities}

\title{Forbidden Patterns in Temporal Graphs Resulting from Encounters in a Corridor}

\author[1]{M\'onika Csik\'os}
\author[1]{Michel Habib}
\author[1]{Minh-Hang Nguyen}
\author[1]{Mikaël Rabie}
\author[2]{Laurent Viennot}
%
%
\affil[1]{Université Paris Cité, CNRS, IRIF, F-75013, Paris, France}
\affil[2]{Inria, DI ENS, Paris, France}
\begin{document}

\maketitle

\begin{abstract}

  In this paper, we study temporal graphs arising from mobility models, where vertices correspond to agents moving in space and edges appear each time two agents meet. We propose a rather natural one-dimensional model.

If each pair of agents meets exactly once, we get a simple temporal clique where the edges are ordered according to meeting times. 
  In order to characterize which temporal cliques can be obtained as such `mobility graphs', we introduce the notion of forbidden patterns in temporal graphs. Furthermore, using a classical result in combinatorics, we count the number of such mobility cliques for a given number of agents, and show that not every temporal clique resulting from the 1D model can be realized with agents moving with different constant speeds. For the analogous circular problem, where agents are moving along a circle, we provide a characterization via circular forbidden patterns. 

   Our characterization in terms of forbidden patterns can be extended to the case where each edge appears at most once. 
    We also study the problem where pairs of agents are allowed to cross each other several times, using an approach from automata theory. We observe that in this case, there is no finite set of forbidden patterns that characterize such temporal graphs and nevertheless give a linear-time algorithm to recognize temporal graphs arising from this model.

\end{abstract}

\textbf{Keywords:} Temporal graphs, mobility models, forbidden patterns, mobile clique.

\section{Introduction}

\subsection{Motivation}
Temporal graphs (also known as dynamic, evolving or time-varying networks) can be informally described as graphs that change with time. Their study is the subject of many theoretical and practical research works as various   real-world systems that can be modelled with temporal graphs (such as social contacts, co-authorship graphs, or transit networks), see \cite{XuanFJ2003,CasteigtsFQS2012,holme2012temporal,LatapyVM2018,michail2016introduction}. 
A very natural range of models for temporal graphs comes from mobility. When agents move around in space, we can track the moments when they meet each other and obtain a temporal graph. The main objective of this work is to characterize temporal graphs resulting from certain mobility models.

A classical model used for mobile networks is the unit disk graph, where vertices correspond to unit disks in the plane, and two disk are adjacent if and only if they intersect. When the disks are allowed to move, we obtain a so-called dynamic unit disk graph~\cite{VillaniC2021}, and the appearance of edges then forms a temporal graph. We consider a one-dimensional version where the disks are moving along a line or equivalently a narrow corridor of unit width. This could encompass practical settings such as communicating cars on a single road. In particular, if each car has constant speed, 
each pair of cars encounters each other at most once. We further restrict to the sparse regime where each disk intersects at most one other disk at a time. In other words, the edges appearing at any given time always form a matching. This restriction, called local injectivity, has already been considered in the study of simple temporal cliques~\cite{casteigts2021temporal} which are temporal graphs where an edge between any pair of nodes appears exactly once.

If two agents can communicate when they meet, one can ask how information can flow in the network. The appropriate notion of connectivity then arises from temporal paths which are paths where edges appear one after another along the path. The problem of temporal connectivity has been considered, by Awerbuch and Even~\cite{awerbuch1984efficient}, and studied more systematically in \cite{kempe2000connectivity}. A temporal spanner can be defined as a temporal subgraph that preserves connectivity. An interesting question concerning temporal graphs is to understand which classes of temporal graphs have temporal spanners of linear size. Although some temporal graphs have only $\Theta(n^2)$-size temporal spanners~\cite{AxiotisF16}, simple temporal cliques happen to have $O(n\log n)$-size temporal spanners~\cite{casteigts2021temporal}, and it is conjectured that temporal cliques could even have linear size spanners.
Indeed, a natural question is whether temporal graphs resulting from a mobility model can have sparse spanners. In particular, do temporal cliques arising from our 1D model have temporal spanners of linear size?

\subsection{Our contribution}

Our main contribution is a characterization of the simple temporal cliques that result from this 1D model. A simple temporal clique can only arise when agents start out in a certain order along the corridor and end up in the opposite order, after each pair of agents crossing exactly once. We provide a characterization of such ``1D-mobility'' temporal cliques in terms of forbidden  ordered patterns on three nodes. 
As far as we know, we introduce the first definition of forbidden patterns in a temporal graph. Our definition is based on the existence of an order on the nodes (which actually corresponds to their initial order along the line). A forbidden pattern is a temporal subgraph with a relative ordering of its nodes, and with a forbidden relative ordering of its edges according to their time labels.

Our characterization of 1D-mobility temporal cliques  leads directly to an $O(n^3)$-time algorithm for testing whether an ordering of the $n$ nodes of a temporal clique is appropriate and allows to exclude these patterns. 
Interestingly, an $O(n^2)$-time algorithm allows to find such an appropriate initial ordering of the nodes from the list of the edges of the clique ordered by appearance time. Moreover, we  can actually check in $O(n^2)$ time that this order excludes the forbidden patterns to obtain an overall linear-time recognition algorithm, since we have $n(n-1)/2$ edges in our graphs.

Another way of looking at this problem is sorting through adjacent transpositions an array $A$, where $n$ elements are initially stored in reverse order. At each step, we choose an index $i$ such that $A[i]>A[i+1]$ and swap the two elements at positions $i$ and $i+1$. The array is guaranteed to be sorted in increasing order after $T=n(n-1)/2$ steps, since the permutation of the elements in $A$ has initially $T$ inversions while each step decreases this number by one. Note that this is reminiscent of bubble sorting, which indeed operates according to a sequence of such transpositions.
This naturally connects our 1D model to the notion of reduced decompositions of a permutation~\cite{tenner2006reduced}. Using this observation and a classical combinatorial result, we give a formula for the number of temporal cliques with $n$ nodes resulting from our 1D model.

 In addition, we show that our temporal cliques do contain temporal spanners of linear size (with exactly $2n-3$ edges) by highlighting a pertinent temporal subgraph that considers only the edges incident to one of the two extreme agents in the initial order along the line.

Finally, we consider some generalizations. First, we study those 1D mobility graphs where each edge appears \emph{at most} once and provide a forbidden pattern characterization of this setup as well. 
Second, we consider what might be a forbidden pattern definition if edges can occur multiple times, that is when some  pairs of agents can cross each other multiple times.

In the case of 1D model when agents move with different constant speeds, we observe that the resulting temporal mobility cliques are not characterized by the same set of forbidden patterns. 

Finally, we consider some generalised models. The first one is the 1D model in which each pair of agents meet at most once. We provide an extended set of forbidden patterns that characterizes temporal graphs arising in this situation. Our second extension of the 1D model is when each pair of agents can meet several times, that is, the arising temporal graphs can have multi-edges. We give a linear-time algorithm to recognise these multi-crossing 1D mobility temporal graphs. 
We also show that there is no finite set of forbidden patterns that characterizes these multi-crossing 1D temporal mobility graphs.
On the positive side, we give a characterization  using an automaton where states correspond to possible orderings between three nodes and where each possible crossing induces a transition to a new state.

The final extension is a circular model where agents move along a circle. We give forbidden patterns to characterize temporal cliques arising from this model when each pair of agents meet exactly once. Similarly to the linear case, the automata approach is used in the case where each pair of agents can meet several time.

\color{black}

\subsection{Related works}

\paragraph{Dynamic unit disk graphs} A closely related work concerns the detection of dynamic unit disk graphs on a line~\cite{Villani2021,VillaniC2021}. An algorithm is proposed to decide whether a continuous temporal graph can be embedded in the line along its evolution, such that the edges present at each time instant correspond to the unit disk graph within the nodes according to their current position in the embedding at that time. The sequence of edge events (appearance or disappearance) is processed online one after another, relying on a PQ-tree to represent all possible embeddings at the time of the current event according to all events seen so far. It runs within a logarithmic factor from linear time.
Our model is tailored for discrete time and assumes that two nodes cross each other when an edge appears between them. This is not the case in the dynamic unit disk graph model: an edge can appear during a certain period of time between two nodes even if they don't cross each other. The PQ-tree approach can probably be adapted to our model for a more general recognition of the temporal graphs it produces. However, our characterization leads to a faster linear-time algorithm for recognizing simple temporal cliques arising from our model.

\paragraph{Forbidden patterns} Since the seminal papers of Damaschke~\cite{damaschke1990forbidden} and Skrien~\cite{skrien1982relationship}, many hereditary graph classes have been characterized by the existence of an order of the vertices that avoids some pattern, i.e. an ordered structure. These include bipartite graphs, interval graphs, chordal graphs, comparability graphs and many others. In \cite{hell2014ordering}, it is proved that any class defined by a set of forbidden patterns on three nodes can be recognized in  $O(n^3)$ time. This was later improved in~\cite{feuilloley2021graph} with  a full characterization of the 22 graph classes that can be defined with forbidden patterns on three nodes. 
An interesting extension to forbidden circular structures is given in \cite{Guzman-ProHH23}. The growing interest in forbidden patterns in the study of hereditary graph classes is partly supported by the certificate that such an ordering provides as checking if it avoids the patterns can be checked in polynomial time.

\paragraph{Reduced decompositions}  The number of reduced decompositions of a permutation of $n$ elements is studied in \cite{stanley1984number}. An explicit formula is given for the reverse permutation $n,n-1,\ldots,1$ based on the hook length formula~\cite{bandlow2008elementary,frame1954hook}. 

\subsection{Roadmap}

In Section~\ref{sec:def}, we introduce the key notions of the paper. In particular, we provide the precise definitions of temporal graphs, 1D mobility models, and forbidden temporal patterns. Section~\ref{sec:charac} contains our main results: a characterization of simple 1D mobility cliques through forbidden patterns, a formula for the number of such cliques of a given size, a detection algorithm, the description of a linear size spanner, and a discussion on the case of constant speed agents along the line. 
Section~\ref{sec:atmost} handles the case where each pair crosses at most once, by providing the forbidden pattern characterization. Section~\ref{sec:multiedges} considers the case where pairs can cross each other several times, for which we present a linear-time recognition algorithm and describe a related automaton. In Section~\ref{sec:circular}, we study the setup where agents are moving along a circle. We give a  forbidden pattern characterization for the case where pairs cross each other exactly once, and apply the automata approach for the case where pairs can cross each other several times.
 Finally, we state some open questions and perspectives in Section~\ref{sec:conclusion}.
\section{Preliminaries and mobility model}\label{sec:def}

In this section, we introduce the definitions and notations we will use through the paper. In particular, we  first define formally temporal graphs and forbidden patterns. We then introduce the mobility model and related combinatoric concepts.

\subsection{Temporal graphs and forbidden patterns}

Informally, a temporal graph is a graph with a fixed vertex set and whose edges change with time. A \emph{temporal graph} 
can be formally defined as a pair $\mathcal{G}=(G,\lambda)$ where $G=(V,E)$ is a graph with vertex set $V$ and edge set $E$, and $\lambda: E \rightarrow 2^{\mathbb{N}}$ is a labeling assigning to each edge $e\in E$ a non-empty set $\lambda(e)$ of discrete times when it appears.
We note $uv\in E$ the edge between the pair of vertices (or nodes) $u$ and $v$. 
 If $\lambda$ is locally injective in the sense that adjacent edges have disjoint sets of labels, then the temporal graph is said to be \emph{locally injective}. If $\lambda$ is additionally \emph{single valued} (i.e. $|\lambda(e)|=1$ for all $e\in E$), then $(G,\lambda)$ is said to be simple~\cite{casteigts2021temporal}. The maximum time label $T=\max \cup_{e\in E}\lambda(e)$ of an edge is called the \emph{lifetime} of $(G,\lambda)$. In the sequel, we will mostly restrict ourselves to simple temporal graphs and even require the following restriction of locally injective. A temporal graph is \emph{incremental} if at most one edge appears in each time step, that is $\lambda(e)\cap \lambda(f)=\emptyset$ for any distinct $e,f\in E$. 
 The reason for this restriction is mainly to ease the definition of isomorphism introduced later in this section. 
A (strict) \emph{temporal path} is a sequence of 
triplets $(u_i,u_{i+1},t_i)_{i\in [k]}$ such that $(u_1,\ldots,u_{k+1})$ is a path in $G$ with strictly increasing time labels: formally, for all $i\in [k]$, we have $u_i u_{i+1}\in E$, $t_i\in\lambda(u_i u_{i+1})$ and $t_i<t_{i+1}$.
Note that our definition corresponds to the classical strict version of temporal path as we require time labels to strictly increase along the path\footnote{The interested reader can check that the two notions of strict temporal path and non-strict temporal path are the same in locally injective temporal graphs.}.
A temporal graph is \emph{temporally connected} if every vertex can reach any other vertex through a temporal path. A \emph{temporal subgraph} $(G',\lambda')$ of a temporal graph $(G,\lambda)$ is a temporal graph such that $G'$ is a subgraph of $G$ and $\lambda'$ satisfies $\lambda'(e)\subseteq \lambda(e)$ for all $e\in E'$. A \emph{temporal spanner} of $\mathcal{G}$ is a temporal subgraph $\mathcal{H}$ preserving temporal connectivity, that is there exists a temporal path from $u$ to $v$ in $\mathcal{H}$ whenever there exists one in $\mathcal{G}$.
Given a subset $S\subseteq V$ of nodes, the temporal subgraph \emph{induced} by $S$ is defined as the temporal subgraph $\mathcal{G}_{|S}=((S,E’),\lambda’)$ of  $\mathcal{G}$ such that $E’=E\cap {S \choose 2}$ and $\lambda’(e)=\lambda(e)$ for all $e\in E’$. A temporal subgraph $\mathcal{G’}$ is said to be \emph{induced} if there exists a set $S$ such that $\mathcal{G’}=\mathcal{G}_{|S}$.

A representation $\mathcal{R}$ of a temporal graph $\mathcal{G}=((V,E),\lambda)$ is defined as an ordered list of $M=|\lambda|=\sum_{e\in E}|\lambda(e)|$ triplets $\mathcal{R}=(u_1,v_1,t_1),\ldots,(u_M,v_M,t_M)$ where each triplet $(u_i,v_i,t_i)$ indicates that edge $u_iv_i$ appears at time $t_i$. We additionally require that the list is sorted by non-decreasing time. In other words, we have $\lambda(uv)=\{ t : (u,v,t)\in \mathcal{R}\}$ for all $uv\in E$.
Note that any incremental temporal graph $\mathcal{G}$ has a unique representation denoted by
$\mathcal{R}(\mathcal{G})$.
Indeed, its temporal connectivity only depends on the ordering in which edges appear, we can thus assume without loss of generality that we have $\cup_{e\in E}\lambda(e)=[T]$ where $T$ is the lifetime of $((V,E),\lambda)$ (we use the notation $[T]=\{1,\ldots,T\}$). 
Given two incremental  temporal graphs $\mathcal{G}=((V,E),\lambda)$ and $\mathcal{G'}=((V',E'),\lambda')$, an \emph{isomorphism} from $\mathcal{G}$ to $\mathcal{G'}$ is a one-to-one mapping $\phi:V\to V'$ such that, for any $u,v\in V$, $uv\in E\Leftrightarrow \phi(u)\phi(v)\in E'$ ($\phi$ is a graph isomorphism), and their representation $\mathcal{R}(\mathcal{G})=(u_1,v_1,t_1),\ldots,(u_M,v_M,t_M)$ and $\mathcal{R}(\mathcal{G'})=(u'_1,v'_1,t'_1),\ldots,(u'_M,v'_M,t'_M)$ have same length $M=|\lambda|=|\lambda'|$ and are temporally equivalent in the sense that edges appear in the same order: $u'_iv'_i=\phi(u_i)\phi(v_i)$ for all $i\in [M]$.
When such an isomorphism exists, we say that $\mathcal{G}$ and $\mathcal{G'}$ are \emph{isomorphic}.
Given an integer  $L\le M$, we define the \emph{prefix of length $L$} of $\mathcal{G}$ as the temporal graph with representation $(u_1,v_1,t_1),\ldots,(u_L,v_L,t_L)$. We also say that $\phi$ is a \emph{prefix-isomorphism} from $\mathcal{G}$  to $\mathcal{G’}$ when $M'\le M$ and $\phi$ is an isomorphism from the prefix of length $M'$ of $\mathcal{G}$ to $\mathcal{G’}$. When such a prefix-isomorphism exists, we say that $\mathcal{G}$ is \emph{prefix-isomorphic} to $\mathcal{G’}$.

A \emph{temporal clique} is a temporal graph $(G,\lambda)$ where the set of edges is complete, and where we additionally require the temporal graph to be incremental and $\lambda$ to be single valued.
Notice that it is a slight restriction compared to the definition of \cite{casteigts2021temporal} which requires $(G,\lambda)$ to be locally injective rather than incremental. However, we do not lose in generality as one can easily transform any locally injective temporal graph into an incremental temporal graph with same temporal connectivity (we simply stretch time by multiplying all time labels by $n^2$ and arbitrarily order edges with same original time label within the corresponding interval of $n^2$ time slots in the stretched version).
With a slight abuse of notation, we then denote the label of an edge $uv$ by $\lambda(uv)\in\mathbb{N}$.

\smallskip

A \emph{temporal pattern} is defined as an incremental temporal graph $\mathcal{H}=(H,\lambda)$. An incremental temporal graph $\mathcal{G}=(G,\lambda')$ \emph{excludes} $\mathcal{H}$ when it does not have any temporal subgraph $\mathcal{H'}$ which is isomorphic to $\mathcal{H}$. A temporal pattern \emph{with forbidden edges} is a temporal pattern $\mathcal{H}=(H,\lambda)$ together with  a set $F\subseteq V\times V\setminus E$ of forbidden edges in $H=(V,E)$. An incremental temporal graph $\mathcal{G}=((V',E'),\lambda')$ \emph{excludes} $\mathcal{H}$ when it does not have any temporal subgraph $\mathcal{H'}$ which is isomorphic to $(H,\lambda')$ through an isomorphism $\phi$ respecting non-edges, that is any pair of nodes $u,v\in V'$ which is mapped to a forbidden edge $\phi(u)\phi(v)\in F$, we have $uv\notin E'$.
An incremental temporal graph $\mathcal{G}$ \emph{excludes} $\mathcal{H}$ as a \emph{prefix} when it does not have any induced temporal subgraph $\mathcal{H'}$ which is prefix-isomorphic to $\mathcal{H}$. In such case, we say that $\mathcal{H}$ is a \emph{forbidden prefix-pattern} of $\mathcal{G}$.

An \emph{ordered temporal graph} is a pair $(\mathcal{G},\pi)$, where $\mathcal{G}$ is a temporal graph and $\pi$ is an ordering  of its nodes. Similarly, an \emph{ordered temporal pattern}  $(\mathcal{H},\pi)$ is a temporal pattern $\mathcal{H}$ together with an ordering $\pi$ of its nodes. An ordered incremental temporal graph $(\mathcal{G},\pi')$ 
\emph{excludes} $(\mathcal{H},\pi)$ when it does not have any temporal subgraph $\mathcal{H'}$ which is isomorphic to $\mathcal{H}$ through an isomorphism $\phi$ preserving relative orderings, that is $\pi(\phi(u))<\pi(\phi(v))$ whenever $\pi'(u)<\pi'(v)$. We then also say that the ordering \emph{$\pi'$ excludes $(\mathcal{H},\pi)$ from $\mathcal{G}$}, or simply excludes $(\mathcal{H},\pi)$ when $\mathcal{G}$ is clear from the context.
We also define an ordered temporal pattern with forbidden edges similarly as above. 
We also say that an ordered incremental temporal graph $(\mathcal{G},\pi')$ \emph{excludes} $(\mathcal{H},\pi)$ as an \emph{ordered prefix} when it does not have any induced temporal subgraph $\mathcal{H'}$ which is prefix-isomorphic to $\mathcal{H}$ through an isomorphism $\phi$ preserving relative orderings, that is $\pi(\phi(u))<\pi(\phi(v))$ whenever $\pi'(u)<\pi'(v)$. In such case, $(\mathcal{H},\pi)$ is a \emph{forbidden ordered prefix-pattern} of $(\mathcal{G},\pi')$.

\subsection{1D-mobility model}

We introduce here the notion of a temporal graph associated to mobile agents moving along a line that is an one-dimensional space.
Consider $n$ mobile agents in an oriented horizontal line. 
At time $t_0=0$, they initially appear along the line according to an ordering $\pi_0$. 
These agents move in the line and can cross one another as time goes on. We assume that a crossing is always between exactly two neighboring agents, and a single pair of agents cross each other at a single time. By ordering the crossings, we have the $k$-th crossing happening at time $t_k=k$.

A \emph{1D-mobility schedule} from an ordering $\pi_0=a_1,\ldots,a_n$ of $n$ agents is a sequence $x=x_1,\ldots,x_T$ of crossings within the agents. 
Each crossing $x_t$ is a pair $uv$ indicating that agents $u$ and $v$ cross each other at time $t$. Note that their ordering $\pi_{t}$ at time $t$ is obtained from $\pi_{t-1}$ by exchanging $u$ and $v$, and it is thus required that they appear consecutively in $\pi_{t-1}$. To such a schedule, we can associate a temporal graph $\mathcal{G}_{\pi_0,x}=((V,E),\lambda)$ such that:
\begin{itemize}
    \item $V=\{a_1,\ldots,a_n\}$,
    \item 
    $E=\{uv : \exists t\in [T], x_t=uv\}$,
    \item 
    for all $uv\in E$, $\lambda(uv)=\{t : x_t=uv\}$.
\end{itemize}

 We are interested in particular by the case where all agents cross each other exactly once as the resulting temporal graph is then a temporal clique which
 is called a \emph{1D-mobility temporal clique}.
 More generally, we say that an incremental temporal graph $\mathcal{G}$ \emph{corresponds to a 1D-mobility schedule} if there exists some ordering $\pi$ of its vertices and a 1D-mobility schedule $x$ from $\pi$ such that the identity is an isomorphism from $\mathcal{G}$ to $\mathcal{G}_{\pi,x}$. It is then called a \emph{1D-mobility temporal graph}.

\subsection{Reduced decomposition of a permutation}

Our definition of mobility model is tightly related to the notion of reduced decomposition of a permutation~\cite{tenner2006reduced}. Let $\mathcal{S}_n$ denote the symmetric group on $n$ elements. We represent a permutation $w \in S_n$ as a sequence $w=w(1),\ldots,w(n)$ and define its length $l(w)$ as the number of inverse pairs in $w$, i.e.
$l(w)= |\{ i,j : i<j, w(i)>w(j)\}|$. 
A \emph{sub-sequence} $w'$ of $w$ is defined by its length $k\in [n]$ and indices $1\le i_1<\cdots<i_k\le n$ such that $w'=w(i_1),\ldots,w(i_k)$.

A \emph{transposition} $\tau=(i,j)$ is the transposition of $i$ and $j$, that is $\tau(i)=j$, $\tau(j)=i$ and $\tau(k)=k$ for $k\in [n]\setminus\{i,j\}$. It is an \emph{adjacent} transposition when $j=i+1$.
Given a permutation $w$ and an adjacent transposition $\tau=(i,i+1)$, we define the \emph{right product} of $w$ by $\tau$ as the composition $w\tau=w\circ \tau$. Note that $w'=w\tau$, as a sequence, is obtained from $w$ by exchanging the numbers in positions $i$ and $i+1$ as we have $w'(i)=w(\tau(i))=w(i+1)$, $w'(i+1)=w(\tau(i+1))=w(i)$ and $w'(k)=w(k)$ for $k\not= i,j$.
A \emph{reduced decomposition} of a permutation $w \in S_n$ with length $l(w)=l$, is a sequence of adjacent transpositions $\tau_1,\tau_2,\ldots, \tau_l$ such that we have $w = \tau_{1}\ldots \tau_{l}$. 
Counting the number of reduced decompositions of a permutation has been well studied (see in particular \cite{stanley1984number}).

The link with our 1D-mobility model is the following. Consider a 1D-mobility schedule $x$ from an ordering $\pi_0$. Without loss of generality we assume that agents are numbered from $1$ to $n$. Each ordering $\pi_t$ is then a permutation. If  agents $u$ and $v$ cross at time $t$, i.e. $x_t=uv$, and their positions in $\pi_{t-1}$ are $i$ and $i+1$, we then have $\pi_{t}=\pi_{t-1}\tau_t$ where $\tau_t=(i,i+1)$. If each pair of agents cross at most once, then one can easily see that the schedule $x$ of crossings corresponds to a reduced decomposition $\tau_1,\ldots,\tau_T$ of $\pi_0^{-1}\pi_T=\tau_1\cdots\tau_T$ as the ending permutation is $\pi_T=\pi_0\tau_1\cdots\tau_T$. Note that this does not hold if two agents can cross each other more than once as the length of the schedule can then be longer than the length of $\pi_0^{-1}\pi_T$.

Interestingly, another decomposition is obtained by interpreting the crossing $x_t=uv$ at time $t$ as the transposition $(u,v)$. We then have $\pi_t=x_t \pi_{t-1}$ for each time $t$, and finally obtain $\pi_T=x_T\cdots x_1\pi_0$. Note that given an arbitrary sequence of transpositions $x_1,\ldots,x_T$, it is not clear how to decide whether there exists an ordering $\pi_0$ and a corresponding sequence of \emph{adjacent} transpositions $\tau_1,\ldots,\tau_T$ such that $x_t\cdots x_1\pi_0=\pi_0\tau_1\cdots\tau_t$ for all $t\in [T]$. This is basically the problem we address in the next section.

\section{1D-mobility temporal cliques}\label{sec:charac}

\subsection{Characterization}

Consider the ordered temporal patterns from Figure~\ref{fig:patterns} with respect to the initial ordering of the nodes in a 1D-mobility schedule $x$ producing a temporal clique $\mathcal{G}_x$. One can easily see that the upper-left pattern cannot occur in $\mathcal{G}_x$ within three agents $a,b,c$ appearing in that order initially: $a$ and $c$ cannot cross each other as long as $b$ is still in-between them, while the pattern requires that edge $ac$ appears before $ab$ and $bc$. A similar reasoning prevents the presence of the three other patterns. It appears that excluding these four patterns suffices to characterize 1D-mobility temporal cliques, as stated bellow. 

\begin{theorem}\label{th:patterns}
    A temporal clique  is a 1D-mobility temporal clique if and only if there exists an ordering of its nodes that excludes the four ordered temporal patterns of Figure~\ref{fig:patterns}.
\end{theorem}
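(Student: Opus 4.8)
The plan is to prove both directions separately, with the forward (necessity) direction being the easy part already sketched in the text, and the backward (sufficiency) direction requiring the real work. For necessity, suppose $\mathcal{G}$ is a 1D-mobility temporal clique, so $\mathcal{G}=\mathcal{G}_{\pi_0,x}$ for some initial ordering $\pi_0$ and schedule $x$ in which every pair crosses exactly once. I would take $\pi_0$ as the witnessing ordering and argue that none of the four patterns can occur. Fix three agents $a,b,c$ appearing in this left-to-right order in $\pi_0$. The key invariant is that two agents can cross only when they are currently adjacent in $\pi_t$, and crucially, $a$ and $c$ are adjacent at some time only after $b$ has moved out from between them, which (since $b$ starts between $a$ and $c$) requires $b$ to have already crossed at least one of $a,c$. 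One then checks by a short case analysis on the relative order of the three crossing times $\lambda(ab),\lambda(ac),\lambda(bc)$ that exactly four of the six orderings are realizable, and the forbidden four are precisely the patterns of Figure~\ref{fig:patterns}. This handles necessity for every induced triple, hence the whole clique.

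For sufficiency, suppose $\pi$ is an ordering of $V$ excluding the four patterns; I must construct an initial ordering and a valid 1D-mobility schedule producing $\mathcal{G}$. The natural candidate is to take $\pi_0=\pi$ and process the edges of $\mathcal{G}$ in increasing order of their time labels, letting $x_t$ be the $t$-th edge in this order; I then need to show this is a legal schedule, i.e., that when edge $uv$ with $\lambda(uv)=t$ is processed, $u$ and $v$ are currently adjacent in $\pi_{t-1}$. I would prove this by induction on $t$, maintaining the invariant that at each time the current permutation $\pi_{t-1}$ is obtained from $\pi$ by the adjacent transpositions corresponding to $x_1,\ldots,x_{t-1}$, and that the set of pairs that have already crossed is exactly an ``order ideal'' of inversions relative to $\pi$ — that is, a pair $\{u,v\}$ with $\pi(u)<\pi(v)$ has crossed by time $t-1$ iff in $\pi_{t-1}$ they appear in the order $v,u$. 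The forbidden-pattern hypothesis is exactly what guarantees consistency: if $uv$ is the next edge but $u,v$ are not yet adjacent, there is a third agent $w$ strictly between them in $\pi_{t-1}$, and tracing which of the edges $uw,vw$ have or have not yet appeared, together with the relative $\pi$-order of $u,v,w$, produces one of the four forbidden configurations (or a not-yet-realized edge forced to appear too late, handled via the forbidden-edge variant if needed) — contradiction.

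The main obstacle is making the induction in the sufficiency direction airtight: one must verify that the invariant ``pairs-that-have-crossed $=$ an ideal of $\pi$-inversions'' is genuinely preserved, and that whenever the next scheduled edge fails to be adjacent, a concrete forbidden pattern can be extracted. Concretely, if $u,v$ are not adjacent in $\pi_{t-1}$, pick $w$ between them; then among the three pairwise edges, $uv$ has not yet appeared (it is scheduled now) while the status of $uw$ and $vw$ is determined by whether $w$ currently sits on the $u$-side or $v$-side relative to its $\pi$-position. Enumerating the cases for $w$'s $\pi$-rank relative to $u$ and $v$ (three subcases) and its current side (two subcases) yields six situations, each of which either is immediately one of the four forbidden patterns on $\{u,v,w\}$, or pushes the contradiction one step further to a node between $u$ and $w$ or between $w$ and $v$, so that a minimal-counterexample / innermost-$w$ choice closes the argument. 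I would also need the converse bookkeeping: that after the full schedule of $n(n-1)/2$ crossings every pair has crossed exactly once (immediate, since each inversion of $\pi$ relative to the reverse order is resolved exactly once and there are $\binom{n}{2}$ of them matching the $\binom{n}{2}$ edges of the clique), so the produced temporal graph is isomorphic to $\mathcal{G}$ via the identity, as required.
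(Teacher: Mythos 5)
Your proposal is correct and follows essentially the same route as the paper: necessity via the observation that when $b$ starts between $a$ and $c$ the crossing time $\lambda(ac)$ must be the median of the three (the paper's Lemma~\ref{lem:triangle}), and sufficiency by reading the edges in time order as a tentative schedule from $\pi$, using the invariant that a pair's relative order has flipped iff its edge has already appeared, and extracting a forbidden pattern from a witness $w$ lying between a non-adjacent crossing pair (the paper's Lemma~\ref{lem: mapping}). One small slip: with $b$ between $a$ and $c$ only two (not four) of the six time-orderings are realizable, and in the clique case no ``push the contradiction further'' step or forbidden-edge variant is needed---each case for $w$'s position in $\pi_0$ yields one of the four patterns directly.
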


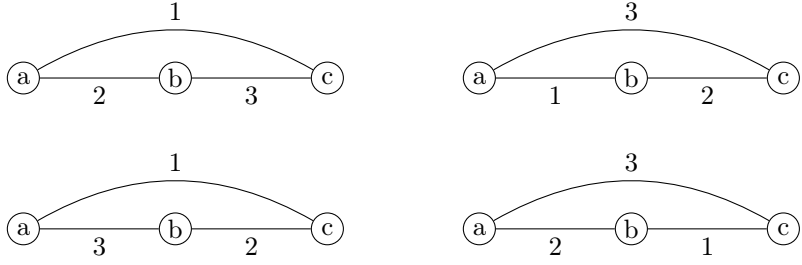
\begin{figure}[htp]
\begin{center}
\begin{tikzpicture}
   \tikzstyle{circlenode}=[draw,circle,minimum size=70pt,inner sep=0pt]
    \tikzstyle{whitenode}=[draw,circle,fill=white,minimum size=12pt,inner sep=0pt]
 
\draw (0,0) node[whitenode] (a1)   [] {a};
\draw (2,0) node[whitenode] (b1)   [] {b};
\draw (4,0) node[whitenode] (c1)   [] {c};
\draw (a1) edge node [below] {2} (b1);
\draw (b1) edge node [below] {3} (c1);
\draw (a1) [bend left=30] edge node [above] {1} (c1);

\draw (6,0) node[whitenode] (a2)   [] {a};
\draw (8,0) node[whitenode] (b2)   [] {b};
\draw (10,0) node[whitenode] (c2)   [] {c};
\draw (a2) edge node [below] {1} (b2);
\draw (b2) edge node [below] {2} (c2);
\draw (a2) [bend left=30] edge node [above] {3} (c2);

\draw (0,-2) node[whitenode] (a3)   [] {a};
\draw (2,-2) node[whitenode] (b3)   [] {b};
\draw (4,-2) node[whitenode] (c3)   [] {c};
\draw (a3) edge node [below] {3} (b3);
\draw (b3) edge node [below] {2} (c3);
\draw (a3) [bend left=30] edge node [above] {1} (c3);

\draw (6,-2) node[whitenode] (a4)   [] {a};
\draw (8,-2) node[whitenode] (b4)   [] {b};
\draw (10,-2) node[whitenode] (c4)   [] {c};
\draw (a4) edge node [below] {2} (b4);
\draw (b4) edge node [below] {1} (c4);
\draw (a4) [bend left=30] edge node [above] {3} (c4);
\end{tikzpicture}
    
\end{center}
\caption{Ordered forbidden patterns in an ordered 1D-mobility temporal clique. Each pattern is ordered from left to right and has associated ordering $a,b,c$.}\label{fig:patterns}
\end{figure}

Let $\mathcal{C}$ denote the class of temporal cliques which have an ordering excluding the four ordered temporal patterns of Figure~\ref{fig:patterns}.
We first prove that any 1D-mobility temporal clique is in $\mathcal{C}$:

\begin{proposition}\label{prop:1DtoPat}
For any 1D-mobility schedule $x$ from an ordering $\pi$ of $n$ agents such that $\mathcal{G}_{\pi,x}=((V,E),\lambda)$ is a temporal clique, the initial ordering $\pi$ excludes the four patterns of Figure~\ref{fig:patterns}.
\end{proposition}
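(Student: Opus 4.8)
The plan is to argue directly from the dynamics of the schedule $x$: fix any three agents $a,b,c$ that appear in this relative order in the initial ordering $\pi$, and track how their relative positions evolve as crossings occur. The key invariant is that at any time $t$, the three agents $a,b,c$ occupy three positions in $\pi_t$ whose relative order is one of the six permutations of $\{a,b,c\}$, and a single crossing among them swaps two that are currently \emph{adjacent} among the three (since a crossing only swaps agents adjacent in the full ordering, and in particular if it involves two of $\{a,b,c\}$ it swaps two that are consecutive within this triple). Starting from $abc$, the first crossing among $\{a,b,c\}$ must therefore be either the pair $ab$ (giving relative order $bac$) or the pair $bc$ (giving relative order $acb$); in neither case is it the pair $ac$. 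This already rules out the two patterns of Figure~\ref{fig:patterns} in which edge $ac$ carries label $1$, i.e.\ the upper-left and lower-left patterns, because for these the first of the three crossings among $a,b,c$ would have to be $ac$.

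Next I would handle the remaining two patterns (upper-right and lower-right), in which edge $ac$ carries the label $3$, i.e.\ $a$ and $c$ cross \emph{last} among the three. The point is to determine, from the first two crossings, what the relative order of $a,b,c$ is just before the third crossing, and check that it is never $a\ldots c$ with $b$ outside, which is exactly what is needed for $a$ and $c$ to cross while being adjacent within the triple. Concretely: since each pair among $\{a,b,c\}$ crosses exactly once in a temporal clique, the three crossings among them are a reduced word, so the intermediate relative orders form a path $abc \to \cdots \to cba$ of length $3$ in the "weak order" on $S_3$; the two such paths are $abc\to bac\to bca\to cba$ and $abc\to acb\to cab\to cba$. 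In the first, the last swap is of $c$ and $b$ — wait, one must be careful: the last crossing is always the one realizing the final transposition, and in $bca\to cba$ the swapped adjacent pair is $b,c$, not $a,c$; in $cab\to cba$ the swapped pair is $a,b$. So in \emph{no} reduced word on three elements is the last crossing the pair $ac$. Hence the configurations "edge $ac$ has label $3$" are likewise impossible, ruling out the two right-hand patterns. (An even cleaner phrasing: in any reduced word for the longest element of $S_3$ on the alphabet $\{s_1,s_2\}$, namely $s_1s_2s_1$ or $s_2s_1s_2$, neither the first nor the last letter corresponds to swapping the two extreme values $a$ and $c$ — only the middle letter can.)

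I would then observe that the argument did not use anything beyond these three agents, so it applies to the induced sub-schedule on $\{a,b,c\}$; since an occurrence of any of the four patterns in $\mathcal{G}_{\pi,x}$ with the induced ordering would force exactly such a forbidden configuration on some triple $a<_\pi b<_\pi c$, we conclude that $\pi$ excludes all four patterns. The only genuine obstacle is bookkeeping: making precise that a crossing of two of the three agents always swaps two that are consecutive \emph{within the triple} (this is immediate, since they are consecutive in the whole ordering $\pi_{t-1}$, hence a fortiori within any sub-ordering), and that in a temporal clique each of the three pairs crosses exactly once so that the relevant sequence of crossings among $\{a,b,c\}$ really is a reduced decomposition of the reversal $cba$ — this is precisely the link with reduced decompositions recalled in the preliminaries. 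Everything else is a finite case check on $S_3$, which I would present compactly as the enumeration of the two maximal chains above rather than by brute force over all six patterns.
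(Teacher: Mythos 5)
Your proof is correct and follows essentially the same route as the paper: both arguments reduce to a single triple $a,b,c$ with $b$ initially in the middle and show that the crossing $ac$ must be the temporally middle one of the three crossings within the triple, which contradicts each of the four patterns (where the label-$2$ edge is incident to $b$). The only difference is presentational --- the paper establishes this via a short ``trapping'' argument in its Lemma~\ref{lem:triangle} ($b$, once between $a$ and $c$, must leave before $a$ and $c$ can cross), whereas you enumerate the two maximal chains in the weak order on $S_3$; both are valid and rest on the same observation that a crossing only ever swaps agents that are consecutive within the triple.
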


This proposition is a direct consequence of the following lemma.

\begin{lemma}\label{lem:triangle}
    Consider three nodes $a,b,c\in V$ such that time $\lambda(ac)$ happens  in-between $\lambda(ab)$ and $\lambda(bc)$, i.e. $\lambda(ac)$ is the median of $\set{\lambda(ab),\lambda(ac), \lambda(bc)}$, then $b$ is in-between $a$ and $c$ in the initial ordering, i.e. either $a,b,c$ or $c,b,a$ is a sub-sequence of $\pi$.
\end{lemma}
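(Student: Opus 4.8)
The plan is to examine the arrangement of the three agents immediately before they produce the edge $ac$. Both the hypothesis (that $\lambda(ac)$ is the median of $\{\lambda(ab),\lambda(ac),\lambda(bc)\}$) and the conclusion are symmetric under exchanging $a$ and $c$, so I will assume without loss of generality that $a$ precedes $c$ in $\pi$; it then suffices to prove that $a,b,c$ occur in exactly this order in $\pi$.

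Write $t^* = \lambda(ac)$. Since $\mathcal{G}$ is a temporal clique, each pair of agents crosses exactly once; and by the definition of a 1D-mobility schedule, a crossing always swaps two consecutive agents, so a crossing leaves unchanged the relative order of every pair of agents other than the one that crosses (the omitted agent of an adjacent block stays on the same side of that block). Hence the relative order of $a$ and $c$ changes only at time $t^*$, so $a$ still precedes $c$ in $\pi_{t^*-1}$; and, as $a$ and $c$ are about to cross at time $t^*$, they are consecutive in $\pi_{t^*-1}$. Consequently $b$ does not lie between $a$ and $c$ in $\pi_{t^*-1}$, i.e. the order of $a,b,c$ in $\pi_{t^*-1}$ is either $(b,a,c)$ or $(a,c,b)$.

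Next I use the median hypothesis: exactly one of $\lambda(ab),\lambda(bc)$ is smaller than $t^*$. If $\lambda(ab)<t^*$, then $a$ and $b$ have already crossed (once) by time $t^*-1$, so their order in $\pi$ is the reverse of their order in $\pi_{t^*-1}$, whereas $b$ and $c$ have not yet crossed and hence keep the same order in $\pi$ as in $\pi_{t^*-1}$; the roles of the two pairs are swapped if instead $\lambda(bc)<t^*$. Pairing these two sub-cases with the two possibilities for the order of $a,b,c$ in $\pi_{t^*-1}$ gives four cases, in each of which the order of $a,b,c$ in $\pi$ is reconstructed by flipping the appropriate pairs. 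Two of the four cases reconstruct an order that would make $a$ precede $c$ and simultaneously $c$ precede $a$, which is impossible; the other two both reconstruct the order $(a,b,c)$. This proves the claim; without the normalisation $a\prec c$ it reads ``$a,b,c$ or $c,b,a$ is a sub-sequence of $\pi$''. (Proposition~\ref{prop:1DtoPat} then follows by applying the lemma inside the triangle of each of the four patterns of Figure~\ref{fig:patterns}, with the median-time edge playing the role of $ac$: the lemma would force the remaining node to lie between the endpoints of that edge, contradicting the pattern's node order.)

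Conceptually this is just the statement that, restricted to $\{a,b,c\}$, the sequence of relative orders traces a walk on the $S_3$ permutohedron (a hexagon) using each of the three elementary swaps $ab,ac,bc$ exactly once, with the $ac$-swap forced to be the middle step by the median condition; any such walk begins and ends at $(a,b,c)$ or $(c,b,a)$. I expect the only point requiring care to be the third paragraph — keeping straight, in each of the four cases, which relative orders are reversed between $\pi$ and $\pi_{t^*-1}$ — since everything else is immediate from the definitions.
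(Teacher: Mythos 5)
Your proof is correct, but it is organised differently from the paper's. The paper argues by contradiction from the \emph{initial} ordering: assuming $b$ is not between $a$ and $c$ in $\pi$, it tracks the first crossing of $b$ with one of $a,c$ (which puts $b$ in between them), notes that $a$ and $c$ cannot cross while $b$ separates them, and concludes that both $\lambda(ab)$ and $\lambda(bc)$ must precede $\lambda(ac)$ — contradicting the median hypothesis. You instead anchor the argument at the \emph{moment of the $ac$ crossing}: since $a$ and $c$ must be consecutive in $\pi_{t^*-1}$, $b$ sits on one side of them there, and you recover the initial order by flipping exactly those pairs that have already crossed, using the median hypothesis to determine which single pair among $ab,bc$ that is; two of the four resulting cases are self-contradictory and the other two give $(a,b,c)$. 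Both proofs rest on the same two facts (a crossing requires adjacency, and a pair's relative order flips exactly when that pair crosses), but yours is a direct reconstruction rather than a contradiction, at the cost of an explicit four-way case check where the paper gets away with one inequality; your version also makes fully explicit the "order flips only at the pair's own crossing" observation, which the paper only states later, inside the proof of Lemma~\ref{lem: mapping}. Your closing remark about deducing Proposition~\ref{prop:1DtoPat} matches the paper's.
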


\begin{proof}
  For the sake of contradiction, suppose that $b$ is not in-between $a$ and $c$ initially. At time $\min\set{\lambda(ab),\lambda(bc)}$, it first crosses $a$ or $c$, and it is now in-between $a$ and $c$. 
  As $a$ and $c$ cannot cross each other as long as $b$ lies in-between them, the other crossing of $b$ with $a$ or $c$ should thus occur before $\lambda(ac)$, implying $\max\set{\lambda(ab),\lambda(bc)}<\lambda(ac)$, in contradiction with the hypothesis. 
  The only possible initial orderings of these three nodes are thus $a,b,c$ and $c,b,a$.
\end{proof}

One can easily check that the above Lemma \ref{lem:triangle} forbids the four patterns of Figure~\ref{fig:patterns}. Indeed, in each pattern, the edge of label 2 that appears in-between the two others in time, is adjacent to the middle node while it should link the leftmost and rightmost nodes. Proposition~\ref{prop:1DtoPat} thus follows.

\smallskip
We now show that forbidding these four patterns fully characterizes 1D-mobility temporal cliques.
For that purpose, we construct a mapping from ordered temporal cliques in $\mathcal{C}$ to the set $R(w_n)$ of all reduced decompositions of $w_n$ where
 $w_n=n,n-1,\ldots,1$ is the permutation in $\mathcal{S}_n$  with  longest length.

\begin{lemma}{\label{lem: mapping}}
Any temporal clique $\mathcal{G} \in  \mathcal{C}$ having an ordering $\pi$ excluding the four patterns of Figure~\ref{fig:patterns}, can be associated  to a reduced decomposition $f(\mathcal{G},\pi)\in R(w_n)$ of $w_n$.  Moreover, the representation $\mathcal{R}(\mathcal{G})$ of $\mathcal{G}$ corresponds to a 1D-mobility schedule starting from $\pi$ and $\mathcal{G}$ is a 1D-mobility temporal clique.
\end{lemma}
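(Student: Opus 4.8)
The plan is to prove Lemma~\ref{lem: mapping} by explicitly building a 1D-mobility schedule from the representation $\mathcal{R}(\mathcal{G})$ and verifying that the exclusion of the four patterns is exactly what is needed for this construction to be well-defined. Concretely, write $\mathcal{R}(\mathcal{G}) = (u_1,v_1,1),\ldots,(u_M,v_M,M)$ with $M = \binom{n}{2}$, and process the edges in this order. We maintain a tentative ordering $\pi_t$ of the nodes, starting from $\pi_0 = \pi$. When edge $u_{t}v_{t}$ is processed, we must show that $u_t$ and $v_t$ are adjacent in $\pi_{t-1}$, and then set $\pi_t$ to be $\pi_{t-1}$ with $u_t,v_t$ swapped (together with the adjacent transposition $\tau_t$ recording their positions). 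If every step succeeds, then by construction $f(\mathcal{G},\pi) := \tau_1,\ldots,\tau_M$ is a sequence of adjacent transpositions, and since $M = \binom{n}{2} = l(w_n)$ is the maximal possible length, a sequence of $M$ adjacent transpositions that never swaps a non-inverted adjacent pair (which is automatic, as we argue below) must be a reduced decomposition of a permutation of length $M$, hence of $w_n$ itself. The last sentence of the lemma then follows immediately: $\mathcal{R}(\mathcal{G})$ is literally the schedule $x$ with $x_t = u_tv_t$ from $\pi$, so $\mathcal{G} = \mathcal{G}_{\pi,x}$ and $\mathcal{G}$ is a 1D-mobility temporal clique.

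The heart of the argument is an invariant maintained through the processing: for any two nodes $a,b$, once the edge $ab$ has been processed (say at time $s = \lambda(ab)$), the relative order of $a$ and $b$ in $\pi_t$ for $t \ge s$ is the reverse of their order in $\pi_0$, and before that it equals their order in $\pi_0$. This is automatic for the pair that is swapped at step $t$, but we need it to be consistent — i.e., that when we claim to swap $u_t$ and $v_t$, they really are adjacent, and that no edge is processed twice (the latter is given since $\mathcal{G}$ is a simple clique, each pair appearing exactly once). Adjacency is where the forbidden patterns enter. Suppose, for contradiction, that at step $t$ the nodes $u_t = a$ and $v_t = c$ are not adjacent in $\pi_{t-1}$; then some node $b$ lies strictly between them in $\pi_{t-1}$. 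I would analyze the relative orders of the pairs $\{a,b\}$, $\{b,c\}$, $\{a,c\}$ in $\pi_{t-1}$ versus $\pi_0$: the position of $b$ between $a$ and $c$ in $\pi_{t-1}$ combined with where $b$ sits relative to $a$ and $c$ in $\pi_0$ forces one of $ab$, $bc$ to have been already processed and the other not, and pins down the time order of $\lambda(ab), \lambda(bc), \lambda(ac)$ relative to $t$. Matching these cases against the initial order $a,b,c$ (or $c,b,a$, or $b$ outside) produces in each case one of the four configurations of Figure~\ref{fig:patterns} on the induced triple $\{a,b,c\}$, contradicting that $\pi$ excludes all four patterns. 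So adjacency always holds and the construction runs to completion.

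I expect the main obstacle to be the bookkeeping in this case analysis: there are several subcases depending on whether $b$ is initially left of $a$, between $a$ and $c$, or right of $c$, and symmetrically for the swapped order $c,b,a$, and one has to carefully translate "edge $e$ has been processed by time $t-1$" into an inequality $\lambda(e) < t$ and then into the pattern's forbidden edge-ordering on three labeled edges. A clean way to organize it is to observe that a node $b$ lies between $a$ and $c$ in $\pi_{t-1}$ iff exactly one of the pairs $\{a,b\},\{b,c\}$ has been "flipped" relative to $\pi_0$ while the pair $\{a,c\}$ has not yet been flipped — and then each of the two flip-patterns ($ab$ before, $bc$ after, or vice versa), crossed with the two admissible initial orders of the triple from Lemma~\ref{lem:triangle} applied to already-known triples, yields exactly one forbidden pattern. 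It is worth double-checking that the four patterns of Figure~\ref{fig:patterns} are precisely the four cases obtained this way (two choices of which of $ab,bc$ comes first, times two choices of whether $a$ or $c$ is leftmost), which is a reassuring sanity check that the characterization is tight. Once adjacency is secured, the remaining claims — that $f(\mathcal{G},\pi) \in R(w_n)$ and that $\mathcal{G}$ is a 1D-mobility clique — are immediate from the length count and the definitions.
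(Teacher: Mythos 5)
Your proposal matches the paper's proof essentially step for step: the same incremental construction of the orderings $\pi_t$ from $\mathcal{R}(\mathcal{G})$, the same key invariant that the relative order of a pair flips exactly when its edge is processed (valid because all earlier transpositions are adjacent), the same case analysis on the initial position of the blocking node $b$ producing the four patterns, and the same length count $M=\binom{n}{2}=l(w_n)$ to conclude that the sequence of adjacent transpositions is a reduced decomposition of $w_n$. One minor slip in your closing ``clean organization'': a node $b$ initially between $a$ and $c$ remains between them precisely when \emph{neither} of $ab,bc$ has been flipped (zero flips, not exactly one) --- this is the paper's Case 1, which your earlier three-way case split does handle correctly.
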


\begin{proof}
Recall that, up to isomorphism, we can assume that $\mathcal{G}$ has lifetime $T=n(n-1)/2$ and that exactly one edge appears at each time $t\in [T]$.
Consider the corresponding representation $\mathcal{R}(\mathcal{G})=(u_1,v_1,1),(u_2,v_2,2)$, $\ldots,(u_T,v_T,T)$.
Starting from the initial ordering $\pi_0=\pi$, we construct a sequence $\pi_1,\ldots,\pi_T$ of orderings
corresponding to what we believe to be the positions of the agents at each time step if we read the edges in $\mathcal{R}(\mathcal{G})$ as a 1D-mobility schedule. More precisely, for each $t\in T$,  $\pi_t$ is defined from $\pi_{t-1}$ as follows. As the edge $u_tv_t$ should correspond to a crossing $x_t=u_tv_t$, it can be seen as the transposition exchanging $u_t$ and $v_t$ so that we define $\pi_t=x_t\pi_{t-1}$. Equivalently, we set $\tau_t=(i,j)$ where $i$ and $j$ respectively denote the indexes of $u_t$ and $v_t$ in $\pi_{t-1}$, i.e. $\pi_{t-1}(i)=u_t$ and $\pi_{t-1}(j)=v_t$. We then also have $\pi_t=\pi_{t-1}\tau_t$.

Our main goal is to prove that $f(\mathcal{G},\pi):=\tau_1,\ldots,\tau_T$ is the desired reduced decomposition of $w_n$.
For that, we need to prove that $u_t$ and $v_t$ are indeed adjacent in $\pi_{t-1}=\pi_0\tau_1\cdots\tau_{t-1}=x_{t-1}\cdots x_1\pi_0$. For the sake of contradiction, consider the first time $t$ when this fails to be. That is $\tau_1,\ldots,\tau_{t-1}$ are indeed adjacent transpositions, edge $uv$ appears at time $t$, i.e. $uv=u_tv_t$, and $u,v$ are not consecutive in $\pi_{t-1}$.
We assume without loss of generality that $u$ is before $v$ in $\pi_0$, i.e. $u,v$ is a sub-sequence of $\pi_0$.
We will mainly rely on the following observation:

Consider two nodes $x,y$ such that $x$ is before $y$ in $\pi_0$, then $x$ is before $y$ in $\pi_{t-1}$ if and only edge $xy$ appears at $t$ or later, i.e. $\lambda(xy)\ge t$.

The reason comes from the assumption that $\tau_1,\ldots,\tau_{t-1}$ are all adjacent transpositions: as long as only $x$ or $y$ is involved in such a transposition, their relative order cannot change.
The above observation thus implies in particular that $u$ is still before $v$ in $\pi_{t-1}$.
Now, as $u$ and $v$ are not consecutive in $\pi_{t-1}$, 
there must exist an element $w$ between elements $u$ and $v$ in $\pi_{t-1}$. We consider the two following cases:

Case 1. 
$w$ was already in-between $u$ and $v$ in $\pi_0$, that is $u,w,v$ is a sub-sequence of $\pi_0$.
As the relative order has not changed between these three nodes, we have $\lambda(uw)> t$ and $\lambda(wv)> t$ as their appearing time is distinct from   $t=\lambda(uv)$. This is in contradiction with the exclusion of the two patterns on the left of Figure~\ref{fig:patterns}.

Case 2. 
$w$ was not in-between $u$ and $v$ in $\pi_0$. Consider the case where $u,v,w$ is a sub-sequence of $\pi_0$. From the observation, we we deduce that $\lambda(vw)< t$ and $\lambda(uw)> t$, which contradicts the exclusion of the
bottom-right pattern of Figure~\ref{fig:patterns}.
The other case where $w,u,v$ is a sub-sequence of $\pi_0$ is symmetrical and similarly leads to a contradiction with the exclusion of the top-right pattern of Figure~\ref{fig:patterns}.

We get a contradiction in all cases and conclude that
$\tau_1,\ldots,\tau_T$ are all adjacent transpositions. This implies that $x$ is indeed a valid 1D-mobility schedule from $\pi$. As $x$ is defined according to the ordering of edges in $\mathcal{R}(\mathcal{G})$ by appearing time,  $\mathcal{G}$ is obviously isomorphic to $\mathcal{G}_{\pi,x}$.

Additionally, as each pair of elements occurs exactly in one transposition, the permutation $\tau_1\cdots\tau_T$ has length $T=n(n-1)/2$ and must equal $w_n$. The decomposition $f(\mathcal{G},\pi)=\tau_1,\ldots,\tau_T$ is thus indeed a reduced decomposition of $w_n$.
\end{proof}
\vspace{0.3cm}

Theorem~\ref{th:patterns} is a direct consequence of Proposition~\ref{prop:1DtoPat} and Lemma~\ref{lem: mapping}.

\subsection{Recognition algorithm}

 We now propose an Algorithm to  decide if a clique belongs to $\mathcal{C}$, and provide an ordering of the nodes that avoids the patterns if it is the case.   
The main idea of the algorithm relies on Lemma~\ref{lem:triangle} which allows us to detect within a triangle which node should be in-between the two others in any ordering avoiding the patterns. This is simply done by comparing the three times at which the edges of the triangle appear.
We assume the input to be given as a representation of the temporal graph, i.e. the list $\mathcal{R}(\mathcal{{G}})$ of the edges in the form $(u,v,t)$, sorted according to their time labels. 

The algorithm first tries to compute a feasible ordering of the vertices using the function $\vertexsorting(\mathcal{R}(\mathcal{{G}}))$. To do that, the subroutine $\extremalnodes(V)$ provides the two nodes that should be at the extremities of some subset $V$ of nodes. It outputs these two nodes by excluding repeatedly a node out of some triplets again and again until only two nodes are left, using Lemma~\ref{lem:triangle} to identify which one is in the middle. 

\begin{algorithm}

\Function{$\vertexsorting(\mathcal{R}(\mathcal{G}))$}{
\Input{The representation $\mathcal{R}(\mathcal{G})$ of a temporal clique $\mathcal{G}=((V,E),\lambda)$.}
\Output{A vertex ordering $\pi$.}
Compute a matrix representing $\lambda$ and the set $V$ of vertices from $\mathcal{R}(\mathcal{G})$.\;
$X:=\extremalnodes(V)$\;
Let $a$ and $z$ be the two vertices in $X$.\\ 
Define a partial ordering $\pi$ with $a$ as single element.\label{lin:5}\\
$V:=V\setminus\set{a}$\;
\While{$V\not=\set{z}$\label{lin:7}}{
  $X:=\extremalnodes(V)$\label{lin:8}\;
  \eIf{$z\notin X$}{
    \return{$\bottom$}\quad\Comment{Failure.}\label{line:failure}
  }{ 
  Let $b$ be the node in $X\setminus\set{z}$.\\
  Append $b$ to $\pi$.\\
  $V:=V\setminus\set{b}$\label{lin:14}\;
  }
}
Append $z$ to $\pi$.\\
\Return{$\pi$}
}

\smallskip

\Function{$\extremalnodes(V)$}{
  Let $W$ be a copy of $V$.\\
  Pick any pair $u,v$ of nodes in $W$.\label{lin:19}\;
  $W:=W\setminus\set{u,v}$\;
  Set $X:=\set{u,v}$. \Comment{Tentative pair of extremal nodes.}
  \While{$W\not=\emptyset$\label{lin:22}}{
    Remove a node $w$ from $W$.\label{lin:23}\;
    $X:=\triangleextremities(X\cup\set{w})$\label{lin:24}\;
  }
  \Return{$X$}
}
\smallskip
\Function{$\triangleextremities(T)$}{
  Let $u,v,w$ be the three nodes in $T$.\\
  Retrieve the three time labels $\lambda(uv), \lambda(vw),\lambda(uw)$.\\
  \Return{the pair consisting of the edge $e\in\set{uv, vw, uw}$ with median time label.}
}

\end{algorithm}

We thus finally get the two extremities $a$ and $z$ of $V$ in the initial ordering. Without loss of generality, we keep $a$ as the first element. We then repeat $n-2$ times the following procedure: add back $z$ to the remaining nodes, and compute the two extremities among them. If $z$ is one of the extremities, remove the other one and append it to the partial ordering constructed so far. Otherwise, return $\bottom$ as a contradiction has been found ($z$ must always be an extremity if we have a 1D-mobility temporal clique).

We then need  to check that each temporal edge indeed exchanges two consecutive nodes in the 1D-mobility model. To do that, we represent the sequence of permutations starting from $\pi$ the initial ordering, and check that each switch, according to the edges sorted by time label, corresponds to an exchange between two consecutive nodes. If at some point, we try to switch non consecutive elements, we return $False$, otherwise at the end we have checked that we had a 1D-mobility temporal clique and return $True$.

\begin{algorithm}
\caption{The recognition algorithm.}\label{alg:vertex-sorting}

\Input{A temporal clique $\mathcal{G}=((V,E),\lambda)$ given by its representation $\mathcal{R}(\mathcal{G})$.}
\Output{True if $\mathcal{G}$ excludes the four forbidden patterns of Figure~\ref{fig:patterns}, False otherwise.}

   $\pi:=\vertexsorting(\mathcal{R}(\mathcal{G}))$\;
   \lIf{$\pi=\bottom$}{\return{False}}
   Compute the index $\sigma(v)$ of each vertex $v$ in $\pi$.\\
    \For{each triplet $(u,v,t)$ in $\mathcal{R}(\mathcal{G})$ scanned in increasing order\label{lin:begfor}}{
        \If{$|\sigma(u)-\sigma(v)|=1$
        }{ \Comment{$u$ and $v$ are consecutive in $\pi$.}
            Exchange $u$ and $v$ in $\pi$ and update $\sigma$.\\
        }
        \Else {\Return False}\label{lin:endfor}
    }
    \Return True

\end{algorithm}

\begin{theorem}\label{prop:vertex-sorting}
    Algorithm~\ref{alg:vertex-sorting} correctly recognizes 1D-mobility temporal cliques in linear time and outputs an initial vertex ordering from which its representation is a 1D-mobility schedule.
\end{theorem}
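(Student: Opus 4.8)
The plan is to use Theorem~\ref{th:patterns} to reduce the correctness claim to two ingredients: that the function $\vertexsorting$ produces the (essentially unique) candidate ordering whenever $\mathcal{G}$ is a 1D-mobility temporal clique, and that the final \textbf{for} loop of Algorithm~\ref{alg:vertex-sorting} faithfully tests whether a given ordering $\pi$ turns the representation $\mathcal{R}(\mathcal{G})$ into a valid 1D-mobility schedule. Combined with Lemma~\ref{lem: mapping} (an ordering excluding the four patterns exists iff the representation read from it is a valid schedule, iff $\mathcal{G}$ is a 1D-mobility temporal clique), these yield both directions of correctness; a running-time count then finishes the proof.

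First I would analyse $\extremalnodes$. The subroutine $\triangleextremities$ returns the endpoints of the edge with median time label among the three edges of a triple; by Lemma~\ref{lem:triangle}, if $\mathcal{G}$ is a 1D-mobility temporal clique with some ordering $\pi^*$ excluding the four patterns, this median-label edge joins exactly the two extreme nodes of the triple in $\pi^*$, the third one lying strictly between them. I would then prove by induction on the number of absorbed nodes the invariant that, after $\extremalnodes$ has processed a subset $S\subseteq V$, the tentative pair $X$ consists of the first and last elements of $S$ according to $\pi^*$: when a new node $w$ is added, the three possible positions of $w$ relative to the current extremities (left of both, strictly between, right of both) each make $\triangleextremities(X\cup\set{w})$ output the correct updated pair. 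Hence $\extremalnodes(V)$ returns the two true extremities $\set{a,z}$ of $\pi^*$. Now $\vertexsorting$ fixes $a$ first and, at each iteration, recomputes the extremities of the current node set, appends the extreme node other than $z$, and deletes it; by the same invariant this peels $\pi^*$ off one node at a time, so the loop never reaches the failure line $\bottom$ and outputs exactly $\pi^*$ (or its reverse, depending on which of $a,z$ was chosen first). Since the set of four forbidden patterns of Figure~\ref{fig:patterns} is closed under reversing the node order, the reverse of $\pi^*$ also excludes them, so in all cases $\vertexsorting$ returns an ordering excluding the four patterns.

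Next I would handle the final \textbf{for} loop. Starting from $\pi$ and the index array $\sigma$, it maintains the permutation obtained by reading $\mathcal{R}(\mathcal{G})$ as a sequence of crossings; the test $|\sigma(u)-\sigma(v)|=1$ checks precisely that the current crossing swaps two agents that are consecutive at that moment, which is exactly the defining requirement of a 1D-mobility schedule, and the update keeps $\sigma$ in sync. So the loop returns True iff $\mathcal{R}(\mathcal{G})$, started from $\pi$, is a valid 1D-mobility schedule. \emph{Completeness:} if $\mathcal{G}$ is a 1D-mobility temporal clique, Theorem~\ref{th:patterns} gives an ordering excluding the four patterns, $\vertexsorting$ returns such an ordering $\pi$ by the previous paragraph, and Lemma~\ref{lem: mapping} shows reading $\mathcal{R}(\mathcal{G})$ from $\pi$ is a valid schedule, so the loop succeeds and the algorithm returns True. \emph{Soundness:} if the algorithm returns True then $\vertexsorting$ returned some $\pi\ne\bottom$ and the loop succeeded, hence $\mathcal{R}(\mathcal{G})$ is a valid 1D-mobility schedule $x$ from $\pi$ and $\mathcal{G}$ is isomorphic via the identity to $\mathcal{G}_{\pi,x}$ (here we use that a temporal clique is incremental, single valued, and complete, so $T=n(n-1)/2$ and each pair crosses exactly once). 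In particular the returned $\pi$ is an initial ordering from which $\mathcal{R}(\mathcal{G})$ is a 1D-mobility schedule.

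Finally, for the running time: the input consists of $n(n-1)/2=\Theta(n^2)$ triplets, so ``linear'' means $O(n^2)$. Building the label matrix is $O(n^2)$; $\triangleextremities$ is $O(1)$ via matrix look-ups, so $\extremalnodes$ on a set of size $m$ costs $O(m)$; $\vertexsorting$ invokes $\extremalnodes$ on sets of sizes $n,n-1,\ldots,2$, summing to $O(n^2)$, plus $O(1)$ bookkeeping per appended node; and the final \textbf{for} loop runs $T=n(n-1)/2$ iterations of $O(1)$ work (an array swap and an index update). Hence the total running time is $O(n^2)$. I expect the main obstacle to be the induction underpinning $\extremalnodes$/$\vertexsorting$, and in particular keeping track of the fact that Lemma~\ref{lem:triangle} is only a one-way implication (median-label edge $\Rightarrow$ middle node): on inputs that are not 1D-mobility cliques, $\vertexsorting$ may return an arbitrary ordering, so soundness must come entirely from the explicit simulation in the \textbf{for} loop rather than from any guarantee about $\vertexsorting$ itself.
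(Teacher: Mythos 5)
Your proposal is correct and follows essentially the same route as the paper: analyse $\extremalnodes$ via Lemma~\ref{lem:triangle} and an induction showing it returns the true extremities of the restriction of a pattern-excluding ordering, conclude that $\vertexsorting$ outputs that ordering or its reverse, let the final \textbf{for} loop certify the schedule, and count $O(n^2)$ time against the $n(n-1)/2$ input triplets. If anything, you are more explicit than the paper about the soundness direction --- the paper's proof opens by assuming a pattern-excluding ordering $\pi'$ exists and leaves the ``returns False on non-instances'' case to the implicit guarantee of the verification loop, which is exactly the point you flag at the end.
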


\begin{proof}
Let $\pi'$ be an ordering that excludes the four forbidden patterns from $\mathcal{G}$.

    For any $W\subseteq V$, the extreme vertices of $W$ returned by $\extremalnodes(W)$ are uniquely defined by $W$. In other words, these extreme vertices do not depend on the order of nodes $w$ picked from $W$ in Line~\ref{lin:23}, and two nodes, $u$ and $v$, picked from $W$ in Line~\ref{lin:19}.
    The reason is that, in the loop from Line~\ref{lin:22} to Line~\ref{lin:24}, the node in-between the two others in the restriction of $\pi'$ to the three nodes in $X\cup\{w\}$ will be eliminated by Lemma~\ref{lem:triangle}. The function $\extremalnodes(W)$ thus returns the two extreme vertices of the restriction of $\pi'$ to $W$. 
        
    We now prove that the ordering $\pi$ returned by function $\vertexsorting(\mathcal{R}(\mathcal{G}))$ is either $\pi'$ itself or the reverse of $\pi'$. Furthermore, the ordering $\pi$ is uniquely defined, depending on the choice of the first element $a$ in Line~\ref{lin:5} (among the the two elements in $X$). 
    At the beginning of the $i$-th iteration of the While loop from Line~\ref{lin:7} to Line~\ref{lin:14}, the ordering $\pi$ includes $i$ elements, denoted by $x_1, x_2,\ldots, x_{i}$ where $x_1 = a$. 
    We prove by induction on $i$ that function $\vertexsorting$ eventually returns an ordering $\pi$ that is either $\pi'$ or the reverse of $\pi'$. For $i=1$, $x_1 = a$ is an extreme vertex in $\pi'$.
    Assume that  $x_{j}$ and $x_{j+1}$ appear consecutively in the ordering $\pi'$, i.e., $|(\pi')^{-1}(x_j)-(\pi')^{-1}(x_{j+1})| =1$, for all $j\leq i$. 
    In Line~\ref{lin:8}, set $X$ includes two extreme vertices among all vertices that do not appear in $\pi$, according to $\pi'$. The two vertices in set $X$ are thus uniquely defined, and include $z$ and an other vertex, call it $b$, which is in-between $a$ and $z$ in $\pi'$. 
    Additionally, since $x_{j}$ and $x_{j+1}$ appear consecutively in the ordering $\pi'$, for all $j\leq i$, it implies that $b$ is adjacent to $x_i$ in $\pi'$. 
    At the end of this iteration, $x_{i+1}=b$ is appended to the tuple $\pi$. 

    Now we analyse the running time. The subroutine $\extremalnodes(W)$ clearly runs in $O(|W|)$ time as $\triangleextremities(T)$ takes constant time. $\vertexsorting(\mathcal{R}(\mathcal{G}))$ thus uses $O(n^2)$ time which is linear since a temporal clique has $\frac{n(n-1)}{2}$ edges (this can be checked beforehand). Finally, checking that $\mathcal{R}(\mathcal{G})$ is indeed a 1D-mobility schedule from $\pi$ in the for loop at Lines~\ref{lin:begfor}-\ref{lin:endfor} clearly takes constant time per triplet.
\end{proof}

\subsection{Counting}

We now estimate the number $\card{\mathcal{C}}$ of 1D-mobility temporal cliques with $n$ nodes through the following result. Recall that $R(w_n)$ denotes the set  of all reduced decompositions of $w_n$ where
 $w_n=n,n-1,\ldots,1$.

\begin{proposition}\label{prop:counting}
    The number of 1D-mobility temporal cliques with $n$ nodes is 
    $$
    \card{\mathcal{C}}=\frac{\card{R(w_n)}}{2}
    = \frac{1}{2} \frac{{n \choose 2}!}{1^{n-1}3^{n-2} \cdots (2n-3)^1}.
    $$
\end{proposition}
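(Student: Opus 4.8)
The plan is to combine the bijective content of Lemma~\ref{lem: mapping} with Stanley's formula for $\card{R(w_n)}$. First I would argue that the map $(\mathcal{G},\pi)\mapsto f(\mathcal{G},\pi)$ of Lemma~\ref{lem: mapping} is in fact a bijection between the set of \emph{ordered} temporal cliques in $\mathcal{C}$ and $R(w_n)$. Surjectivity is easy: given a reduced decomposition $\tau_1,\ldots,\tau_T$ of $w_n$, running it as a 1D-mobility schedule from the identity ordering produces an incremental temporal clique whose representation records each adjacent transposition as an edge, and by Proposition~\ref{prop:1DtoPat} the resulting ordered clique lies in $\mathcal{C}$ and maps back to $\tau_1,\ldots,\tau_T$. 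Injectivity follows because $f(\mathcal{G},\pi)$ determines, step by step, which pair of nodes is swapped at each time $t$ and hence the entire representation $\mathcal{R}(\mathcal{G})$ up to renaming the vertices by their $\pi$-positions; two ordered cliques with the same image are therefore isomorphic via the order-preserving bijection, i.e. equal as ordered temporal cliques.

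Next I would pass from ordered cliques to unordered ones. By Theorem~\ref{th:patterns}, every temporal clique in $\mathcal{C}$ admits at least one pattern-excluding ordering; the point is that it admits \emph{exactly two}, namely an ordering $\pi'$ and its reverse. This is precisely what the proof of Theorem~\ref{prop:vertex-sorting} established: $\extremalnodes$ recovers the (unordered) pair of extremities canonically from $\mathcal{R}(\mathcal{G})$, and then $\vertexsorting$ reconstructs the whole ordering once the first extremity is chosen, so the only freedom is which of the two extremities is placed first. Reversing a valid ordering keeps all four patterns of Figure~\ref{fig:patterns} excluded (each is left–right symmetric as an unordered configuration, and reversing the vertex order reverses the roles of leftmost and rightmost while the time labels are unchanged), so both orderings are genuinely valid, and no third one exists. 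Hence the forgetful map from ordered cliques in $\mathcal{C}$ to unordered ones is exactly two-to-one, giving $\card{\mathcal{C}} = \card{R(w_n)}/2$.

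Finally I would invoke the classical enumeration of reduced decompositions of the longest element. By Stanley's theorem~\cite{stanley1984number}, $\card{R(w_n)}$ equals the number of standard Young tableaux of the staircase shape $\delta_{n-1}=(n-1,n-2,\ldots,1)$, which by the hook length formula~\cite{frame1954hook,bandlow2008elementary} equals
$$
\card{R(w_n)} = \frac{\binom{n}{2}!}{\prod_{\text{cells}} h} = \frac{\binom{n}{2}!}{1^{n-1}\,3^{n-2}\cdots (2n-3)^1},
$$
since the multiset of hook lengths of the staircase shape consists of the odd numbers $1,3,\ldots,2n-3$ with $1$ occurring $n-1$ times, $3$ occurring $n-2$ times, and in general $2k-1$ occurring $n-k$ times. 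Dividing by $2$ yields the stated formula.

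The main obstacle is the two-to-one claim: one must be careful that \emph{every} pattern-excluding ordering is obtained by the canonical reconstruction of Theorem~\ref{prop:vertex-sorting} (so that there are at most two), and simultaneously that the reverse ordering really does exclude all four patterns (so that there are at least two, i.e. $n\ge 2$ and the two extremities are distinct). Both directions are already implicit in the earlier proofs, but assembling them into a clean "exactly two" statement — and checking the symmetry of the four patterns under order reversal — is where the real work lies; the hook-length computation is routine.
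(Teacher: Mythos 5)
Your proposal is correct and follows essentially the same route as the paper: a bijection between pattern-excluding \emph{ordered} cliques and $R(w_n)$, the observation that each clique admits exactly two valid orderings (an ordering and its reverse), and Stanley's hook-length evaluation of $\card{R(w_n)}$ for the staircase shape. The only difference is that you spell out the injectivity of $f$ and the order-reversal symmetry of the four patterns a bit more explicitly than the paper, which establishes the bijection by exhibiting a right inverse $g$ and derives the two-orderings claim from Lemma~\ref{lem:triangle} and Theorem~\ref{prop:vertex-sorting}.
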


Let us define $\mathcal{C'}\subseteq \mathcal{C}\times \mathcal{S}_n$ as the set of ordered temporal cliques $(\mathcal{G},\pi)$ such that $\mathcal{G}=((V,E),\lambda)\in \mathcal{C}$ and $\pi$ is an ordering of $V$ such that $\mathcal{R}(\mathcal{G})$ provides a 1D-mobility schedule from $\pi$.
Proposition~\ref{prop:counting} derives from two following lemmas and known results~\cite{stanley1984number} counting the number $\card{R(w_n)}$ of reduced decompositions of $w_n$ according to the hook length formula~\cite{frame1954hook}.

\begin{lemma}
    The mapping $f:\mathcal{C'}\to R(w_n)$ defined in Lemma~\ref{lem: mapping} is a bijection.
\end{lemma}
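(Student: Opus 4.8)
## Proof plan for the bijection $f : \mathcal{C}' \to R(w_n)$

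The plan is to show that $f$ is both well-defined onto $R(w_n)$, injective, and surjective, reusing almost everything from Lemma~\ref{lem: mapping}. First, recall that Lemma~\ref{lem: mapping} already establishes that for each $(\mathcal{G},\pi)\in\mathcal{C}'$, the sequence $f(\mathcal{G},\pi)=\tau_1,\ldots,\tau_T$ obtained by reading the edges of $\mathcal{R}(\mathcal{G})$ in time order and recording the transposition $\tau_t=(i,i+1)$ of the positions of the crossing agents is a genuine reduced decomposition of $w_n$; moreover it is defined purely in terms of $\mathcal{R}(\mathcal{G})$ and $\pi$. So $f$ is a well-defined map into $R(w_n)$, and it only remains to exhibit a two-sided inverse.

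For surjectivity, I would describe the inverse map $g : R(w_n) \to \mathcal{C}'$ directly. Given a reduced decomposition $\tau_1,\ldots,\tau_T$ of $w_n$ with $T=\binom{n}{2}$, set $\pi_0$ to be the identity ordering $1,2,\ldots,n$ and define $\pi_t = \pi_{t-1}\tau_t$; writing $\tau_t=(i_t,i_t+1)$, the agents exchanged at step $t$ are $u_t=\pi_{t-1}(i_t)$ and $v_t=\pi_{t-1}(i_t+1)$. This is exactly a 1D-mobility schedule $x$ from $\pi_0$ (each $\tau_t$ is an adjacent transposition by definition of a reduced decomposition), and since $\tau_1\cdots\tau_T=w_n$ has length $T=\binom n2$, each pair of agents is exchanged exactly once, so $\mathcal{G}_{\pi_0,x}$ is a temporal clique. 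By Proposition~\ref{prop:1DtoPat} the ordering $\pi_0$ excludes the four patterns, hence $\mathcal{G}_{\pi_0,x}\in\mathcal{C}$ and $(\mathcal{G}_{\pi_0,x},\pi_0)\in\mathcal{C}'$. One then checks that $f(g(\tau_1,\ldots,\tau_T)) = \tau_1,\ldots,\tau_T$: reading the edges of $\mathcal{R}(\mathcal{G}_{\pi_0,x})$ in time order reproduces the crossings $u_tv_t$ in order $t=1,\ldots,T$, and the positions of $u_t,v_t$ in $\pi_{t-1}$ are $i_t,i_t+1$ by construction, so $f$ recovers precisely $\tau_t=(i_t,i_t+1)$.

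For injectivity, suppose $f(\mathcal{G},\pi)=f(\mathcal{G}',\pi')=\tau_1,\ldots,\tau_T$. Applying $g$ to this common decomposition yields a single ordered temporal clique; I claim it is isomorphic (via the identity, after renaming nodes through $\pi$ resp. $\pi'$) to both $(\mathcal{G},\pi)$ and $(\mathcal{G}',\pi')$. Indeed, by the construction in Lemma~\ref{lem: mapping}, once $\pi=\pi_0$ is fixed, the permutations $\pi_t=\pi_0\tau_1\cdots\tau_t$ are forced, and the $t$-th edge of $\mathcal{R}(\mathcal{G})$ must be $\{\pi_{t-1}(i_t),\pi_{t-1}(i_t+1)\}$ where $\tau_t=(i_t,i_t+1)$; so $\mathcal{R}(\mathcal{G})$ — hence $\mathcal{G}$, since $\mathcal{G}$ is incremental and determined up to isomorphism by its representation — is completely determined by the pair $(\pi,(\tau_t)_t)$. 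The subtlety is that an element of $\mathcal{C}'$ is a pair (temporal clique, ordering) taken up to isomorphism of ordered temporal graphs, so I must be careful to phrase this as: the relabeling sending $\pi_0(k)\mapsto k$ is an order-preserving isomorphism from $(\mathcal{G},\pi)$ to $g(\tau_1,\ldots,\tau_T)$, and likewise for $(\mathcal{G}',\pi')$, whence $(\mathcal{G},\pi)$ and $(\mathcal{G}',\pi')$ are isomorphic as ordered temporal cliques, i.e.\ equal in $\mathcal{C}'$.

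The main obstacle is purely bookkeeping rather than mathematical: making precise the level at which $\mathcal{C}'$ lives (labeled vs.\ isomorphism classes) so that "$f$ is injective" is the correct statement, and verifying the two round-trip identities $f\circ g=\mathrm{id}$ and $g\circ f=\mathrm{id}$ carefully given that $f$ discards the vertex labels while remembering only positions. Everything substantive — that reading $\mathcal{R}(\mathcal{G})$ as a schedule works, that the resulting decomposition is reduced, that the patterns are excluded — is already supplied by Lemma~\ref{lem: mapping} and Proposition~\ref{prop:1DtoPat}, so the proof should be short.
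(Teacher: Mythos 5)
Your proposal is correct and follows essentially the same route as the paper: it constructs the same inverse map $g$ sending a reduced decomposition of $w_n$ to the 1D-mobility schedule started from the identity ordering, checks that the result is an ordered temporal clique in $\mathcal{C}'$, and verifies $f\circ g=\mathrm{id}$. You are in fact slightly more thorough than the paper, which only establishes $f\circ g=\mathrm{id}$ (surjectivity) and leaves implicit the injectivity of $f$ --- your observation that $(\mathcal{G},\pi)$ is determined, up to isomorphism, by $\pi$ together with the transposition sequence.
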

\begin{proof}
  We simply define a mapping $g:R(w_n)\to\mathcal{C'}$ such that $f\circ g$ is the identity. Consider a reduced decomposition $\rho=\tau_1,\ldots,\tau_T$ of $w_n$ where each $\tau_t$ is an adjacent transposition. As $w_n$ has $n(n-1)/2$ inversions, its length is indeed $T=n(n-1)/2$. Let $\pi_0=1,\ldots,n$ be the identity permutation and define $\pi_t=\pi_0\tau_1\cdots\tau_t=\tau_1\cdots\tau_t$ for each $t\in [T]$. Let $x_t=uv$ be the pair of elements in position $i$ and $i+1$ in $\pi_{t-1}$ where $i$ is the index such that $\tau_t=(i,i+1)$. We then have $\pi_t=x_t \pi_{t-1}$ for all $t\in [T]$, and $x$ is indeed a 1D-mobility schedule from $\pi_0$ that leads to $w_n$.  As each pair of agents $u,v\in \pi_0$ with $u<v$ appears as sub-sequence $u,v$ in $\pi_0$ and sub-sequence $v,u$ in $\pi_T=w_n$, they must cross at some time $t$ such that $x_t=uv$. As the total number of crossings is $T=n(n-1)/2$, this can happen only once, and $\mathcal{G}_{\pi_0,x}$ is a temporal clique. We can thus define $g(\rho)=(\mathcal{G}_{\pi_0,x},\pi_0)$ 
  which satisfies
  $f(g(\rho))=\rho$ as we have $x_t\cdots x_1\pi_0=\pi_t=\pi_0\tau_1\cdots\tau_t$ for all $t\in [T]$. 
\end{proof}

\begin{lemma}
    Any 1D-mobility temporal clique $\mathcal{G}$ admits exactly two orderings $\pi$ and its reversal such that $\mathcal{R}(\mathcal{G})$ provides a 1D-mobility schedule from $\pi$.
\end{lemma}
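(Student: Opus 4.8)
The plan is to prove the statement in two halves: that the two reverse orderings $\pi,\bar\pi$ are both admissible, and that nothing else is, the latter coming from the fact that the edge labels rigidly determine the order up to reversal. Throughout, call an ordering \emph{admissible} if $\mathcal{R}(\mathcal{G})$ provides a 1D-mobility schedule from it. We may assume $n=\card{V}\ge 3$; for $n\le 2$ the claim is immediate, since the unique edge (resp.\ the empty schedule) swaps two consecutive agents regardless of which of the two orderings we start from.

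For the existence of two admissible reverse orderings: since $\mathcal{G}$ is a 1D-mobility temporal clique, by definition there is an ordering $\pi$ and a 1D-mobility schedule $x$ from $\pi$ with $\mathcal{G}=\mathcal{G}_{\pi,x}$, and reading the edges of $\mathcal{R}(\mathcal{G})$ in increasing time order recovers exactly the sequence of crossings $x$, so $\pi$ is admissible. I would then invoke the geometric symmetry of the corridor: relabelling position $i$ as position $n{+}1{-}i$ sends $\pi$ to its reversal $\bar\pi$ while keeping every crossing between two consecutive agents, hence $\bar\pi$ is admissible as well. This step is routine.

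The substantive half is uniqueness. The key observation is that Lemma~\ref{lem:triangle} pins down, for \emph{every} triple $\{a,b,c\}$ and \emph{every} admissible ordering, which vertex sits strictly between the other two: the three labels $\lambda(ab),\lambda(bc),\lambda(ac)$ are pairwise distinct (the clique is incremental), and the middle vertex must be the endpoint opposite the edge carrying the median label. So any admissible ordering is consistent with one and the same ``betweenness'' function $m$, determined by $\mathcal{G}$ alone. I would then use the standard fact that a linear order on at least three elements consistent with a prescribed middle-of-each-triple function is unique up to reversal: writing the order as $s_1,\dots,s_n$, one checks $m(\{s_i,s_j,s_k\})=s_j$ for $i<j<k$, so the two extremes $s_1,s_n$ are exactly the vertices that are never a value of $m$ (hence determined as a set), and once we fix which extreme is first, say $a$, the relative order of any two interior vertices $v,w$ is read off from whether $m(\{a,v,w\})=v$, which determines the whole order. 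Thus there are at most two admissible orderings, and by the first half there are at least two reverse ones, so exactly two.

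The main point to get right — more a careful bookkeeping step than a genuine obstacle — is this last reconstruction argument: making precise that total betweenness data realised by \emph{some} linear order is realised by \emph{exactly} two. The extreme-element/``sort against a fixed endpoint'' argument sketched above handles it cleanly, and it is in fact exactly the principle underlying the correctness of $\vertexsorting$ in Algorithm~\ref{alg:vertex-sorting}, which I could cross-reference for intuition.
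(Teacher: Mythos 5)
Your proof is correct and follows essentially the same route as the paper's: both halves rest on Lemma~\ref{lem:triangle} pinning down the middle vertex of every triple, from which uniqueness up to reversal follows, with existence of the ordering and its reverse being routine. You actually spell out the reconstruction step (extremes are the vertices never in the middle; interior vertices are sorted against a fixed extreme) that the paper leaves implicit, which is a welcome addition rather than a deviation.
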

\begin{proof}
    Let $\pi$ be the ordering provided by Algorithm~\ref{alg:vertex-sorting} on a given 1D-mobility clique. By Theorem~\ref{prop:vertex-sorting}, we know that $\pi$ is an ordering corresponding to this clique. We can notice that $rev(\pi)$, the reversed order of $\pi$, is also an ordering corresponding to that clique.
     
    Note that, by Lemma \ref{lem:triangle}, for any three nodes, $u$, $v$, $w$, of $V$, with edge label $\lambda(uw)<\lambda(uv)<\lambda(wv)$, $w$ has to be in-between $u$ and $v$ in all orderings that excludes the forbidden patterns from $\mathcal{G}$. This implies that no other ordering than $\pi$ and $rev(\pi)$ can correspond to this clique.

    It implies that any 1D-mobility temporal clique admits exactly two orderings such that $\mathcal{R}(\mathcal{G})$ provides a 1D-mobility schedule from $\pi$.
\end{proof}

\subsection{Temporal spanner}

In this subsection, we show that any 1D-mobility temporal clique has a spanner of $\mathcal{G}$ of size $(2n-3)$. Moreover, this spanner has diameter 3, and provides a new structure for (sub)spanners compared to the ones introduced in~\cite{casteigts2021temporal}, see Figure~\ref{fig:spanner2} below.

\begin{theorem}
Given a 1D-mobility temporal clique $\mathcal{G}$, let $\mathcal{H}$ be the temporal subgraph of $\mathcal{G}$ consisting in the $(2n-3)$ edges that are adjacent with either $v_1$ or vertex $v_n$. $\mathcal{H}$ is a temporal spanner of $\mathcal{G}$.
\end{theorem}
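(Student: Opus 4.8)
The plan is to exploit the fact, established in Lemma~\ref{lem: mapping}, that the representation $\mathcal{R}(\mathcal{G})$ corresponds to a 1D-mobility schedule $x$ from some initial ordering $\pi$, and that $v_1,\ldots,v_n$ are the nodes in the order $\pi$. So I may freely think of the agents as starting in the order $v_1 < v_2 < \cdots < v_n$ along the line and ending in the reverse order, each pair crossing exactly once. The key positional fact I would use repeatedly: if $u$ is initially before $w$ (i.e.\ $\pi^{-1}(u) < \pi^{-1}(w)$), then $u$ stays before $w$ in $\pi_{t}$ for all $t < \lambda(uw)$ and is after $w$ for all $t \ge \lambda(uw)$ — this is exactly the observation used inside the proof of Lemma~\ref{lem: mapping}. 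In particular, the times at which a fixed agent $v_i$ crosses the others, read left to right, are exactly $\lambda(v_iv_{i-1}), \lambda(v_iv_{i-2}),\ldots$ for the agents initially to its left (crossed in that order) and $\lambda(v_iv_{i+1}),\ldots$ for those to its right; more to the point, $v_1$ crosses everyone: its $(k-1)$-th crossing is with the agent currently just to its right, and the first time $v_1$ crosses $v_n$ it has already crossed $v_2,\ldots,v_{n-1}$, so $\lambda(v_1v_n) = \max_j \lambda(v_1v_j)$, and symmetrically $\lambda(v_1v_n) = \max_j \lambda(v_nv_j)$.

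Next I would show that $\mathcal{H}$ (the star edges at $v_1$ together with the star edges at $v_n$, which is $2(n-1)-1 = 2n-3$ edges since $v_1v_n$ is counted once) is temporally connected. I would argue that every vertex $v_i$ can reach $v_1$ and every vertex can be reached from $v_1$ within $\mathcal{H}$; connectivity between arbitrary $v_i$ and $v_j$ then follows by concatenating a temporal path $v_i \leadsto v_1$ with $v_1 \leadsto v_j$, provided the arrival time at $v_1$ is strictly before the departure time of the second path — which is why I will route the second half through $v_n$ when necessary, giving the claimed diameter $3$. Concretely: for the "into $v_1$" direction, I claim that for every $i$, the edge $v_1v_i$ is a temporal path of length $1$, so trivially $v_i$ reaches $v_1$ at time $\lambda(v_1v_i)$. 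For the "out of $v_1$" direction I would use the structure of $v_1$'s crossings: since $v_1$ moves monotonically from the leftmost to the rightmost position, it crosses $v_2, v_3, \ldots, v_n$ in that very order, so $\lambda(v_1v_2) < \lambda(v_1v_3) < \cdots < \lambda(v_1v_n)$. Hence the path $v_i \to v_1 \to v_n$ is a valid temporal path for every $i < n$ (because $\lambda(v_1v_i) < \lambda(v_1v_n)$), reaching $v_n$; and symmetrically $v_n$ crosses $v_{n-1},\ldots,v_1$ in that order so $\lambda(v_nv_{n-1}) < \cdots < \lambda(v_nv_1)$, giving paths $v_n \to v_1$ trivially and $v_j \to v_n \to v_1$-type paths. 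Combining: to get from $v_i$ to $v_j$ with $i<j$, I would check which of the two "hubs" to pass through by comparing the relevant labels, using the monotonicity of the crossing orders at $v_1$ and at $v_n$ to guarantee a strictly increasing time sequence $v_i \to (\text{hub}) \to (\text{other hub}?) \to v_j$ of length at most $3$.

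The main obstacle I anticipate is the case analysis needed to verify that a length-$2$ or length-$3$ temporal path between two non-hub vertices $v_i,v_j$ always exists with strictly increasing labels. The delicate point is that $v_i \to v_1 \to v_j$ requires $\lambda(v_1v_i) < \lambda(v_1v_j)$, which holds iff $i < j$ by the monotonic crossing order at $v_1$; so for $i<j$ this one-hub route through $v_1$ already works, and the truly interesting direction is $i > j$, where I must instead route $v_i \to v_n \to v_1 \to v_j$ — here I need $\lambda(v_nv_i) < \lambda(v_nv_1)$ (true, since $v_n$ crosses $v_i$ before $v_1$ as $i>1$), then $\lambda(v_nv_1) = \lambda(v_1v_n) > \lambda(v_1v_j)$ (true by the "max" observation above, since $j<n$). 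I would organize the proof as: (1) recall $\mathcal{R}(\mathcal{G})$ is a schedule from $\pi = v_1,\ldots,v_n$; (2) prove the monotone crossing orders at $v_1$ and $v_n$ and the two "$\lambda(v_1v_n)$ is the max" facts; (3) exhibit, for every ordered pair, an explicit temporal path of length $\le 3$ inside $\mathcal{H}$ using these facts; (4) count the edges to get $2n-3$ and note the diameter bound. Steps (2)–(3) are the heart; everything else is bookkeeping.
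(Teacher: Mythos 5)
There is a genuine gap: the structural facts on which your step (2) rests are false. It is not true that $v_1$ crosses $v_2,v_3,\ldots,v_n$ in that order, because other agents may reshuffle among themselves before $v_1$ meets anyone. Concretely, for $n=3$ with initial order $v_1,v_2,v_3$, the schedule $v_2v_3$ (time $1$), $v_1v_3$ (time $2$), $v_1v_2$ (time $3$) is a valid 1D-mobility clique with $\lambda(v_1v_3)=2<\lambda(v_1v_2)=3$, so the monotonicity $\lambda(v_1v_2)<\cdots<\lambda(v_1v_n)$ fails, and so does your claim that $\lambda(v_1v_n)=\max_j\lambda(v_1v_j)$. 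The latter fails in general because at the moment $v_1$ and $v_n$ cross, the remaining agents are split into those already left of $v_1$ and those still right of $v_n$; the second group has \emph{not yet} crossed $v_1$, so $v_1$ meets them after $v_n$. Moreover, even granting your claims, your three-hop route $v_i\to v_n\to v_1\to v_j$ is internally inconsistent: a temporal path needs $\lambda(v_iv_n)<\lambda(v_nv_1)<\lambda(v_1v_j)$, but you verify $\lambda(v_nv_1)>\lambda(v_1v_j)$, which is exactly the wrong direction, and under your ``max'' claim that last inequality would \emph{always} hold, so this route would never be a valid temporal path.

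The fix is the paper's argument: let $t=\lambda(v_1v_n)$ and partition the other agents into $V_L$ (left of $v_1$ at time $t$) and $V_R$ (right of $v_n$ at time $t$). Then every $v_l\in V_L$ satisfies $\lambda(v_1v_l)<t<\lambda(v_nv_l)$ and every $v_r\in V_R$ satisfies $\lambda(v_nv_r)<t<\lambda(v_1v_r)$, which immediately validates the four routes $v_l\to v_1\to v_r$, $v_r\to v_n\to v_l$, $v_l\to v_1\to v_n\to v_{l'}$, and $v_r\to v_n\to v_1\to v_{r'}$ (note the hubs are visited in the opposite order for $V_L$-pairs versus $V_R$-pairs, which is precisely the point your single fixed route misses). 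Your overall strategy of hub-routing through $v_1$ and $v_n$ with a diameter-$3$ bound is the right one; it is the time analysis that must be anchored to the single crossing time $\lambda(v_1v_n)$ rather than to a (nonexistent) global monotone crossing order.
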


\begin{proof}
Let us consider the edge $(v_1,v_n,t)$ that corresponds to the crossing of the initial two extremities $v_1$ and $v_n$ on the line. When this happens, we have two sets: $V_L$ (resp. $V_R$) corresponding to the agents being at the left of $v_1$ (resp. right of $v_n$) at time $t$. All edges of $\cal H$ appearing before $t$, have the form $v_1v_l$ with $v_l\in V_L$ or $v_rv_n$ with $v_r\in V_R$. All edges of $\cal H$ appearing after $t$ have the form $v_1v_r$ with $v_r\in V_R$ or$v_lv_n$ with $v_l\in V_L$ (see Figure \ref{fig:spanner2}).

\begin{figure}[htp]
\begin{center}
\begin{tikzpicture}
   \tikzstyle{circlenode}=[draw,circle,minimum size=70pt,inner sep=0pt]
    \tikzstyle{whitenode}=[draw,circle,fill=white,minimum size=12pt,inner sep=0pt]
    \tikzstyle{bignode}=[draw,circle,fill=white,minimum size=40pt,inner sep=0pt]
 
\draw (1.5,0) node[bignode] (l)   [] {$V_L$};
\draw (8.5,0) node[bignode] (r)   [] {$V_R$};
\draw (5,0.75) node[whitenode] (v1)   [] {$v_1$};
\draw (5,-0.75) node[whitenode] (vn)   [] {$v_n$};
\draw (l) edge [->] node [above] {$t'<t$} (v1);
\draw (l) edge [<-] node [below] {$t'>t$} (vn);
\draw (r) edge [->] node [below] {$t'<t$} (vn);
\draw (r) edge [<-] node [above] {$t'>t$} (v1);
\draw (v1) edge [<->] node [right] {$t$} (vn);

\end{tikzpicture}
\end{center}
\caption{Relative order of labels of edges between the sets $V_L$, $V_R$ and the two vertices $v_1$ and $v_n$, showing how temporal paths between $V_L$ and $V_R$ can be formed.}\label{fig:spanner2}
\end{figure}
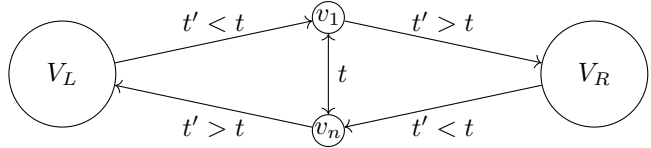

As we have all edges connected to $v_1$ and $v_n$, we only need to prove that we keep connectivity between $V_L$ and $V_R$, but also within those sets:
\begin{itemize}
    \item To connect a node $v_l\in V_L$ to $v_r\in V_R$, we use the path $(v_l,v_1,v_r)$.
    \item To connect a node $v_r\in V_R$ to $v_l\in V_L$, we use the path $(v_r,v_n,v_l)$.
    \item To connect a node $v_l\in V_L$ to $v_l'\in V_L$, we use the path $(v_l,v_1,v_n,v_l')$.
    \item To connect a node $v_r\in V_R$ to $v_r'\in V_R$, we use the path $(v_r,v_n,v_1,v_r')$.
\end{itemize}

\end{proof}

\subsection{Agents moving with constant speed}
Recall that any 1D-mobility schedule corresponds to a set of agents moving continuously on an oriented horizontal line with (possibly) varying speeds.
When studying temporal cliques generated by 1D-mobility schedules,  it is natural to ask whether they can be realised using the following simpler, constant speed model:
\begin{itemize}
    \item  agent $i$ starts in position $p_i \in \mathbb R$ with an initial ordering $p_1 < p_2 < \dots < p_n$
    \item agent $i$ moves towards infinity with speed $s_i$ that satisfies $s_1 > s_2 > \dots > s_n$.
\end{itemize}
These assumptions guarantee that each agent can meet at most once, and that after a certain time, the order of agents reverses thus we obtain a temporal clique. (By slightly perturbing the starting positions, we can assume that at most one pair of agents meet at any moment.)
In this section, we show that this simplification comes with a loss of generality.
\begin{theorem}
There exists a 1D-mobility schedule of $7$ agents that cannot be realized as a temporal graph associated to mobile agents moving with constant speeds.
\end{theorem}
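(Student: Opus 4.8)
The plan is to produce one explicit $1$D-mobility schedule on $7$ agents and prove, by a short manipulation of inequalities between meeting times, that no choice of positions and constant speeds reproduces the prescribed order of meetings.

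\emph{Step 1 (the witness).} I would first fix a concrete schedule $x=x_1,\dots,x_{21}$ on agents $1<\dots<7$, i.e.\ a sequence of $21$ adjacent crossings (equivalently a reduced decomposition of $w_7$), and record that by Lemma~\ref{lem: mapping} the associated temporal clique is a genuine $1$D-mobility clique; hence it suffices to show that this particular clique has no constant-speed realization. Choosing the right $x$ is the crux, discussed in Step 4.

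\emph{Step 2 (constraints coming from constant speeds).} Suppose for contradiction that there are positions $p_1<\dots<p_7$ and speeds $s_1>\dots>s_7$ with agent $i$ following $t\mapsto p_i+s_it$ and realizing $x$. Setting $\rho_l=p_{l+1}-p_l>0$ and $\sigma_l=s_l-s_{l+1}>0$ for $l\in\{1,\dots,6\}$, the meeting time of agents $i<j$ is
\[
 t_{ij}=\frac{p_j-p_i}{s_i-s_j}=\frac{\rho_i+\dots+\rho_{j-1}}{\sigma_i+\dots+\sigma_{j-1}},
\]
a positively weighted mediant of the ``atomic'' meeting times $\rho_l/\sigma_l$. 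From this I would extract two facts, both strict in the generic case: (i) for $i<j<k$, the time $t_{ik}$ lies strictly between $t_{ij}$ and $t_{jk}$, i.e.\ $t_{ik}$ is the median of $\{t_{ij},t_{ik},t_{jk}\}$ — this is already forced for every $1$D-mobility schedule (cf.\ Lemma~\ref{lem:triangle}), so it will only serve for bookkeeping; and (ii) for $i<j<k<l$, writing $P_{ij}:=(\sigma_i+\dots+\sigma_{j-1},\ \rho_i+\dots+\rho_{j-1})$, the vector identity $P_{ik}+P_{jl}=P_{il}+P_{jk}$ shows that the single quantity obtained by adding the two $\rho$-coordinates over adding the two $\sigma$-coordinates is simultaneously a mediant of $\{t_{ik},t_{jl}\}$ and of $\{t_{il},t_{jk}\}$; hence the open intervals with endpoints $\{t_{ik},t_{jl}\}$ and $\{t_{il},t_{jk}\}$ must overlap.

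\emph{Step 3 (the contradiction).} Orienting every relevant comparison by the order that $x$ dictates together with (i), I would then apply (ii) to a small collection of quadruples of $\{1,\dots,7\}$ and splice the resulting overlap conditions into a cyclic chain of strict inequalities $t_{e_1}<t_{e_2}<\dots<t_{e_m}<t_{e_1}$, which is absurd. Concretely, the extra agents $5,6,7$ are present precisely to force, through their crossings, that for some quadruple $\{a,b,c,d\}$ both of $t_{ac},t_{bd}$ precede both of $t_{ad},t_{bc}$ — a configuration ruled out by (ii) — even though a direct case check shows that such an ordering is impossible in any $1$D-mobility schedule restricted to four agents.

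\emph{Step 4 (the obstacle).} The difficulty is entirely in Step 1: the witness must be a legal $1$D-mobility schedule, hence automatically compatible with the median constraints (i), yet incompatible with the quadruple constraints (ii). Since (ii) is vacuous on $\le 4$ agents, the witness cannot be small, and one is forced to chain several overlapping quadruples, which is what raises the count to $7$; I would search for such an $x$ among short reduced words of $w_7$. (In spirit this resembles non-stretchability of pseudoline arrangements, but the obstruction here constrains the order of the crossing times rather than just the combinatorial arrangement, so it appears with fewer agents.) Once $x$ is pinned down, the inequality chase of Step 3 is mechanical, and one can additionally double-check the claimed infeasibility with a solver applied to the (bilinear) system of sign conditions in the $\rho_l,\sigma_l$.
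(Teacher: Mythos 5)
Your Step 2 is sound as far as it goes: the meeting time $t_{ij}=\frac{p_j-p_i}{s_i-s_j}$ is indeed a weighted mediant of the consecutive-pair times, fact (i) is correct (and, as you note, adds nothing beyond Lemma~\ref{lem:triangle}), and the identity $P_{ik}+P_{jl}=P_{il}+P_{jk}$ does force the two intervals in (ii) to intersect. The problem is that this is where the proposal stops being a proof. First, the witness schedule --- the very object whose existence the theorem asserts --- is never exhibited; Steps 1 and 4 explicitly defer its construction to a future search. Second, and more seriously, the obstruction you plan to use is almost certainly too weak. As you yourself observe, the interval-overlap condition (ii) can never be violated on a single quadruple of a 1D-mobility schedule (the median constraints already force the two intervals to interlace), so (ii) on its own excludes nothing; your Step 3 sentence claiming that agents $5,6,7$ ``force'' a quadruple with both of $t_{ac},t_{bd}$ before both of $t_{ad},t_{bc}$ contradicts this, since the restriction of a 1D-mobility schedule to four agents is again a 1D-mobility schedule. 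You would therefore need the mediant identities of several overlapping quadruples to combine into a cyclic chain of strict inequalities, but you neither exhibit such a chain nor argue that one exists, and it is doubtful that one does: passing from the exact identity to mere interval overlap discards the weights and keeps only a very coarse consequence.

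The paper's proof uses constraints of a genuinely different nature, which your framework does not capture. Writing $T_i(j,k)=\lambda(ik)-\lambda(ij)$ for the time agent $i$ spends between $j$ and $k$, it combines two kinds of inequalities: (1) purely combinatorial containments of time intervals read off the schedule (if the crossings $ij$, $i'j'$, $i'k'$, $ik$ occur in this order then $T_{i'}(j',k')<T_i(j,k)$), and (2) consequences of constant speeds via monotonicity of the gap between two fixed agents --- the gap between $f$ and $g$ shrinks until they cross, the gap between $c$ and $d$ grows after they cross --- so that a faster agent entering a smaller gap exits it sooner, since $T_i(j,k)$ equals the gap at time $\lambda(ij)$ divided by $s_i-s_k$. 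Chaining (1) and (2) over the explicit ten-crossing prefix of Figure~\ref{fig:speedpattern} yields $T_a(d,c)>T_e(f,g)>T_a(f,g)>T_b(d,c)$, while (2) also forces $T_a(d,c)<T_b(d,c)$, a contradiction. To complete your own route you would need either to import these gap-monotonicity inequalities into your system of sign conditions or to demonstrate concretely that the mediant identities alone are infeasible for some explicit reduced word of $w_7$; as written, neither the witness nor the contradiction exists on the page.
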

\begin{proof}
Denote the agents by $a,b,c,d,e,f,g$, starting in this order, and consider the following sequence of crossings:
\begin{align}\label{eq:speed-crossing-seq}
\begin{split}
abcdefg 
&~ \overset{x_{1}=ab}{\xrightarrow{\hspace*{.6cm}}} ~ bacdefg 
~\overset{x_{2}=cd}{\xrightarrow{\hspace*{.6cm}}}~ badcefg 
~\overset{x_{3}=ad}{\xrightarrow{\hspace*{.6cm}}}~ bdacefg
~\overset{x_{4}=ef}{\xrightarrow{\hspace*{.6cm}}} ~bdacfeg 
~\overset{x_{5}=eg}{\xrightarrow{\hspace*{.6cm}}} ~bdacfge\\
&~\overset{x_{6}=ac}{\xrightarrow{\hspace*{.6cm}}}~ bdcafge
~\overset{x_{7}=af}{\xrightarrow{\hspace*{.6cm}}} ~bdcfage
~\overset{x_{8}=bd}{\xrightarrow{\hspace*{.6cm}}} ~dbcfage
~\overset{x_{9}=bc}{\xrightarrow{\hspace*{.6cm}}} ~dcbfage
~\overset{x_{10}=ag}{\xrightarrow{\hspace*{.6cm}}}~ dcbfgae.
\end{split}
\end{align}
The corresponding temporal graph is illustrated in Figure~\ref{fig:speedpattern} with nodes ordered from left to right according to their initial ordering. 
\begin{figure}[htp]
\begin{center}
\begin{tikzpicture}
   \tikzstyle{circlenode}=[draw,circle,minimum size=70pt,inner sep=0pt]
    \tikzstyle{whitenode}=[draw,circle,fill=white,minimum size=12pt,inner sep=0pt]
 
\draw (0,0) node[whitenode] (a)   [] {a};
\draw (2,0) node[whitenode] (b)   [] {b};
\draw (4,0) node[whitenode] (c)   [] {c};
\draw (6,0) node[whitenode] (d)   [] {d};
\draw (8,0) node[whitenode] (e)   [] {e};
\draw (10,0) node[whitenode] (f)   [] {f};
\draw (12,0) node[whitenode] (g)   [] {g};
\draw (a) edge node [below] {\small 1} (b);
\draw (b) edge node [below] {\small 9} (c);
\draw (c) edge node [below] {\small 2} (d);
\draw (e) edge node [below] {\small 4} (f);
\draw (a) [bend left=45] edge node [below] {\small 10} (g);
\draw (a) [bend left=40] edge node [below] {\small 6} (c);
\draw (a) [bend left=45] edge node [below] {\small 3} (d);
\draw (e) [bend left=45] edge node [below] {\small 5} (g);
\draw (a) [bend right=30] edge node [below] {\small 7} (f);
\draw (b) [bend right=40] edge node [below] {\small 8} (d);
\end{tikzpicture}
    
\end{center}
\caption{Any 1D-mobility clique with the above prefix cannot be realized with agents moving at constant speed.}\label{fig:speedpattern}
\end{figure}

\noindent
It is easy to verify that Sequence~\eqref{eq:speed-crossing-seq} can be extended to a 1D-mobility temporal clique in the following way:
\begin{align*}
\begin{split}
dcbfgae 
&~ \overset{x_{11}=ae}{\xrightarrow{\hspace*{.6cm}}} ~ dcbfgea 
~\overset{x_{12}=bf,~ x_{13}=bg,~ x_{14}=be}{ \xrightarrow{\hspace*{3cm}}}~ dcfgeba 
~\overset{x_{15}=cf,~ x_{16}=cg,~ x_{17}=ce}{ \xrightarrow{\hspace*{3cm}}}~ dfgecba\\ 
&~\overset{x_{18}=df,~ x_{19}=dg,~ x_{20}=de}{ \xrightarrow{\hspace*{3cm}}}~ fgedcba
~\overset{x_{21}=fg}{ \xrightarrow{\hspace*{.6cm}}}~ gfedcba.
\end{split}
\end{align*}
\noindent Now assume that Sequence~\eqref{eq:speed-crossing-seq} can be obtained by agents moving with constant speeds $s_a > s_b> \dots > s_g$. 
The proof of impossibility relies on the following $4$ observations:
\begin{enumerate}[label=(\alph*)]
    \item The crossing subsequence $x_3 = ad, ~  x_4 = ef, ~  x_5 = eg, ~ x_6 = ac$ implies that the agent $a$ traverses the interval between  $d$ and $c$ \emph{slower} than agent $e$ traversing the interval between  $f$ and $g$. \label{obs:1}
    \item The crossing subsequence $x_7 = af, ~  x_8 = bd, ~  x_9 = bc, ~  x_{10} = ag$ implies that the agent $a$ traverses the interval between  $f$ and $g$ \emph{slower} than agent $b$ traversing the interval between  $d$ and $c$. \label{obs:2}
    \item Agents $f$ and $g$ do not cross for $t \in [0,10]$, thus in this time interval, their distance is decreasing as $t$ increases. This together with the facts that $s_a > s_e$ and that $a$ traverses the interval between $f$ and $g$ \emph{later} than $e$ does, imply that $e$ must traverse the interval between $f$ and $g$ \emph{slower} than $a$ does. \label{obs:3}
    \item Since $s_c > s_d$, the distance between agents $c$ and $d$ is increasing after they cross, that is, after time $t =2$. \label{obs:4}
\end{enumerate}
For agents $i,j,k$, let $T_i(j,k)$ denote the time  agent $i$ spends between $j,k$, that is, $T_i(j,k) = \lambda(ik) - \lambda(ij)$ (if $i$ is never located in-between $j$ and $k$, we set $T_i(j,k) = \infty$). Putting together Observations \ref{obs:1}-\ref{obs:3}, we get
\begin{equation*}
    T_a(d,c) ~\overset{\ref{obs:1}}{>}~ T_e(f,g) ~\overset{\ref{obs:3}}{>}~ T_a(f,g) ~\overset{\ref{obs:2}}{>}~ T_b(d,c)
\end{equation*}
However, $b$ traverses the interval between $d$ and $c$ \emph{strictly after} $a$ had traversed it and thus by Observation \ref{obs:4}, $b$ has to travel a larger distance between $d$ and $c$ than $a$ does. This, together with $s_a > s_b$ implies that $T_a(d,c) < T_b(d,c)$, a contradiction.
\end{proof}

We note that any 1D-mobility schedule on up to $4$ agents can be obtained with constant speed agents. The remaining cases of $5$ and $6$ agents are left as  open questions. We also conjecture that there are infinitely many critical patterns for 1D-mobility cliques that cannot be realized with constant-speed agents. On the other hand a characterization of 1D-mobility  schedules that can be realized with constant speed agents should be interesting.
\section{Mobility graph with at most one crossing}
\label{sec:atmost}

In this section, we consider the case where each pair of agents cross each other at most once. We provide a characterization with the addition of the forbidden patterns of Figure~\ref{fig:non_edges} which includes non-edges corresponding of the non-crossings.

\begin{proposition}\label{prop:Non_edges_Pat}
For any mobility schedule $x$ of $n$ agents where each pair of agents cross at most once producing an incremental single valued temporal graph $\mathcal{G}_x=((V,E),\lambda)$, the initial ordering $\pi$ of the agents excludes the patterns of Figures~\ref{fig:patterns} and~\ref{fig:non_edges}.
\end{proposition}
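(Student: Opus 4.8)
The plan is to argue exactly as in the proof of Lemma~\ref{lem:triangle} and Proposition~\ref{prop:1DtoPat}, but now tracking the extra information that comes from a pair of agents \emph{not} crossing. First, the patterns of Figure~\ref{fig:patterns} are excluded by literally the same argument as before: Lemma~\ref{lem:triangle} only used the fact that crossings happen between neighbouring agents and that a pair cannot cross while a third agent sits between them, which still holds in the at-most-once model; hence whenever $\lambda(ac)$ is the median of the three labels of a triangle $abc$, the agent $b$ must be between $a$ and $c$ initially. This rules out the four patterns of Figure~\ref{fig:patterns} verbatim, since in each of them the median-time edge is incident to the middle node.

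The new content is the patterns of Figure~\ref{fig:non_edges}, each of which features a non-edge, i.e.\ a pair of agents that never cross. The key observation is: \emph{if two agents $u,v$ never cross, then their relative order is the same at every time}, in particular $u$ is before $v$ in $\pi$ if and only if $u$ is before $v$ at the final time. So suppose, for contradiction, that one of the Figure~\ref{fig:non_edges} patterns occurs on three agents $a,b,c$ in this initial left-to-right order, where (say) $ac$ is the non-edge. Since $a$ and $c$ never cross, $a$ stays to the left of $c$ forever; but then any agent $b$ that crosses \emph{both} $a$ and $c$ must, at some intermediate time, be located between $a$ and $c$ — indeed, after its first crossing with one of $\{a,c\}$ it lies between them, and it can only escape by crossing the other one. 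Depending on which of $ab$, $bc$ occurs first (that is, on the ordering of the two edge labels imposed by the pattern), $b$ enters the interval $(a,c)$ and leaves it, or fails to, in a way that contradicts the prescribed order of the labels $\lambda(ab),\lambda(bc)$ relative to the requirement that $b$ be between $a$ and $c$ in the window $[\min\{\lambda(ab),\lambda(bc)\},\max\{\lambda(ab),\lambda(bc)\}]$ while $a,c$ never swap. One then checks each of the finitely many configurations in Figure~\ref{fig:non_edges} against this, and also the cases where the non-edge is $ab$ or $bc$ (which are handled symmetrically, using that the endpoints of a non-edge keep their order and that the third agent's two crossings force it through an interval that it cannot be in at the required time).

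I would organize the write-up as: (1) recall that Figure~\ref{fig:patterns} is excluded by Lemma~\ref{lem:triangle}, which applies unchanged; (2) prove the auxiliary fact that a non-crossing pair keeps a fixed relative order; (3) for each pattern of Figure~\ref{fig:non_edges}, combine (2) with the ``an agent crossing both endpoints of a fixed-order pair must pass between them'' principle to derive a contradiction with the pattern's prescribed edge-label order. The main obstacle is purely bookkeeping: making sure the case analysis over the patterns of Figure~\ref{fig:non_edges} is exhaustive and that the symmetric roles (which pair is the non-edge, which of the two present edges comes first) are all covered; the geometric/combinatorial core is just the observation that a non-edge freezes the relative order, which immediately constrains where the third agent can be when it does its two crossings.
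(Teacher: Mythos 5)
Your proposal follows essentially the same route as the paper: Lemma~\ref{lem:triangle} is reused verbatim for the patterns of Figure~\ref{fig:patterns}, and the patterns of Figure~\ref{fig:non_edges} are eliminated by the observation that a non-crossing pair keeps a fixed relative order together with a short case analysis over the patterns, which is exactly what the paper's proof does. One small imprecision: for the second-row patterns (non-edge $ac$) the middle agent $b$ starts \emph{inside} the interval between $a$ and $c$, so the contradiction is not that $b$ must ``pass between them'' but that crossing $a$ once leaves $b$ permanently left of $a$ while crossing $c$ once leaves it permanently right of $c$, which is incompatible with $a$ remaining left of $c$ --- the ingredients you list still suffice, but that case should be restated this way.
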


\begin{proof}
    Lemma~\ref{lem:triangle} gives us the proof for the patterns of Figure~\ref{fig:patterns}. About the patterns of Figure~\ref{fig:non_edges}, we have the following observations. Pick three nodes $a,b,c$ such that $a,b,c$ is a sub-sequence of $\pi$ and there exists two agents among them which do not cross each other.
    If $b$ and $c$ do not cross each other, and $a$ crosses $b$ and $c$, then $a$ crosses $b$ before crossing agent $c$ (top left pattern). If $b$ and $c$ do not cross each other, and $a$ does not cross $b$, then $a$ does not cross $c$ (bottom pattern). 
    If $a$ and $b$ do not cross, similarly, $\pi$ excludes the top right pattern and the bottom pattern.
    If $a$ and $c$ do not cross each other, then $b$ cannot cross both $a$ and $c$ (two patterns in the second row).
\end{proof}

\begin{figure}
    \begin{center}
\begin{tikzpicture}
   \tikzstyle{circlenode}=[draw,circle,minimum size=70pt,inner sep=0pt]
    \tikzstyle{whitenode}=[draw,circle,fill=white,minimum size=12pt,inner sep=0pt]
 
\draw (0,0) node[whitenode] (a7)   [] {a};
\draw (2,0) node[whitenode] (b7)   [] {b};
\draw (4,0) node[whitenode] (c7)   [] {c};
\draw (a7) edge node [below] {1} (b7);
\draw (b7) edge node [below] {2} (c7);
\draw (a7) [bend left=30, dotted] edge node {} (c7);

\draw (6,0) node[whitenode] (a6)   [] {a};
\draw (8,0) node[whitenode] (b6)   [] {b};
\draw (10,0) node[whitenode] (c6)   [] {c};
\draw (a6) edge  node [below] {2} (b6);
\draw (b6) edge node [below] {1} (c6);
\draw (a6) [bend left=30, dotted] edge node {} (c6);
 
\draw (0,1.2) node[whitenode] (a1)   [] {a};
\draw (2,1.2) node[whitenode] (b1)   [] {b};
\draw (4,1.2) node[whitenode] (c1)   [] {c};
\draw (a1) edge node [below] {2} (b1);
\draw (b1) edge [dotted] node {} (c1);
\draw (a1) [bend left=30] edge node [above] {1} (c1);

\draw (6,1.2) node[whitenode] (a2)   [] {a};
\draw (8,1.2) node[whitenode] (b2)   [] {b};
\draw (10,1.2) node[whitenode] (c2)   [] {c};
\draw (a2) edge [dotted] node  {} (b2);
\draw (b2) edge node [below] {2} (c2);
\draw (a2) [bend left=30] edge node [above] {1} (c2);

\draw (3,-1.2) node[whitenode] (a5)   [] {a};
\draw (5,-1.2) node[whitenode] (b5)   [] {b};
\draw (7,-1.2) node[whitenode] (c5)   [] {c};
\draw (a5) edge [dotted] node {} (b5);
\draw (b5) edge [dotted] node {} (c5);
\draw (a5) [bend left=30] edge node [above] {1} (c5);
\end{tikzpicture}
    
\end{center}
    \caption{Ordered forbidden patterns with forbidden edges in the 1D-mobility model when each pair of agents cross at most once. The ordering associated to each pattern is $a,b,c$.}
    \label{fig:non_edges}
\end{figure}
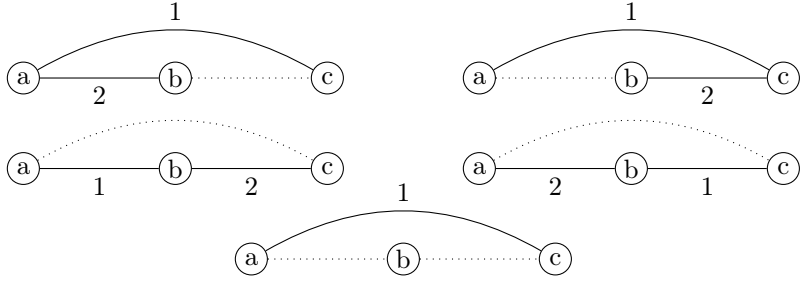

\begin{lemma}\label{lem:Non_edges_Pat}
Any incremental single valued temporal graph $\mathcal{G}$ having an ordering $\pi$ excluding the patterns of Figures~\ref{fig:patterns} and~\ref{fig:non_edges} can be associated  to a mobility schedule $x$ of $n$ agents where each pair of agents cross at most once. 
\end{lemma}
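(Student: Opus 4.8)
The plan is to follow the proof of Lemma~\ref{lem: mapping} almost verbatim, replacing its single appeal to the four patterns of Figure~\ref{fig:patterns} by an appeal to the enlarged list of Figures~\ref{fig:patterns} and~\ref{fig:non_edges}. As before, up to isomorphism we may assume $\cup_{e\in E}\lambda(e)=[T]$ and write $\mathcal{R}(\mathcal{G})=(u_1,v_1,1),\ldots,(u_T,v_T,T)$. Set $\pi_0=\pi$ and, reading $\mathcal{R}(\mathcal{G})$ as a candidate schedule of crossings, define $x_t$ to be the transposition of $u_t$ and $v_t$, let $\tau_t=(i,j)$ where $i,j$ are the positions of $u_t,v_t$ in $\pi_{t-1}$, and set $\pi_t=x_t\pi_{t-1}=\pi_{t-1}\tau_t$. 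It suffices to show that every $\tau_t$ is an \emph{adjacent} transposition: then $x=x_1,\ldots,x_T$ is a valid 1D-mobility schedule from $\pi$ whose associated temporal graph equals $\mathcal{G}$ up to the identity isomorphism, and since $\mathcal{G}$ is single valued each pair of agents is swapped at most once, i.e.\ crosses at most once.

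As in Lemma~\ref{lem: mapping}, the engine of the argument is the following observation, valid whenever $\tau_1,\ldots,\tau_{t-1}$ are all adjacent: for two nodes $x,y$ with $x$ before $y$ in $\pi_0$, the pair keeps its relative order in $\pi_{t-1}$ unless $xy\in E$ and $\lambda(xy)<t$, in which case it is reversed. Indeed only a transposition involving both $x$ and $y$ can swap them, and there is at most one such among $\tau_1,\ldots,\tau_{t-1}$, namely the one at time $\lambda(xy)$ if $xy\in E$. In particular a non-edge never reverses the relative order of its endpoints, which is the feature that lets the extra patterns of Figure~\ref{fig:non_edges} enter the argument.

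Now argue by contradiction: let $t$ be the smallest index for which $\tau_t$ fails to be adjacent, so $\tau_1,\ldots,\tau_{t-1}$ are adjacent, $uv:=u_tv_t$ appears at time $t$, and $u,v$ are not consecutive in $\pi_{t-1}$; assume w.l.o.g.\ that $u$ precedes $v$ in $\pi_0$. By the observation $u$ still precedes $v$ in $\pi_{t-1}$, so some node $w$ lies strictly between $u$ and $v$ in $\pi_{t-1}$. Split into three cases according to the position of $w$ relative to $u$ and $v$ in $\pi_0$. If $w$ lies between $u$ and $v$ in $\pi_0$, then $w$ still does in $\pi_{t-1}$, and the observation forces each of $uw,wv$ to be either absent or to carry a label larger than $t=\lambda(uv)$; feeding the possibilities into the ordered triple $(u,w,v)$ gives one of the two left patterns of Figure~\ref{fig:patterns} when both $uw,wv$ are present, and one of the three patterns of Figure~\ref{fig:non_edges} whose ``long'' edge $ac$ is present when at least one of $uw,wv$ is absent. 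If $u,v,w$ is a subsequence of $\pi_0$, then since $w$ moved between $u$ and $v$ the pair $vw$ must have reversed, so $vw\in E$ with $\lambda(vw)<t$, while $uw$ is absent or has label $>t$; the triple $(u,v,w)$ then realizes the bottom-right pattern of Figure~\ref{fig:patterns} (if $uw\in E$) or the relevant $ac$-absent pattern of Figure~\ref{fig:non_edges} (if $uw\notin E$). The remaining case, $w,u,v$ a subsequence of $\pi_0$, is symmetric and produces the top-right pattern of Figure~\ref{fig:patterns} or the mirror $ac$-absent pattern of Figure~\ref{fig:non_edges}. Each case contradicts the hypothesis that $\pi$ excludes all these patterns, so every $\tau_t$ is adjacent, and the lemma follows.

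The only real work beyond Lemma~\ref{lem: mapping} is the enlarged bookkeeping in these three cases: because edges may now be missing, the ``triangle'' on $\{u,v,w\}$ can be a genuine triangle or one of its degenerations, and one must verify that each degeneration matches one of the new patterns of Figure~\ref{fig:non_edges} with its prescribed ordering and forbidden edge. I expect this case split — rather than any conceptual difficulty — to be the main (and only) obstacle; once the observation above is stated so that it also constrains non-edges, no new idea is required.
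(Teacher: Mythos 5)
Your proposal is correct and follows essentially the same route as the paper's proof: read $\mathcal{R}(\mathcal{G})$ as a candidate schedule, take the first non-adjacent transposition, locate a node $w$ strictly between $u$ and $v$ in $\pi_{t-1}$, and split into three cases by the position of $w$ in $\pi_0$, matching each degenerate triangle to a pattern of Figure~\ref{fig:patterns} or~\ref{fig:non_edges}. The only difference is that you make explicit the observation that a non-edge never reverses the relative order of its endpoints, which the paper uses implicitly; this is a harmless (indeed welcome) addition, not a change of method.
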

 
\begin{proof}
We proceed similarly as for the proof of Lemma~\ref{lem: mapping}.
Starting from the initial ordering $\pi_0=\pi$, we define a sequence $\pi_1,\ldots,\pi_T$, of orderings, corresponding to what, we believe, to be the positions of the agents at each time step if we read the edges of $\mathcal{G}$ by increasing time label as a mobility schedule. More precisely, for each $t\in T$, $\pi_t$ is defined from $\pi_{t-1}$ as follows. Consider the edge  $uv=\lambda^{-1}(t)$ appearing at time $t$ in $\mathcal{G}$. We define $\tau_t$ as the transposition exchanging $u$ and $v$ in $w_{t-1}$. Equivalently, we set $\tau_t=(i,j)$ where $i$ and $j$ respectively denote the indexes of $u$ and $v$ in $\pi_{t-1}$, i.e. $\pi_{t-1}(i)=u$ and $\pi_{t-1}(j)=v$. We then set $\pi_t=\pi_{t-1}\tau_t$.

We need to prove that $u_t$ and $v_t$ are indeed adjacent in $\pi_{t-1}$.
Consider the first time $t$ when this fails to be. That is $\tau_1, \ldots, \tau_{t-1}$ are adjacent transpositions, edge $uv$ appears at time $t$, i.e. $uv = u_tv_t$, and $u,v$ are not consecutive in $\pi_{t-1}$. It implies that there exists another node $w$ such that, in $\pi_{t-1}$, the relative ordering amongs the three vertices is $u_t, w, v_t$. We consider the temporal subgraph, $\mathcal{H}$, induced by the three vertices $u_t, w, v_t$. 
We get to a contradiction by proving that any order on those nodes in $\pi_0$ will imply that $\mathcal{H}$ is a forbidden pattern.

Case 1. $w$ was already in-between $u_t$ and $v_t$ in $\pi_0$. It implies that ($u_tw \notin E$ or $\lambda(u_tw)>t$) and ($v_tw \notin E$ or $\lambda(v_tw)>t$). This is forbidden by the two left patterns of Figure~\ref{fig:patterns}, the two top patterns and the bottom pattern of Figure~\ref{fig:non_edges}.

Case 2. $w$ was initially before $u_t$ and $v_t$ in $\pi_0$. It implies that $\lambda(u_tw)<t$, and $v_tw \notin E$ or $\lambda(v_tw)>t$. This is forbidden by the top right pattern of Figure~\ref{fig:patterns} and the third (middle left) pattern of Figure~\ref{fig:non_edges}.

Case 3. $w$ was initially after $u_t$ and $v_t$ in $\pi_0$. It implies that $\lambda(v_tw)<t$, and $u_tw \notin E$ or $\lambda(u_tw)>t$. This is forbidden by the bottom right pattern of Figure~\ref{fig:patterns} and the fourth (middle right) pattern of Figure~\ref{fig:non_edges}.
\end{proof}

Using Proposition \ref{prop:Non_edges_Pat} and Lemma \ref{lem:Non_edges_Pat}, we immediately obtain:

\begin{theorem}
    A single valued incremental temporal graph is a 1D-mobility temporal graph if and only if there exists an ordering of its nodes that excludes the ordered temporal patterns of Figures~\ref{fig:patterns} and~\ref{fig:non_edges}.
\end{theorem}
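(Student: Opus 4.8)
The final statement to prove is the theorem characterizing single-valued incremental temporal graphs that are 1D-mobility temporal graphs (the ``at most one crossing'' case) via exclusion of the patterns in Figures~\ref{fig:patterns} and~\ref{fig:non_edges}. The excerpt already supplies both halves of the argument, so the plan is essentially to assemble them.

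\medskip

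\textbf{Plan.} The proof is a direct consequence of Proposition~\ref{prop:Non_edges_Pat} and Lemma~\ref{lem:Non_edges_Pat}, exactly as the text announces. First I would prove the forward implication: if $\mathcal{G}$ is a 1D-mobility temporal graph in which each pair of agents crosses at most once, then by definition there is an ordering $\pi$ of its vertices and a mobility schedule $x$ from $\pi$ with $\mathcal{G}\cong\mathcal{G}_{\pi,x}$; Proposition~\ref{prop:Non_edges_Pat} then says precisely that this $\pi$ excludes all patterns of Figures~\ref{fig:patterns} and~\ref{fig:non_edges}. For the converse, suppose $\mathcal{G}$ is a single-valued incremental temporal graph admitting an ordering $\pi$ that excludes these patterns. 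Lemma~\ref{lem:Non_edges_Pat} constructs from $(\mathcal{G},\pi)$ a mobility schedule $x$ of the $n$ agents in which each pair crosses at most once, by reading the edges of $\mathcal{G}$ in increasing time-label order and interpreting each as a transposition; the lemma's case analysis (Cases 1--3 on the initial relative position of an obstructing vertex $w$) guarantees that every such transposition is in fact \emph{adjacent} in the permutation reached so far, so $x$ is a legitimate 1D-mobility schedule. Since each edge of $\mathcal{G}$ appears exactly once (single-valued) and is read once, each pair of agents crosses at most once, and the identity is an isomorphism from $\mathcal{G}$ to $\mathcal{G}_{\pi,x}$, so $\mathcal{G}$ is a 1D-mobility temporal graph.

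\medskip

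\textbf{Remarks on the obstacle.} Since both ingredients are already established, there is no real obstacle left; the only point requiring a little care is checking that Lemma~\ref{lem:Non_edges_Pat} genuinely yields \emph{all} of $\mathcal{G}_{\pi,x}$ and not merely a prefix --- i.e.\ that after processing all $T=|E|$ edges the resulting temporal graph has exactly the edge set $E$ with the correct labels. This follows because the construction processes each edge $uv\in E$ exactly once at its unique time label $\lambda(uv)=t$, sets $x_t=uv$, and introduces no other crossings, so $\{uv:\exists t,\ x_t=uv\}=E$ and $\lambda(uv)=\{t:x_t=uv\}$ as required by the definition of $\mathcal{G}_{\pi,x}$. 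Combining the two implications gives the theorem.

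\begin{proof}
Immediate from Proposition~\ref{prop:Non_edges_Pat} and Lemma~\ref{lem:Non_edges_Pat}. If $\mathcal{G}$ is a 1D-mobility temporal graph, there is an ordering $\pi$ of its vertices and a mobility schedule $x$ from $\pi$ in which each pair crosses at most once with $\mathcal{G}\cong\mathcal{G}_{\pi,x}$; Proposition~\ref{prop:Non_edges_Pat} shows $\pi$ excludes the patterns of Figures~\ref{fig:patterns} and~\ref{fig:non_edges}. Conversely, if a single-valued incremental temporal graph $\mathcal{G}$ has an ordering $\pi$ excluding those patterns, Lemma~\ref{lem:Non_edges_Pat} produces a mobility schedule $x$ of the $n$ agents in which each pair crosses at most once and such that the identity is an isomorphism from $\mathcal{G}$ to $\mathcal{G}_{\pi,x}$; hence $\mathcal{G}$ is a 1D-mobility temporal graph.
\end{proof}
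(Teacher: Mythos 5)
Your proposal is correct and follows exactly the paper's own route: the paper likewise derives the theorem immediately by combining Proposition~\ref{prop:Non_edges_Pat} (forward direction) with Lemma~\ref{lem:Non_edges_Pat} (converse). Your extra remark checking that the constructed schedule realizes all of $\mathcal{G}$ rather than a prefix is a reasonable point of care but does not change the argument.
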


This characterization yields another argument to show that the graph in Figure \ref{fig:speedpattern} corresponds to 
1D temporal graph since it does not contain any of the patterns of Figures~\ref{fig:patterns} and  \ref{fig:non_edges}.

We make the following observation thanks to the characterization of graph classes through forbidden patterns of size 3 from~\cite{feuilloley2021graph}.

\begin{proposition}
The set of graphs that can be associated  to a mobility schedule $x$ of $n$ agents where each pair of agents cross at most once is contained in the set of permutation graphs.
    
\end{proposition}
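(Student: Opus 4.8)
The plan is to exploit the underlying geometric picture. A mobility schedule where each pair of agents crosses at most once is, by the discussion in Section~\ref{sec:def}, essentially a (not necessarily reduced, but here prefix-reduced) decomposition of a permutation by adjacent transpositions; equivalently, it records the continuous motion of $n$ points on a line, point $i$ starting at position $i$ and ending at position $\pi_T^{-1}(i)$ for some permutation $\pi_T$, with the property that two agents that are compared are ``swapped'' exactly once if ever. The key point is that two agents $u,v$ cross (have an edge) if and only if their relative order at the end differs from their relative order at the start: if $u$ is before $v$ in $\pi_0$ and they cross, they cross exactly once and $v$ ends up before $u$; if they never cross, they keep their initial relative order. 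Hence $uv\in E$ precisely when $u,v$ form an inversion of the final permutation $\pi_T$ (read relative to $\pi_0$).

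From here I would argue as follows. Fix the initial ordering $\pi$ and the final ordering $\sigma=\pi_T$ produced by the schedule. Identify the vertex set with $\{1,\dots,n\}$ via $\pi$, so that $\pi$ is the identity and $\sigma$ is a permutation of $[n]$. By the previous paragraph, $ij\in E$ (with $i<j$) if and only if $(i,j)$ is an inversion of $\sigma$, i.e. $\sigma^{-1}(i)>\sigma^{-1}(j)$. But this is exactly the definition of the \emph{inversion graph} (permutation graph) associated to $\sigma$: vertices $1,\dots,n$, with $i\sim j$ iff they appear in the opposite order in the two linear arrangements $\mathrm{id}$ and $\sigma$. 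Therefore the underlying graph $G$ of $\mathcal{G}_x$ is a permutation graph, with the two linear orders $\pi$ and $\pi_T$ serving as the canonical pair of orderings realizing it. This proves the containment; it is in general strict (e.g. not every permutation graph labeling is attainable because the schedule also constrains the \emph{order} in which edges appear, which is what Figures~\ref{fig:patterns} and~\ref{fig:non_edges} capture), so ``contained in'' is the right word.

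The one step that needs a little care — and which I expect to be the main (mild) obstacle — is justifying cleanly that ``cross at most once'' really forces: crossed $\iff$ inverted at the end. The forward direction is immediate (each crossing is a transposition, an odd number of crossings is needed to change relative order, so ``at most once'' and ``changed order'' together give ``exactly once''). For the converse one must note that if $u,v$ never cross then along the whole schedule their relative order never changes, because every transposition in the schedule that moves one of them relative to the other is by definition one of their crossings; hence they keep their $\pi$-order, so they are not an inversion of $\pi_T$. Once this equivalence is in hand, invoking the standard fact that inversion graphs of permutations are exactly permutation graphs finishes the proof. Alternatively, and even more cheaply, one can derive the statement from the earlier forbidden-pattern characterization: permutation graphs are precisely the graphs admitting a vertex ordering with no ``obstruction'' of the relevant type on three vertices, and the patterns of Figures~\ref{fig:patterns} and~\ref{fig:non_edges}, projected onto their edge/non-edge structure (forgetting time labels), refine exactly the known three-vertex forbidden ordered pattern for permutation graphs from~\cite{feuilloley2021graph}; an ordering excluding the former excludes the latter, so the graph is a permutation graph. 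I would present the direct geometric argument as the main proof and mention the pattern-based argument as a remark.
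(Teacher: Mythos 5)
Your proof is correct, but it takes a genuinely different route from the paper's. The paper's proof is purely pattern-based: it observes that among the five patterns of Figure~\ref{fig:non_edges}, the last three are insensitive to the time labels (two by symmetry, one because it carries a single label), so that any ordering excluding the temporal patterns in particular excludes these three as ordered \emph{static} patterns; by the classification of~\cite{feuilloley2021graph}, the class of graphs admitting an ordering avoiding exactly those static patterns is the class of permutation graphs, and the containment follows. You instead give a direct geometric argument: each pair of agents crosses if and only if their relative order differs between the initial ordering $\pi_0$ and the final ordering $\pi_T$ (crossing once reverses the order and "at most once" prevents reversing back; conversely, only a mutual crossing can change the relative order of two agents, since every crossing is an adjacent transposition), so the underlying graph is exactly the inversion graph of $\pi_0^{-1}\pi_T$, hence a permutation graph by definition. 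Your argument is more self-contained and even constructive --- it exhibits the two linear orders $\pi_0$ and $\pi_T$ realizing the permutation graph --- whereas the paper's is shorter given the external classification result and fits the paper's forbidden-pattern narrative; your closing remark correctly identifies the paper's actual proof as the alternative route. The only caveat is that your "even more cheaply" sketch of the pattern route is stated a bit loosely (one must check that the orderings witnessing pattern exclusion line up with the ordered-pattern characterization of permutation graphs, which is precisely the content of the paper's one-line argument), but since you present the geometric argument as the main proof, this does not affect correctness.
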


\begin{proof}
First we can note that in the 5 patterns of Figure~\ref{fig:non_edges}, the last three do not depend on the time labels, either by symmetry (3, 4) or because there is  only one label on the pattern (5). Using~\cite{feuilloley2021graph}, we know that the corresponding class of graph is the permutation graphs, i.e. they are comparability graph and their complement also. Therefore this class is included in permutation graphs.
\end{proof}

It should be noted that if we consider only patterns of Figure~\ref{fig:non_edges}
  ignoring the labels of the  2 first patterns, this corresponds exactly to the particular case of trivially perfect graphs~\cite{feuilloley2021graph}. These graphs are also known as comparability graphs of trees or quasi-threshold graphs.

\section{Multi-crossing mobility model}\label{sec:multiedges}

\subsection{General recognition algorithm}

We consider now the case of multicrossing. For each edge, $\lambda$ no longer needs to be single valued. We introduce an algorithm that detects the temporal graphs resulting of multicrossing. We provide an impossibility result about characterizing this class of graphs. We give a way to check, for any triplet of nodes, if their temporal subgraph corresponds to a multicrossing sequence or not, by using an automata approach.

\begin{algorithm}

  \caption{Multicrossing recognition algorithm.}\label{alg:multicrossing-algorithm}
  
  \Input{The representation $\mathcal{R}(\mathcal{G})$ of a temporal graph $\mathcal{G}=((V,E),\lambda)$.}
  \Output{A final vertex ordering $\pi$.}
  
  Scan $\mathcal{R}(\mathcal{G})$ to obtain the list $V$ of nodes.\\
  Initialize for each node $u\in V$ a doubly linked list $L(u)$ containing $u$.\\
  We let $N_L(u)$ denote the neighbors of each $u\in V$ in its list (initially, $N_L(u)=\emptyset$).\\
  We maintain the collection $\mathcal{C}=\set{L(u) : u\in V}$ which represents a partition of $V$.\\
  For each extremity $u\in V$ of a list, we also maintain a pointer $P(u)$ to its list (initially, $P(u)$ points to $L(u)$ for all $u\in V$, when $u$ is not an extremity $P(u)$ is set to $NIL$).\\
  
  \For{$(u,v,t)\in \mathcal{R}(\mathcal{G})$}{
    \eIf{$v\in N_L(u)$}{
      \Comment{$u$ and $v$ are neighbors in the same list.}
      Exchange $u$ and $v$ in their list.\\
      Update $P(u)$ and $P(v)$).\\
    }{
      \eIf{$|N_L(u)|\le 1$ and $|N_L(v)|\le 1$ and $NIL\neq P(u)\not=P(v) \neq NIL$}{
        \Comment{$u$ and $v$ are extremities of two distinct lists.}
        Let $L_u$ and $L_v$ be the lists pointed by $P(u)$ and $P(v)$ respectively.\\
        Concatenate $L_u$ and $L_v$ into a list $L$ so that $u$ and $v$ become neighbors.\\
        $\mathcal{C}:=(\mathcal{C}\setminus\set{L_u, L_v})\cup L$.\\
        Exchange $u$ and $v$ in $L$.\\
        Update $P(u)$ and $P(v)$ to $NIL$ and for each extremity $w$ of $L$ set $P(w)$ to $L$.\\
      }{
        \Return{$\bottom$} \Comment{Failure.}
      }
    }
  }
  
  \Return{the concatenation (in any order but including separators) of all remaining lists.}

\end{algorithm}

We now propose an algorithm for recognizing if a sequence of edge appearances corresponds to a $1D$-mobility schedule. The main idea is to scan the sequence while maintaining all possible orderings of the nodes at current time through a collection $\mathcal{C}=\set{L_1,\ldots,L_k}$ of lists partitioning the set of nodes. Each list $L_i$ in the collection indicates that the nodes it contains are consecutive in the current ordering $\pi$: $L_i$ or its reverse must be a sub-sequence of $\pi$. In other words, the set of possible orderings represented by the collection $\mathcal{C}$ is made of concatenations $L'_{\sigma(1)},\ldots,L'_{\sigma(k)}$ where $\sigma$ is any permutation of $[k]$ and each $L'_i$ is either $L_i$ or its reverse. Each time two nodes $u$ and $v$ cross each other, we know that they must be consecutive in the current ordering and that their positions get exchanged. If they are in the same list, they must thus be neighbors and get exchanged in the list. Otherwise, the only possibility is that the two lists containing $u$ and $v$ are consecutive sub-sequences of the current ordering $\pi$. Moreover, $u$ and $v$ must be extremities of these sub-sequences, and they are next one to another in $\pi$. The two lists must thus be concatenated before exchanging the positions of $u$ and $v$. See Algorithm~\ref{alg:multicrossing-algorithm} for a formal description.

\begin{proposition}
Algorithm \ref{alg:multicrossing-algorithm} is correct and runs in linear time.  
\end{proposition}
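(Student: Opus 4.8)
The plan is to prove correctness and the linear running time separately, with correctness being the substantial part.

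\textbf{Correctness.} First I would establish the loop invariant that after processing any prefix of $\mathcal{R}(\mathcal{G})$, the collection $\mathcal{C}$ exactly represents the set of all node orderings $\pi$ that are consistent with that prefix being a 1D-mobility schedule — meaning those orderings $\pi_t$ obtainable from some valid initial ordering by performing the crossings seen so far. The representation semantics are as described in the text: the admissible orderings are the concatenations $L'_{\sigma(1)},\ldots,L'_{\sigma(k)}$ over all permutations $\sigma$ of $[k]$ and all choices of each $L'_i \in \{L_i, \mathrm{rev}(L_i)\}$. The key structural fact making this invariant maintainable is that when two nodes $u,v$ cross, in \emph{every} admissible ordering they must be consecutive and swap; so the constraint ``$u,v$ consecutive'' must already be forced by the current $\mathcal{C}$, which (given the concatenation semantics) happens exactly in the two cases the algorithm distinguishes: either $u,v$ are already neighbors inside one list, or each is an endpoint of its (distinct) list, in which case the only admissible orderings putting them consecutively are those concatenating the two lists with these endpoints adjacent — and after the crossing this adjacency is locked in, justifying the permanent concatenation. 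If neither case holds, then no admissible ordering has $u,v$ consecutive, so the prefix is not realizable and returning $\bottom$ is correct. I would then verify the invariant is preserved by each of the three branches: the in-list swap just exchanges two list-neighbors (clearly correct); the concatenate-then-swap branch correctly restricts to the sub-family of admissible orderings that had the two endpoints adjacent and then swaps; the failure branch is the ``no admissible ordering'' case. At termination, any concatenation (with separators, so the reversal freedom per list is retained) of the remaining lists is an admissible final ordering, and conversely the graph is a 1D-mobility temporal graph iff the algorithm did not fail — this last equivalence is where I would invoke that a crossing sequence is a valid 1D-mobility schedule precisely when at every step the crossed pair is consecutive in the current ordering.

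\textbf{Running time.} Here I would argue that each iteration of the for loop costs $O(1)$ amortized, hence $O(M)$ total where $M = |\lambda|$. The in-list swap of two doubly-linked-list neighbors and the pointer updates $P(u), P(v)$ are clearly constant time. The delicate point is the concatenation branch: naively concatenating two lists and updating the endpoint pointers is $O(1)$ if we only ever need to touch the (at most four) endpoints involved — which is exactly what the algorithm does, since $P$ is maintained only at extremities and a concatenation creates at most two new extremities while destroying two. Membership tests like ``$v \in N_L(u)$'' and ``$|N_L(u)| \le 1$'' are $O(1)$ since $N_L(u)$ has size at most $2$ and is stored with $u$. Finally I would note the preprocessing (scanning $\mathcal{R}(\mathcal{G})$ for $V$, initializing the singleton lists and pointers) is $O(|V| + M)$, and outputting the final concatenation is $O(|V|)$, so the whole algorithm is linear in the input size $O(|V| + M)$.

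\textbf{Main obstacle.} I expect the crux to be pinning down, cleanly and rigorously, the claim that ``$u$ and $v$ are consecutive in some admissible ordering'' is equivalent to the two syntactic conditions the algorithm checks on $\mathcal{C}$ — in particular arguing that when $u$ and $v$ lie in the same list but are \emph{not} neighbors, there is genuinely no admissible ordering making them consecutive (so failure is justified), and that when they are endpoints of different lists, committing to the concatenation does not wrongly discard any orderings that remain admissible after the crossing. Making the invariant precise enough that these case checks become routine — rather than hand-wavy — is the part that needs the most care; everything else follows the template already set by Lemma~\ref{lem: mapping} and standard amortized-analysis bookkeeping.
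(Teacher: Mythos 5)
Your proposal is correct and follows essentially the same route as the paper: the paper's proof is a short appeal to the invariant that $\mathcal{C}$ represents all orderings consistent with the prefix processed so far, "proven by a simple induction based on the discussion above," plus the observation that each iteration takes constant time. Your write-up simply makes that induction and the case analysis explicit (and your identification of the crux --- that the two syntactic tests on $\mathcal{C}$ exactly capture ``$u,v$ consecutive in some admissible ordering'' --- is precisely the content the paper leaves to the reader).
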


\begin{proof}
The correctness of Algorithm~\ref{alg:multicrossing-algorithm} comes from the fact that $\mathcal{C}$ represents all possible orderings of the nodes after each iteration of the scanning loop. This can be proven by a simple induction based on the discussion above. As each iteration takes constant time, the whole computation takes linear time. We also use $O(1)$ space per node and the whole computation can be performed in a streaming fashion using $O(n)$ space. 
\end{proof}

The collection $\mathcal{C}$ can be seen as a PQ-tree with only two levels to make the link with \cite{Villani2021}. We use a simple representation as a collection of lists for the sake of simplicity. This algorithm can be seen as an online algorithm since if new edges appear we can reload the algorithm separating the lists.

Finally, we can also obtain a possible initial ordering by running the algorithm on the reverse of $\mathcal{R}(\mathcal{G})$ by a time reversal argument.

\subsection{Unordered forbidden patterns}

In this section, we consider unordered forbidden patterns for multicrossing graphs. This time, we no longer have ordering on the nodes for patterns. It means that we directly consider exclusion of temporal patterns, as defined in the preliminaries section.

We prove that there does not exist any finite set of forbidden (unordered) patterns to characterize the mobility graphs with multicrossing:

\begin{theorem}
    For any $k\in\mathbb{N}$, there exists a graph on $k$ nodes that does not correspond to a multicrossing such that no pattern on a subset of its nodes can be forbidden.
\end{theorem}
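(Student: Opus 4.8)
The plan is to construct, for each $k$, a specific temporal graph $G_k$ on $k$ nodes that is \emph{not} a multicrossing 1D-mobility graph, but such that \emph{every} proper induced temporal subgraph of $G_k$ (equivalently, the temporal subgraph induced by any subset of nodes obtained by deleting at least one vertex) \emph{is} a multicrossing 1D-mobility graph. If we can do this, then no pattern on a strict subset of the nodes of $G_k$ can be a forbidden pattern: since that sub-pattern occurs in some induced subgraph $G_k{}_{|S}$ which is itself a valid multicrossing graph, forbidding it would wrongly exclude $G_k{}_{|S}$; hence the only candidate ``forbidden pattern'' certifying that $G_k$ is not multicrossing would have to involve all $k$ nodes. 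Since $k$ is arbitrary, there is no finite set of forbidden patterns.

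The key steps, in order. First, I would revisit the automaton / three-node analysis that the paper announces (the ``automata approach'' where states are orderings of a triple and crossings are transitions). This gives a clean handle on which temporal graphs on a few nodes are feasible. Second, I would design $G_k$ as a kind of ``cycle of crossings'': roughly, arrange the crossings so that the constraints they impose on the cyclic ordering of the $k$ agents form a consistent chain around a cycle that has no global linear solution — analogous to how a long odd cycle is not bipartite but every proper subgraph is. Concretely, one wants a schedule-like sequence of edge appearances that forces, for consecutive pairs of agents, a determined relative position, in such a way that ``going around'' produces a contradiction only when all $k$ nodes are present. Removing any one node breaks the cycle of constraints, leaving a chain that can always be linearly realized, so the induced subgraph becomes a legitimate multicrossing graph. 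Third, I would verify the two claims formally: (a) $G_k$ is not a multicrossing 1D-mobility graph — most cleanly by running (or reasoning about) Algorithm~\ref{alg:multicrossing-algorithm}, showing it returns $\bottom$, since that algorithm is correct and maintains all possible orderings; and (b) for each vertex $v$, $G_k \setminus v$ is multicrossing — by exhibiting an explicit initial ordering and a valid multicrossing schedule, or again by checking the algorithm succeeds. Fourth, I would assemble the meta-argument: any finite set $\mathcal{F}$ of patterns that characterizes multicrossing graphs must, on input $G_k$ with $|V(G_k)| = k$, contain some $\mathcal{H}\in\mathcal{F}$ occurring in $G_k$; minimality forces $\mathcal{H}$ to span all $k$ nodes (else it embeds in a valid subgraph, contradiction); taking $k$ larger than the size of every pattern in $\mathcal{F}$ yields the contradiction.

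I expect the main obstacle to be the explicit construction of $G_k$ together with a verified pair of properties — it is easy to state ``a cycle of constraints'' informally, but one must pin down an actual sequence of crossings (or an actual edge-labeled temporal graph) and then prove both infeasibility of the whole and feasibility of every vertex-deletion. The infeasibility direction is the more delicate half: one needs to argue that \emph{no} initial ordering works, which is exactly what the correctness of Algorithm~\ref{alg:multicrossing-algorithm} buys us, so I would lean on that rather than argue by hand. For the feasibility of $G_k\setminus v$, the symmetry of the construction should let me treat only a couple of representative cases for $v$ (e.g. an ``extreme'' deleted vertex versus an ``internal'' one along the cyclic arrangement), each handled by a uniform schedule. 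A secondary subtlety is making sure the sub-pattern induced on $S$ embeds into $G_k{}_{|S}$ \emph{as a temporal subgraph in the sense of the paper's definition} (labels, incremental structure, possibly forbidden non-edges); since we are using \emph{induced} subgraphs of a concrete graph, this should be automatic, but it needs a sentence of care so the logical step ``valid subgraph contains the pattern $\Rightarrow$ pattern cannot be forbidden'' is airtight.
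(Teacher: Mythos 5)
Your plan is essentially the paper's proof: a cyclic chain of adjacency constraints on $k$ agents, unsatisfiable by any linear order yet satisfiable once any one vertex is removed (cutting the cycle open), with infeasibility certified by Algorithm~\ref{alg:multicrossing-algorithm} and feasibility of each vertex-deleted subgraph by an explicit initial ordering. The one concrete ingredient you leave open --- a gadget that forces two agents to be adjacent while restoring the ordering --- is, in the paper, simply repeating each edge twice in a row: the sequence $(x_1,x_2)^2(x_2,x_3)^2\cdots(x_{k-1},x_k)^2$ pins the order to $x_1,\ldots,x_k$ (up to mirror), appending $(x_1,x_k)^2$ then demands adjacency of the two list endpoints, and deleting $x_i$ is realized from the initial order $x_{i+1},\ldots,x_k,x_1,\ldots,x_{i-1}$.
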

\begin{proof}
    We consider the set of vertices $V=\{v_1,\ldots,v_k\}$.
    The idea is that if we have a sequence of edges of the form $(x_1,x_2)^2(x_2,x_3)^2\ldots (x_{k-1},x_k)^2$, it means that those nodes are ordered $x_1,x_2,\ldots,x_k$ (or its mirror) both at the beginning and after this sequence of edges. Indeed, when we have twice in a row the edge $(x,y)$, it implies that $x$ and $y$ are next to each other, and both are in the same position as before.

    To the above sequence, by adding twice the edge $(x_1,x_k)$ at the end, we get a forbidden pattern, as both $x_1$ and $x_k$ are in the same list in Algorithm~\ref{alg:multicrossing-algorithm}, but on both ends.

    If we take a subset of size $k-1$, it means that there is some $x_i$ missing. From the initial order $x_{i+1}\ldots x_kx_1\ldots x_{i-1}$, after applying the corresponding sequence of edges, we got no issue with Algorithm~\ref{alg:multicrossing-algorithm} and end up in the same configuration that the one in the beginning.
\end{proof}

\subsection{Automaton approach}

We have noticed in the case of Mobility Cliques that a contradiction is met when 3 agents have an impossible situation. For this reason, we consider now only 3 nodes, with a new approach. Let $a$, $b$ and $c$ be three nodes starting with order $abc$. 
To see the evolution of the ordering of the 3 nodes, we consider an automaton where each state represents an ordering. For each possible crossing, we put an edge with the pair as a label, going from an ordering to its update after the switch. For example, from $abc$, with edge $ab$, we go to state $bac$. This automaton is represented in Figure~\ref{fig:automata1}.

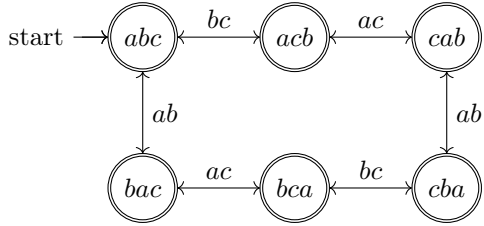
\begin{figure}[htp]
\begin{center}
\begin{tikzpicture} [node distance = 2cm, on grid]
   \tikzstyle{circlenode}=[draw,circle,minimum size=70pt,inner sep=0pt]
    \tikzstyle{whitenode}=[draw,circle,fill=white,minimum size=18pt,inner sep=0pt]
 
\node (a1) [state, initial, accepting] at (0,0) {$abc$};
\node (b1) [state, accepting] at (2,0) {$acb$};
\node (c1) [state, accepting] at (4,0) {$cab$};
\node (a2) [state, accepting] at (0,-2) {$bac$};
\node (b2) [state, accepting] at (2,-2) {$bca$};
\node (c2) [state, accepting] at (4,-2) {$cba$};
\draw (a1) edge [<->] node [above] {$bc$} (b1);
\draw (b1) edge [<->] node [above] {$ac$} (c1);
\draw (a2) edge [<->] node [above] {$ac$} (b2);
\draw (b2) edge [<->] node [above] {$bc$} (c2);
\draw (a1) edge [<->] node [right] {$ab$} (a2);
\draw (c2) edge [<->] node [right] {$ab$} (c1);

\draw (a1) edge [<-] node [above] {} (-0.8,0);
\end{tikzpicture} 
\end{center}
\caption{Automaton of possible edge sequences on the nodes $a$, $b$ and $c$.}\label{fig:automata1}
\end{figure}

The goal is to capture the possible sequences of edges of $V$. We can see the ordered sequence of edges as a word in $\Sigma^*$ with $\Sigma=\{ab,bc,ac\}$. To consider the impossible words, we need to look at the complementary of this automaton. We are doing the following shortcuts:
\begin{itemize}
    \item We can see that only the node in the middle is relevant and the two states corresponding to an ordering and its left-right mirror. We will now identify these pairs of states, merging $abc$ and $cba$ into state $b$, $acb$ and $bca$ into state $c$, $bac$ and $cab$ into state $a$.
    \item To simplify the reading, each edge will be identified with the uppercase version of the node that is not in it. For example, $ab$ becomes $C$, and $\Sigma=\{A,B,C\}$.
    \item We add one more state $\bottom$, corresponding to the final state that can be reached for leaving  the automaton from Figure~\ref{fig:automata1}.
\end{itemize}

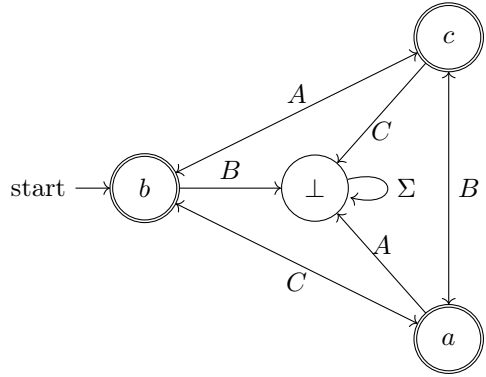
\begin{figure}[htp]
\begin{center}
\begin{tikzpicture} [node distance = 2cm, on grid]
   \tikzstyle{circlenode}=[draw,circle,minimum size=70pt,inner sep=0pt]
    \tikzstyle{whitenode}=[draw,circle,fill=white,minimum size=18pt,inner sep=0pt]
 
\node (b) [state, initial, accepting] at (0,0) {$b$};
\node (c) [state, accepting] at (4,2) {$c$};
\node (a) [state, accepting] at (4,-2) {$a$};
\node (f) [state] at (2.24,0) {$\bottom$};
\draw (b) edge [<->] node [above] {$A$} (c);
\draw (a) edge [<->] node [right] {$B$} (c);
\draw (a) edge [<->] node [below] {$C$} (b);
\draw (a) edge [->] node [above] {$A$} (f);
\draw (b) edge [->] node [above] {$B$} (f);
\draw (c) edge [->] node [below] {$C$} (f);
\draw (f) edge [->, loop right] node [right] {$\Sigma$} (f);

\end{tikzpicture} 
\end{center}
\caption{Automaton of possible edge sequences.}\label{fig:automata2}
\end{figure}

From this, we get the automaton of Figure~\ref{fig:automata2}. In particular, we consider the words ending in the final state $\bottom$. This state corresponds to the case where we had an impossible edge happening before. From there, we accept any edge from $\Sigma$.

Using usual automaton reductions, we can deduce a regular expression corresponding to the set of words on $\Sigma^*$/sequences of edges that are possible in our model. One can prove that the corresponding language $L_{MC}$ is:
\begin{equation}\label{eq:regexprposs}
L_{MC}=L_{Id}
\left(\varepsilon+AB^*+CB^*\right)
\end{equation}
where $\varepsilon$ denotes the empty word and $L_{Id}$ is the language of sequences leading back to the initial ordering with $b$ in the middle:
\begin{equation}\label{eq:regexprid}
L_{Id} = \left[A(BB)^*A+C(BB)^*C+AB(BB)^*C+CB(BB)^*A\right]^*
\end{equation}

Equivalently, our model forbids the language recognized by the automaton of Figure~\ref{fig:automata2}:
\begin{equation}\label{eq:regexpr}
L_{Id} [B+(A+CB)(BB)^*C+(C+AB)(BB)^*A]\Sigma^*
\end{equation}

\begin{theorem}\label{th:mulauto}
    Let $n$ agents on a line initially ordered as $\{a_1,\ldots,a_n\}$, and an ordered list of triplets $\mathcal{R}=(u_1,v_1,t_1),\ldots,(u_M,v_M,t_M)$ of temporal edges such that $\forall i,<j\le M$, $t_i<t_j$. We consider the corresponding temporal graph $\mathcal{G}=((V,E),\lambda)$ with $V=\{a_1,\ldots,a_n\}$.

    $\mathcal{G}$ corresponds to multicrossing if and only if, for any $i<j<k\le M$, the sequence of edges $A={a_ja_k}$, $B={a_ia_k}$ and $C={a_ia_j}$ ordered with multiplicity according to $\mathcal{R}$ is in $L_{MC}$.
\end{theorem}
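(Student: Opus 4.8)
The plan is to establish both directions of the equivalence by reducing the global recognition problem to the behavior of the three-node automaton on each triple of agents. The forward direction is essentially a projection argument. Suppose $\mathcal{G}$ corresponds to a multicrossing schedule, realized by agents moving on a line with crossings ordered by $\mathcal{R}$. Fix any three indices $i<j<k$. I would argue that the restriction of the global motion to the agents $a_i, a_j, a_k$ is itself a valid multicrossing schedule on three agents: crossings involving only one or none of these three agents never change the relative order of $a_i, a_j, a_k$, and a crossing between two of them is always a crossing of two agents that are consecutive \emph{within the sub-ordering} of the three (since being consecutive among all agents implies being consecutive among any subset containing both). Hence the sub-sequence of $\mathcal{R}$ restricted to edges inside $\{a_i,a_j,a_k\}$ is a legal sequence of transpositions on the three-element ordered set starting from $a_i,a_j,a_k$. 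By the construction of the automaton in Figure~\ref{fig:automata2} (whose accepting runs are exactly the legal three-agent schedules), this sub-word lies in $L_{MC}$, which is the language of that automaton as expressed in \eqref{eq:regexprposs}–\eqref{eq:regexprid}.

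For the converse, assume that for every triple $i<j<k$ the corresponding sub-word is in $L_{MC}$, and I must produce a global multicrossing schedule. Here I would invoke Algorithm~\ref{alg:multicrossing-algorithm}: it suffices to show that the algorithm does not return $\bottom$ on input $\mathcal{R}$, since by the correctness proposition a non-failing run witnesses that $\mathcal{R}$ is realizable as a 1D multicrossing schedule (and the final list it outputs gives a valid final ordering, from which a time-reversal gives an initial ordering). The key claim is that a failure of the algorithm is always \emph{local}: if the algorithm fails when processing some edge $(u,v,t)$, then there is a third node $w$ such that the restriction of $\mathcal{R}$ to the triple $\{u,v,w\}$ is a sub-word not in $L_{MC}$. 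Concretely, failure means $u$ and $v$ are neither neighbors in a common list nor extremities of two distinct lists; tracing through the collection $\mathcal{C}$, this forces the existence of a node $w$ strictly between $u$ and $v$ in \emph{every} ordering still consistent with the processed prefix, and one checks that the projected history of $\{u,v,w\}$ up to time $t$, followed by the crossing $uv$, drives the three-node automaton into the sink state $\bottom$ — i.e. it realizes one of the forbidden factors in \eqref{eq:regexpr}. That contradicts the hypothesis, so no failure occurs.

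I would organize the argument around a single structural invariant for Algorithm~\ref{alg:multicrossing-algorithm}: after processing any prefix of $\mathcal{R}$, for any two nodes $x,y$ and any node $z$, whether $z$ lies between $x$ and $y$ in the current ordering (forced, forbidden, or free) is determined by the projected sub-word on $\{x,y,z\}$ via the automaton state. Granting this invariant, both the forward projection and the locality-of-failure claim follow by inspecting the six non-sink transitions and the three transitions into $\bottom$ in Figure~\ref{fig:automata2}. The main obstacle I anticipate is precisely the bookkeeping in the converse direction: one must carefully match the ``$u$ and $v$ are not in a position to be swapped'' failure condition of the algorithm — which is phrased in terms of the list collection $\mathcal{C}$ — with the automaton reaching $\bottom$ on some triple, and in particular argue that the offending third node $w$ can always be found. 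This requires understanding exactly which orderings $\mathcal{C}$ represents (a claim already used in the correctness proof) and then extracting a single witness triple; the merging of mirror-image states and the relabeling of edges as $A,B,C$ add a layer of translation that must be handled cleanly but introduces no real mathematical difficulty.
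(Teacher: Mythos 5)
Your forward direction is exactly the paper's argument (projection of the schedule onto a triple, using the fact that crossings involving at most one of the three agents do not change their relative order), so that half is fine. The gap is in the converse. The paper does \emph{not} go through Algorithm~\ref{alg:multicrossing-algorithm}: since the initial ordering $a_1,\ldots,a_n$ is \emph{fixed}, the positions of all agents at every step are fully determined, so one simply simulates the schedule from that ordering, takes the first edge $(a_i,a_j,t)$ that is not a crossing of adjacent agents, picks any $a_k$ sitting between them at time $t-1$, and observes (by the same invariance-under-irrelevant-crossings fact you use in the forward direction) that the projected word of the triple, started from the initial state determined by the relative order of $a_i,a_j,a_k$ in $\pi_0$, reaches the state with $a_k$ in the middle and then falls into $\bottom$ on the transition $(a_i,a_j)$; hence that triple's word is not in $L_{MC}$.

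Your route through Algorithm~\ref{alg:multicrossing-algorithm} has two concrete problems. First, the algorithm does not know the initial ordering: a non-failing run only certifies that \emph{some} initial ordering realizes $\mathcal{R}$, whereas the triple condition in the statement is tied to the specific ordering $a_1<\cdots<a_n$ (it fixes which edge is called $A$, $B$, $C$ and which state is initial), so you would at best prove an existential variant that is inconsistent with the forward direction. Second, the locality-of-failure claim and the supporting invariant are false as stated. If the algorithm fails because $u$ is an interior node of its list, there need not be a single $w$ lying between $u$ and $v$ in \emph{every} ordering consistent with $\mathcal{C}$ (the witness can change side depending on where $v$'s list is placed). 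Worse, whether the projected word of $\{u,v,w\}$ lands in $\bottom$ depends on the automaton's initial state, i.e., on the initial relative order of the triple, which the list collection $\mathcal{C}$ does not record; the chain example $(x_1x_2)^2(x_2x_3)^2\cdots(x_{k-1}x_k)^2(x_1x_k)^2$ from the paper's non-finite-characterization theorem shows that, absent the fixed initial ordering, a failure of the algorithm need not be witnessed by any triple at all (the projected word $(x_1x_2)^2(x_1x_k)^2$ is rejected only from the start state where $x_2$ is in the middle). So any correct proof must exploit the given initial ordering directly, which is precisely what the deterministic simulation does and what your converse avoids doing.
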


\begin{proof}
    As $L_{MC}$ corresponds to the exact set of possible sequences on $A$, $B$ and $C$, multicrossing directly implies the property for any 3 distinct agents. On the other hand, one can notice that $\mathcal{G}$ does not correspond to a multicrossing if there exists $a_i$, $a_j$, $a_k$ and $t$ such that $(a_i,a_j,t)\in\mathcal{R}$ and $a_k$ is between $a_i$ and $a_j$ at time $t-1$. Consider the earliest time $t$ where this happens for a triplet $a_i$, $a_j$, $a_k$.
    
    As there is no influence from any other node in the sequence in where those three nodes are at time $t-1$ relatively to each other, we consider the crossings between these three nodes in the prefix $\mathcal{R}_{< t}=\{(u_1,v_1,t_1): t_1< t\}$.
    Let us assume $i<j<k$ (the other cases are handled in the same way, choosing the right names for the edges). Let name $A={a_ja_k}$, $B={a_ia_k}$ and $C={a_ia_j}$ for the (partial) edges, and consider the sequence of apparition of these edges in $\mathcal{R}_{< t}$. As the sequence of crossing is valid up to that point, the corresponding word on $A,B,C$ leads the automaton of Figure~\ref{fig:automata2} to one of the states $a,b,c$. As $a_k$ is in the center, this state must be $c$, and the transition $C$ at time $t$ leads to the $\bottom$ state where the automaton remains until the end of the sequence. This allows to conclude the second part of the proof.
\end{proof}
\section{Circular temporal clique}\label{sec:circular}

\subsection{Preliminary}

\emph{Circular ordering.} A circular ordering of a set $X$ is a ternary relation $C \subseteq X$ such that for any four elements $x, y, z, w \in X$ the following statements hold: \cite{guzman2023describing}
\begin{itemize}
    \item if $(x,y,z) \in C$ then $(y,z,x) \in C$,
    \item if $(x,y,z) \in C$ then $(x,z,y) \notin C$,
    \item either $(x,y,z) \in C$ or $(x,z,y) \in C$.
    \item if  $(x,y,z) \in C$ and $(x,z,w) \in C$, then $(x,y,w) \in C$, 
\end{itemize}
The elements in $X$ can be uniquely, up to rotations, arranged around a fixed circle such that, for any tuple $(x,y,z) \in C$, if we read elements in the circle clockwise from $x$, then the reading ordering is $x,y,z$. We say that $x$ and $z$ are \emph{consecutive} in $C$ when $C$ does not contain triplets $(x,y,z)$ nor $(z,y,x)$ for any $y\in X$.

\emph{Circularly ordered graphs.} A circularly ordered temporal graph is a pair $(\mathcal{G}, \pi)$, where $\mathcal{G}$ is a temporal graph and $\pi$ is a circular ordering of its nodes. Similarly, a \emph{circularly ordered temporal pattern} $(\mathcal{H}, \pi)$ is a temporal pattern $\mathcal{H}$ together with a circularly  ordering $\pi$ of its node. A circularly ordered incremental temporal graph $(\mathcal{G}, \pi')$ excludes $(\mathcal{H}, \pi)$ when it does not have any temporal subgraph $\mathcal{H'}$ which is isomorphic to $\mathcal{H}$ through an isomorphism $\phi$ preserving relative circular orderings, that is $(\phi(u),\phi(v), \phi(w)) \in \pi$ whenever $(u,v,w) \in \pi'$. We then also say that the ordering $\pi'$ excludes $(\mathcal{H}, \pi)$ from $\mathcal{G}$, or simply excludes $(\mathcal{H}, \pi)$ when $\mathcal{G}$ is clear from the context.

\subsection{Model}
We introduce here the notion of temporal graph associated to mobile agents moving along a circle that is an one-dimensional sphere.
Consider $n$ mobile agents in an oriented circle. 
At time $t_0=0$, they initially appear along the line according to a circular ordering $\pi_0$. 
These agents move along the circle and can cross one another as time goes on. We assume that a crossing is always between exactly two neighboring agents, and a single pair of agents cross each other at a single time. By ordering the crossings, we have the $k$th crossing happening at time $t_k=k$.

A \emph{circular-mobility schedule} from an ordering $\pi_0=a_1,\ldots,a_n$ of $n$ agents is a sequence $x=x_1,\ldots,x_T$ of crossings within the agents. 
Each crossing $x_t$ is a pair $uv$ indicating that agents $u$ and $v$ cross each other at time $t$. Note that the circular ordering $\pi_{t}$ at time $t$ is obtained from $\pi_{t-1}$ by permuting $u$ and $v$, and it is thus required that they appear consecutively in $\pi_{t-1}$. To such a schedule, we can associate a temporal graph $\mathcal{G}_{\pi_0,x}=((V,E),\lambda)$ such that:
\begin{itemize}
    \item $V=\{a_1,\ldots,a_n\}$,
    \item 
    $E=\{uv : \exists t\in [T], x_t=uv\}$,
    \item 
    for all $uv\in E$, $\lambda(uv)=\{t : x_t=uv\}$.
\end{itemize}

 We are interested in particular by the case where all agents cross each other exactly once as the resulting temporal graph is then a temporal clique which
 is called a \emph{circular-mobility temporal clique}. 
 More generally, we say that an incremental temporal graph $\mathcal{G}$ \emph{corresponds to a circular schedule} if there exists some circular ordering $\pi$ of its vertices and a circular schedule $x$ from $\pi$ such that the identity is an isomorphism from $\mathcal{G}$ to $\mathcal{G}_{\pi,x}$. The graph $\mathcal{G}$ is then called a \emph{circular temporal graph}.

\subsection{Characterization}

\begin{theorem}\label{th:cir}
A temporal clique is a circular-mobility temporal clique if and only if there exists a circular ordering $\pi$ of its nodes that excludes the circularly ordered temporal patterns of Figure \ref{fig:circular} as a prefix.
\end{theorem}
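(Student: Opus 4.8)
The plan is to mirror the structure of the linear case (Theorem~\ref{th:patterns}), but with two essential modifications forced by the circular geometry: first, a triangle of agents on a circle does not have a well-defined "middle" node, so the analogue of Lemma~\ref{lem:triangle} must be restated in terms of which pair of agents is separated by the third; second — and this is why the statement says "as a prefix" rather than simply "excludes" — on a circle the final ordering is not forced to be the reverse of the initial one (a single pair of agents can cross either way, and global rotations of the circular order are indistinguishable), so the forbidden configurations are only guaranteed to be absent as \emph{prefixes} of the representation, consistently with the notion of forbidden ordered prefix-pattern introduced in Section~\ref{sec:def}. Concretely, I would first prove the easy direction: if $\mathcal{G}=\mathcal{G}_{\pi_0,x}$ is a circular-mobility clique, then the circular ordering $\pi_0$ excludes each pattern of Figure~\ref{fig:circular} as a prefix. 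The key sublemma here is the circular analogue of Lemma~\ref{lem:triangle}: for any three agents $a,b,c$, as long as $b$ lies on the arc strictly between $a$ and $c$ on one side, $a$ and $c$ can only cross along the \emph{other} arc; tracking a triangle through the schedule, one shows that the relative order in which the three edges $ab,bc,ac$ first appear is constrained exactly by which arcs the agents occupy initially, and that the patterns of Figure~\ref{fig:circular} encode precisely the violations of this constraint. Since a prefix of a circular-mobility schedule is itself a circular-mobility schedule on the same agents, this argument applies to every prefix, giving the prefix-exclusion.

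For the converse — the substantive direction — I would again build the candidate schedule greedily and argue by a "first failure" contradiction, as in Lemma~\ref{lem: mapping}. Given a temporal clique $\mathcal{G}$ with a circular ordering $\pi$ excluding the patterns of Figure~\ref{fig:circular} as prefixes, set $\pi_0=\pi$ and read the representation $\mathcal{R}(\mathcal{G})=(u_1,v_1,1),\ldots,(u_T,v_T,T)$, maintaining a sequence of circular orderings $\pi_t$ obtained from $\pi_{t-1}$ by transposing $u_t$ and $v_t$ \emph{provided they are consecutive in the current circular order}. Suppose $t$ is the first step at which $u_t,v_t$ fail to be consecutive in $\pi_{t-1}$; then, since all earlier transpositions swapped consecutive elements, both arcs between $u_t$ and $v_t$ in $\pi_{t-1}$ are nonempty, so pick a witness $w$ on each side. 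The crucial observation, exactly as in the linear proof, is a monotonicity fact: for agents $x,y$ that are consecutive in $\pi_0$ on a given arc, the only way their circular relationship with a third agent changes before time $t$ is via a crossing of an edge incident to $x$ or $y$ along that arc, so the appearance times $\lambda(xw)$, $\lambda(yw)$ relative to $t$ are pinned down by the initial circular order of $\{x,y,w\}$. Feeding the resulting inequalities on $\{u_t,v_t,w\}$ into the case analysis over the possible circular orders of these three nodes yields, in each case, an induced temporal sub-pattern on $\{u_t,v_t,w\}$ that is prefix-isomorphic (respecting the circular order) to one of the patterns of Figure~\ref{fig:circular} — contradicting the hypothesis. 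Hence no failure occurs, every $\tau_t$ is a transposition of circularly-consecutive agents, $\mathcal{R}(\mathcal{G})$ is a valid circular-mobility schedule from $\pi$, and $\mathcal{G}$ is a circular-mobility temporal clique.

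I expect the main obstacle to be getting the bookkeeping of the "prefix" quantifier right and checking that the finite list in Figure~\ref{fig:circular} is genuinely exhaustive. In the linear case, a triangle $\{a,b,c\}$ has a canonical middle element, so only four patterns arise; on a circle, the three agents are symmetric under rotation, and the relevant data is which of the three \emph{pairs} is "split" by the third agent together with the cyclic order in which the three edges appear — so one must carefully enumerate the orbits of this data under the dihedral symmetry of the triangle and under the rotational symmetry of the circular order, verify that each orbit not realizable by a circular schedule corresponds to exactly one pattern in the figure, and confirm that realizability is a property of the \emph{prefix} of the edge-sequence only (this is what makes the monotonicity observation above legitimate even though later crossings could, in principle, restore consecutiveness). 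A secondary subtlety is that, unlike the linear case, the map from cliques-with-circular-order to reduced decompositions is not available verbatim (the end permutation on a circle is not forced), so the converse must be proved directly by the greedy-construction argument rather than routed through $R(w_n)$; fortunately the "first failure" argument does not actually need the reduced-decomposition machinery, only the monotonicity observation and the pattern exclusion.
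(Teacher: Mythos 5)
There is a genuine gap, and it is structural: your argument is built on three-node configurations, but in the circular model three nodes carry no obstruction at all. On a circle, any two of three agents are always consecutive in the circular order induced on those three agents, so \emph{every} crossing sequence on three agents is realizable as a circular-mobility schedule; consequently no forbidden pattern on three nodes can exist. Your proposed ``circular analogue of Lemma~\ref{lem:triangle}'' is false for the same reason: while $b$ sits on one arc between $a$ and $c$, the agents $a$ and $c$ remain adjacent along the other arc and can cross there, so the appearance order of the three edges of a triangle is not constrained by the initial circular position of $b$. This is exactly why all the patterns of Figure~\ref{fig:circular} live on \emph{four} nodes with associated circular ordering $a,u,b,v$: the earliest possible obstruction is that a pair is separated on \emph{both} arcs, which requires two witnesses simultaneously.

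The consequence for your converse direction is concrete. You correctly observe that at the first failure time $t$ both arcs between $u_t$ and $v_t$ in $\pi_{t-1}$ are nonempty and that one should ``pick a witness $w$ on each side,'' but you then collapse the case analysis to the triple $\set{u_t,v_t,w}$ for a single $w$; no contradiction can be extracted there. The paper instead keeps both witnesses $a$ and $b$ (one per arc), considers the induced temporal subgraph on the four nodes $a,u_t,b,v_t$, and argues it must be prefix-isomorphic to one of the listed four-node patterns. Correspondingly, the easy direction in the paper does not go through a triangle lemma but through the four-agent invariance of Lemma~\ref{lem:cir4nodes} plus the fact that non-consecutive agents cannot cross, and the exhaustiveness of Figure~\ref{fig:circular} is established by an exhaustive check over the $6!$ four-agent schedules rather than by a symmetry classification of triangles. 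Your overall skeleton (greedy reconstruction of $\pi_t$, first-failure contradiction, no detour through reduced decompositions, and the role of the prefix quantifier) does match the paper's, but the unit of analysis must be quadruples, not triples, for both directions to go through.
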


\begin{figure}[t]
    \input{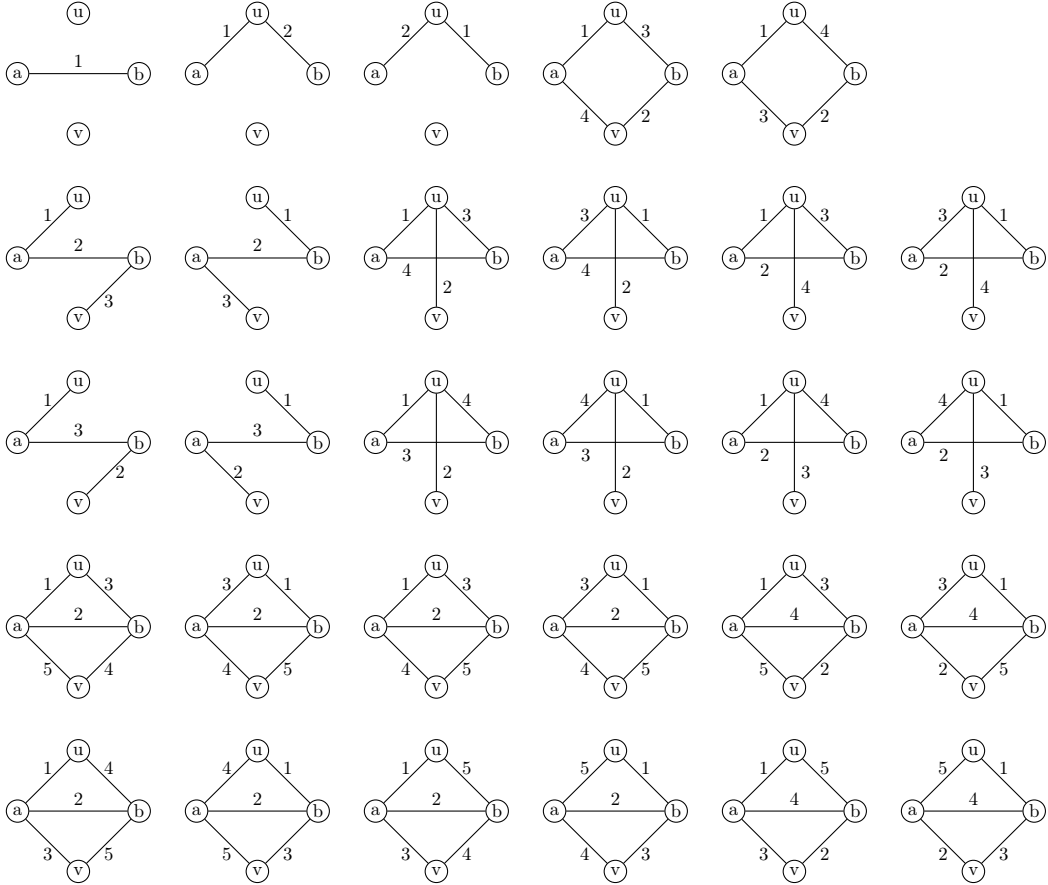}
    \caption{Circularly ordered forbidden prefix patterns in an ordered circular-mobility temporal clique. Each pattern is ordered clock-wise and associated circular ordering $a,u,b,v$.} 
        \label{fig:circular}
\end{figure}

The theorem derives from the following lemmas.

\begin{lemma}\label{lem:cir4nodes}
The relative circular ordering of four agents, $a,b,c,d$, does not change after a crossing of two agents which are both distinct from $a,b,c,d$.
\end{lemma}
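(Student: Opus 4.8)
The plan is to reason directly with the defining ternary relation of a circular ordering, exploiting that a crossing acts as a transposition of the two crossing agents while fixing the position of every other agent around the circle.

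First I would set up the notation. By definition of a circular-mobility schedule, at each time $t$ the agents occupy $n$ fixed slots around a circle according to the circular ordering $\pi_t$; the crossing $x_t=uv$ requires $u$ and $v$ to be consecutive in $\pi_{t-1}$, and $\pi_t$ is obtained from $\pi_{t-1}$ simply by exchanging $u$ and $v$ (equivalently, by swapping the contents of their two consecutive slots), leaving every other agent in its slot. Consequently, if $u,v\notin\set{a,b,c,d}$, then each of $a,b,c,d$ occupies the very same slot in $\pi_{t-1}$ and in $\pi_t$.

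Next, fix any three agents $x,y,z\in\set{a,b,c,d}$; proving that $(x,y,z)\in\pi_t$ if and only if $(x,y,z)\in\pi_{t-1}$ for every such triple is exactly the assertion that the restriction of $\pi_t$ to $\set{a,b,c,d}$ coincides with that of $\pi_{t-1}$, since a circular ordering on four points is determined by its triples. Using the characterization of circular orderings recalled at the beginning of this section, $(x,y,z)\in\pi_{t-1}$ holds precisely when, reading the circle clockwise starting from the slot of $x$, the slot of $y$ is met before the slot of $z$. Since the slots of $x,y,z$ are unchanged and the transposition only permutes the contents of two slots, both distinct from those of $x,y,z$, the clockwise reading order of the slots of $x,y,z$ seen from $x$ is identical before and after the crossing; hence $(x,y,z)\in\pi_t \Leftrightarrow (x,y,z)\in\pi_{t-1}$. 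As this holds for all triples drawn from $a,b,c,d$, the induced circular ordering on $\set{a,b,c,d}$ is unchanged, which proves the lemma.

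I expect no real obstacle here: the statement is essentially a locality property of the restriction of a circular ordering. The only subtlety worth a word is that the consecutiveness of $u$ and $v$ is used solely to make the crossing a legal move of the model — the preservation argument itself works verbatim for an arbitrary transposition of two agents lying outside $\set{a,b,c,d}$, and hence also for any sequence of such transpositions, which is the form in which the lemma will be applied later.
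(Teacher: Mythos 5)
Your proof is correct: the paper in fact gives no standalone proof of this lemma, merely restating it as an observation at the start of the proof of Lemma~\ref{lem:cirpatterns}, and your slot-based argument (the transposition only permutes the contents of two slots disjoint from those of $a,b,c,d$, so every clockwise reading of a triple from $\set{a,b,c,d}$ is unchanged) is the natural formalization of exactly that observation. Nothing is missing.
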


\begin{lemma}\label{lem:cirpatterns}
    Let $\mathcal{G}$ be a circular-mobility temporal clique, and $\pi$ be the initial circular ordering of nodes in the oriented circle, then $\pi$ excludes the circular forbidden patterns in Figure~\ref{fig:circular} from $\mathcal{G}$.
\end{lemma}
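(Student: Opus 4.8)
The plan is to adapt the argument used for the linear case (Proposition~\ref{prop:1DtoPat}, via Lemma~\ref{lem:triangle}), but now reasoning about four agents at a time instead of three, with Lemma~\ref{lem:cir4nodes} playing the role that Lemma~\ref{lem:triangle} played there. So I would fix a circular-mobility schedule $x$ from $\pi$ that produces $\mathcal{G}$, pick any four nodes, and --- up to a rotation or reflection of the circle --- assume their clockwise order in $\pi$ is $(a,u,b,v)$, matching the node labelling of the patterns. The first step is to show that, in order to understand the induced temporal subgraph on $\{a,u,b,v\}$, one only needs to track the \emph{mutual} crossings of these four agents: a crossing between two agents that are not both in $\{a,u,b,v\}$ never changes the relative circular ordering of $a,u,b,v$ --- for two outside agents this is exactly Lemma~\ref{lem:cir4nodes}, and for a crossing involving exactly one of the four the inside agent merely hops over an outside one and stays on the same arc with respect to the other three. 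Conversely, when two of the four cross, they are neighbours in the full circular ordering, so no agent --- in particular neither of the other two of the four --- lies on the arc between them, hence at that moment they are circularly consecutive \emph{among the four}.

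Putting this together, the subsequence of $x$ consisting of crossings within $\{a,u,b,v\}$, read in time order, is precisely the representation of $\mathcal{G}_{|\{a,u,b,v\}}$, and it is a sequence of \emph{circular adjacent transpositions} starting from the circular order $(a,u,b,v)$, in which each of the $\binom{4}{2}=6$ pairs occurs at most once (because $\mathcal{G}$ is a clique) and which, once all six pairs have crossed, reaches the reversed circular order (just as the reverse permutation is reached in the linear case). The remaining step is then a finite verification: each pattern of Figure~\ref{fig:circular} would, if it occurred as a prefix, prescribe the first few mutual crossings of some four agents; reading its labelled edges in increasing label order and simulating them on the evolving circular order of the four, one checks that in every case a contradiction arises. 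Two kinds of contradiction occur. Either some step of the pattern asks to cross two agents that are not circularly consecutive among the four (impossible by the paragraph above --- this already kills, for instance, the single-edge pattern in which $a$ and $b$ cross first, since $a$ and $b$ are antipodal in $(a,u,b,v)$, as well as both two-edge patterns, which fail at their second edge); or the prefix is realizable but leaves the four agents in a circular order, together with a set of already-used pairs, from which the still-uncrossed pairs can never become consecutive --- so the clique cannot be completed, contradicting that $\mathcal{G}$ is a temporal clique (this is how the four- and five-edge patterns are ruled out). Since the four nodes were arbitrary, $\pi$ excludes all the patterns of Figure~\ref{fig:circular} from $\mathcal{G}$.

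I expect the two genuinely delicate points to be, first, pinning down cleanly that internal--external and external--external crossings are invisible to the circular order of the four (so that the induced prefix really equals the prefix of the mutual-crossing sequence) --- this is routine but is exactly where Lemma~\ref{lem:cir4nodes} is needed --- and, second, the bookkeeping of the finite case analysis: there are many patterns, and the cleanest presentation organises them by their first one or two edges and uses the dihedral symmetry of the circle (rotations and orientation-reversal) to collapse symmetric cases, so that only a handful of simulations remain. It is worth noting that, unlike the linear case, these configurations must be forbidden only as \emph{prefixes}: after enough crossings any two of the four do become consecutive, so a finite list of plain (non-prefix) forbidden patterns could not capture the class --- the prefix condition is exactly what makes the finite list in Figure~\ref{fig:circular} correct, and it is also why Lemma~\ref{lem:cirpatterns} together with its converse will yield Theorem~\ref{th:cir}.
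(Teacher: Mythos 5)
Your proof follows the same route as the paper: reduce to the four chosen agents via the invariance of their relative circular order under crossings not internal to the quadruple (Lemma~\ref{lem:cir4nodes}, extended as you note to crossings involving exactly one outside agent), observe that a crossing requires circular consecutiveness, and then verify by a finite simulation that each pattern of Figure~\ref{fig:circular} cannot occur as a prefix. You are in fact more precise than the paper's one-line justification, since you correctly distinguish the patterns containing an outright infeasible step (crossing two non-consecutive agents) from those, such as the four-edge patterns, whose displayed crossings are all feasible but which leave the remaining uncrossed pairs permanently non-consecutive, so that the temporal clique cannot be completed.
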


\begin{proof}
    First observe that the relative circular ordering of four agents, $a,b,c,d$, does not change after a crossing of two agents which are both distinct from $a,b,c,d$. Furthermore, if two agents $u$ and $v$ are not consecutive at a certain time, they cannot cross each other in the next time step. One can check that each pattern of Figure~\ref{fig:circular} contains such a forbidden step.
\end{proof}

Informally, for a circular order of four agents, there are $6!$ incremental temporal cliques. We considered all of those circular patterns exhaustively, to see if the corresponding four-agent schedules with the corresponding original orderings are feasible or not. All forbidden patterns are exactly the patterns that have a prefix among the ones displayed in Figure \ref{fig:circular}.

\begin{lemma}
Any temporal clique $\mathcal{G} \in \mathcal{C}$ having a circular ordering $\pi$ excluding the circular forbidden patterns in Figure \ref{fig:circular}, can be associated to a circular-mobility schedule from $\pi$ as initial circular ordering of agents. 
\end{lemma}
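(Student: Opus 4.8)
The plan is to mirror the proof of Lemma~\ref{lem: mapping}, replacing the three‑node argument by a four‑node one and using Lemma~\ref{lem:cir4nodes}. As there, up to isomorphism we may assume $\mathcal{G}$ has lifetime $T=n(n-1)/2$ with exactly one edge appearing at each time, and write $\mathcal{R}(\mathcal{G})=(u_1,v_1,1),\ldots,(u_T,v_T,T)$. Starting from $\pi_0=\pi$, define circular orderings $\pi_1,\ldots,\pi_T$ by letting $\pi_t$ be obtained from $\pi_{t-1}$ by transposing $u_t$ and $v_t$ in the cyclic sequence. The goal is to show that for every $t$ the agents $u_t,v_t$ are consecutive (neighbours on the circle) in $\pi_{t-1}$; this says exactly that $\mathcal{R}(\mathcal{G})$ is a valid circular-mobility schedule from $\pi$, and since $\mathcal{G}$ is a temporal clique we then get $\mathcal{G}=\mathcal{G}_{\pi,\mathcal{R}(\mathcal{G})}$, so $\mathcal{G}$ is a circular-mobility temporal clique.

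Suppose not, and let $t$ be the first failure: $u_1v_1,\ldots,u_{t-1}v_{t-1}$ are consecutive transpositions while $u:=u_t$ and $v:=v_t$ are not consecutive in $\pi_{t-1}$. This forces $n\ge 4$ (for $n\le 3$ any two nodes are consecutive in every circular ordering), and since $u,v$ are not consecutive, both open arcs between them in $\pi_{t-1}$ are nonempty; pick a node $a$ on one and a node $b$ on the other, so the restriction of $\pi_{t-1}$ to $S=\{u,v,a,b\}$ has cyclic order $u,a,v,b$. Consider the induced temporal clique $\mathcal{G}_{|S}$ with the circular ordering $\pi|_S$. By Lemma~\ref{lem:cir4nodes}, every crossing before time $t$ with not both endpoints in $S$ leaves the circular order of $S$ unchanged, while every crossing before $t$ with both endpoints in $S$ is, by minimality of $t$, a consecutive transposition in the full ordering and hence also in the restriction to $S$. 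Consequently the subsequence of $\mathcal{R}(\mathcal{G})$ of crossings with both endpoints in $S$, read up to and including $uv$ at time $t$, is the prefix of a putative circular-mobility schedule of $\mathcal{G}_{|S}$ from $\pi|_S$ that reaches the cyclic order $u,a,v,b$ and is then required to cross the non-consecutive pair $uv$. Thus $\mathcal{G}_{|S}$ is not realizable as a circular-mobility schedule from $\pi|_S$.

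Now we invoke the exhaustive analysis of the four-agent case described just before this lemma: among all circularly ordered temporal cliques on four nodes, those that do \emph{not} arise from a circular-mobility schedule from their given circular ordering are precisely the ones admitting a prefix isomorphic, respecting the circular order, to one of the patterns of Figure~\ref{fig:circular}. Hence $\mathcal{G}_{|S}$, and therefore $\mathcal{G}$, contains such a forbidden prefix pattern compatible with $\pi$, contradicting the hypothesis that $\pi$ excludes the patterns of Figure~\ref{fig:circular} as a prefix. This contradiction shows that no crossing is ever non-consecutive, which proves the lemma; together with Lemma~\ref{lem:cirpatterns} it also completes the proof of Theorem~\ref{th:cir}.

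I expect the delicate point to be the reduction in the third paragraph: one must verify that the ``first bad crossing'' on $n$ agents genuinely restricts to a forbidden \emph{prefix} on the four chosen agents — that is, that the $S$-internal crossings before time $t$ really do form a valid consecutive-transposition prefix on $S$ reaching the order $u,a,v,b$ (this is exactly where Lemma~\ref{lem:cir4nodes} and the minimality of $t$ are both used), and that this prefix is precisely what the finite enumeration behind Figure~\ref{fig:circular} identifies as unavoidable. The remaining steps are bookkeeping parallel to Lemma~\ref{lem: mapping}.
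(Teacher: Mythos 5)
Your proof follows essentially the same route as the paper's: define the putative circular orderings $\pi_t$, take the first non-consecutive crossing, pick $a$ and $b$ on the two arcs separating $u_t$ and $v_t$, and reduce to the induced four-node instance. You in fact supply more detail than the paper, whose proof stops at the bare assertion that the induced subgraph on $\{a,u_t,b,v_t\}$ is a forbidden pattern, whereas you justify the restriction step via Lemma~\ref{lem:cir4nodes} (plus the easy one-endpoint-in-$S$ case) and the exhaustive four-agent enumeration behind Figure~\ref{fig:circular}.
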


\begin{proof}
Recall that, up to isomorphisms, we can assume that $\mathcal{G}$ has life time $T=n(n-1)/2$ and that exactly one edge appears at each time $t \in [T]$.

Starting from the initial circular ordering $\pi_0=\pi$, we define a sequence $\pi_1,\ldots,\pi_T$, where $T = n(n-1)/2$, of circular orderings, corresponding to what, we believe, to be the circular positions of the agents at each time step if we read the edges of $\mathcal{G}$ by increasing time label as a mobility schedule. More precisely, for each $t\in T$, $\pi_t$ is defined from $\pi_{t-1}$ as follows. Consider the edge  $uv=\lambda^{-1}(t)$ appearing at time $t$ in $\mathcal{G}$. We define $\tau_t$ as the transposition exchanging $u$ and $v$ in $w_{t-1}$. Equivalently, we set $\tau_t=(i,j)$ where $i$ and $j$ respectively denote the indexes of $u$ and $v$ in $\pi_{t-1}$, i.e. $\pi_{t-1}(i)=u$ and $\pi_{t-1}(j)=v$. We then set $\pi_t=\pi_{t-1}\tau_t$.

We need to prove that $u_t$ and $v_t$ are indeed adjacent in $\pi_{t-1}$.
Consider the first time $t$ when this fails to be. That is $\tau_1, \ldots, \tau_{t-1}$ are adjacent transpositions, edge $uv$ appears at time $t$, i.e. $uv = u_tv_t$, and $u,v$ are not consecutive in $\pi_{t-1}$. It implies that there exists two other nodes $a,b$ such that in $\pi_{t-1}$, the relative circular ordering amongs the four vertices is $a, u_t, b, v_t$. We consider the temporal subgraph, $\mathcal{H}$, induced by the four vertices $a, u_t, b, v_t$. Then $\mathcal{H}$ belongs to the set $S$ of forbidden circular patterns. 
\end{proof}

Remark. The set of forbidden prefix patterns in Figure \ref{fig:circular}  can also be used to characterize temporal mobility graphs arising in 1D model where pairs cross each other at most once. Each (partial) completion of a prefix of Figure \ref{fig:circular} cannot happen in a circular mobility clique, and conversely, any impossible pattern must have a prefix from Figure~\ref{fig:circular}.

\subsection{Multi-crossing circular mobility model}

Each state of the automaton \ref{fig:Cirautomata2} represents two circular orderings of four agents, $a,b,c,d$, in the circle. The state $x-y,z-t$ corresponds to circular orderings such that two agents $x$, $y$ are not adjacent; and two agents $z$, $t$ are not adjacent. Intuitively, from a current circular ordering corresponding to the state $x-y,z-t$, only agents, which are adjacent, can cross. For example, from an ordering corresponding to the state $a-c,b-d$, after the agents $b$ and $c$ cross, the resulting circular ordering is an ordering corresponding to the state $a-b,c-d$. An impossible crossing arises when two non-adjacent agents, in a current circular ordering, cross. This is captured by transition rules representing crossing of non-adjacent agents, which leads states to an accepting state, $\bot$. Formally, the automaton, $(Q, \Sigma, \delta, q_0, F)$, of impossible crossing-sequences is defined as follows. 

\begin{itemize}
    \item Each state, $x$-$y,z$-$t$, in $Q$, represents two circular orderings ($x,z,y,t$ and $x,t,y,z$), in which, two agents $x$, $y$ are not adjacent; and two agents $z$, $t$ are not adjacent;
    \item $\Sigma = \{ ab,ac,ad,bc,bd,cd\}$ corresponds to crossings between two agents; 
    \item Function $\delta$, which is a transition function from $Q$ to $Q$, is given by Figure~\ref{fig:Cirautomata2};
    \item Initial state, $q_0$, is the state $a$-$c,b$-$d$;
    \item Accepting state, $F$, is the state $\bot$.
\end{itemize}

Each word of the automaton corresponds to a sequence of crossings which is impossible in a circular mobility model.

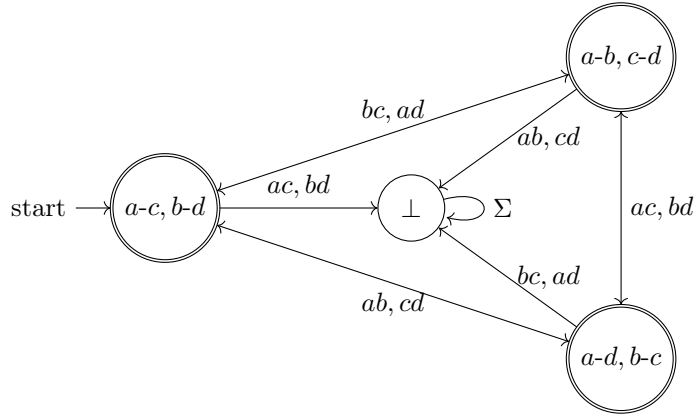
\begin{figure}[htp]
    \begin{center}
\begin{tikzpicture} [node distance = 2cm, on grid]
   \tikzstyle{circlenode}=[draw,circle,minimum size=60pt,inner sep=0pt]
    \tikzstyle{whitenode}=[draw,circle,fill=white,minimum size=18pt,inner sep=0pt]
 
\node (ac) [state, initial, accepting] at (-1,0) {$a$-$c,b$-$d$};
\node (ab) [state, accepting] at (5,2) {$a$-$b,c$-$d$};
\node (ad) [state, accepting] at (5,-2) {$a$-$d,b$-$c$};
\node (f) [state] at (2.24,0) {$\bottom$};
\draw (ac) edge [<->] node [above] {$bc,ad$} (ab);
\draw (ac) edge [<->] node [below] {$ab,cd$} (ad);
\draw (ab) edge [<->] node [right] {$ac,bd$} (ad);
\draw (ac) edge [->] node [above] {$ac,bd$} (f);
\draw (ab) edge [->] node [right] {$ab,cd$} (f);
\draw (ad) edge [->] node [right] {$bc,ad$} (f);
\draw (f) edge [->, loop right] node [right] {$\Sigma$} (f);

\end{tikzpicture} 
\end{center}
\caption{Automata of possible crossing-sequences.}\label{fig:Cirautomata2}
\end{figure}

We can see that the automaton is the same as the one in Figure~\ref{fig:automata2}, if we replace $B$ with $ac,bd$, $C$ with $ab,dc$ and $A$ with $bc,ad$. Hence, we deduce the set of impossible sequences by doing the corresponding changes in the regular expression~(\ref{eq:regexpr}).

\begin{theorem}\label{th:mulautocir}
    Let  $\mathcal{G}=((V,E),\lambda)$ be a temporal graph on $n$ vertices labeled as $V=\{a_1,\ldots,a_n\}$. We order the temporal edges of $\mathcal{G}$ as a sequence of triplets $\mathcal{R}=(u_1,v_1,t_1),\ldots,(u_M,v_M,t_M)$ such that $\forall i,<j\le M$, $t_i<t_j$. 
    If there exists a circular ordering on vertices of $\mathcal{G}$, such that any sub-sequence of $\mathcal{R}$ restricted to crossing among four agents $a,b,c,d$ is recognized by the automaton in Figure~\ref{fig:Cirautomata2}, then $\mathcal{G}$ is a multi-crossing mobility graph.
\end{theorem}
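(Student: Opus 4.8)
The plan is to transcribe the argument behind Theorem~\ref{th:mulauto} to the circular setting, now tracking \emph{four} agents at a time instead of three, with Lemma~\ref{lem:cir4nodes} playing the role of the observation that a crossing outside a fixed set of agents leaves their relative order unchanged. Fix a circular ordering $\pi=\pi_0$ of $V$ witnessing the hypothesis. Scanning $\mathcal{R}=(u_1,v_1,t_1),\ldots,(u_M,v_M,t_M)$ from left to right, define circular orderings $\pi_0,\pi_1,\ldots,\pi_M$ where $\pi_t$ is obtained from $\pi_{t-1}$ by transposing $u_t$ and $v_t$. It suffices to prove that $u_t$ and $v_t$ are consecutive in $\pi_{t-1}$ for every $t$: then $x=x_1,\ldots,x_M$ with $x_t=u_tv_t$ is a legitimate circular-mobility schedule from $\pi$, the identity is an isomorphism from $\mathcal{G}$ to $\mathcal{G}_{\pi,x}$, and hence $\mathcal{G}$ is a multi-crossing circular mobility graph. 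Note that multi-crossings are allowed throughout, so the argument never uses that a pair crosses at most once.

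Suppose the claim fails and let $t$ be the first step at which $u_t$ and $v_t$ are not consecutive in $\pi_{t-1}$; then $\pi_0,\ldots,\pi_{t-1}$ is a genuine evolution of circular orderings through transpositions of consecutive agents. Since $u_t$ and $v_t$ are non-consecutive, both arcs of $\pi_{t-1}$ between them are non-empty; choose an agent $a$ in one arc and an agent $b$ in the other, so that the restriction of $\pi_{t-1}$ to $\{a,u_t,b,v_t\}$ is the circular ordering $a,u_t,b,v_t$, in which $\{u_t,v_t\}$ and $\{a,b\}$ are the two non-adjacent pairs. Now run the automaton of Figure~\ref{fig:Cirautomata2} on the sub-sequence of $\mathcal{R}$ consisting of the crossings among $\{a,u_t,b,v_t\}$, started from the state encoding the relative circular ordering of these four agents in $\pi_0$ (after relabelling them as $a,b,c,d$; since each state represents a circular ordering together with its reflection and a transposition of consecutive elements commutes with reflection, this encoding is well defined). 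By Lemma~\ref{lem:cir4nodes} only these crossings affect the relative circular ordering of the four agents, and, all crossings before time $t$ being between consecutive agents, the non-$\bot$ transitions of the automaton faithfully track that ordering. Hence, immediately before processing $x_t$, the automaton sits in the valid state $u_t$-$v_t,a$-$b$; from that state the letter $u_tv_t$ is a crossing of a non-adjacent pair, which is precisely a transition into $\bot$. This contradicts the hypothesis that this $4$-agent sub-sequence is recognized by the automaton (equivalently, never reaches $\bot$). So no such $t$ exists, which completes the proof.

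The only ingredient requiring a check beyond this skeleton is the \emph{faithfulness} of the automaton on legal inputs: from each of its three non-$\bot$ states, the four letters corresponding to crossings of \emph{adjacent} pairs lead to the non-$\bot$ state encoding exactly the circular ordering obtained after that swap, and the three non-$\bot$ states do account for all three circular orderings of four labelled agents. This is settled once and for all in the construction of Figure~\ref{fig:Cirautomata2}. I expect the main (though entirely routine) obstacle to be this bookkeeping — reconciling an arbitrary $4$-subset's circular order with the fixed symbols $a,b,c,d$ of the automaton, and the short case analysis of its transitions — after which the contradiction above is forced: at a first illegal crossing one must have been sitting in the valid state in which that very crossing is the transition to $\bot$.
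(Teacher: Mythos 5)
Your proposal is correct and follows essentially the same route as the paper: the paper's own proof of Theorem~\ref{th:mulautocir} is just a one-line reference to the proof of Theorem~\ref{th:mulauto}, whose argument (take the first illegal crossing, restrict to the relevant small set of agents, and show the automaton must step to $\bottom$ there, contradicting the hypothesis) is exactly what you transcribe to the four-agent circular setting. Your write-up is in fact more explicit than the paper's, e.g.\ in handling the relabelling of the four agents to match the automaton's initial state and in noting the faithfulness of the non-$\bottom$ transitions.
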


\begin{proof}
    The proof of Theorem \ref{th:mulautocir} is similar to the proof of Theorem \ref{th:mulauto}. 
\end{proof}

\section{Conclusion and perspectives}\label{sec:conclusion}

In this paper, we have introduced the first notion of forbidden patterns in temporal graphs. In particular, this notion allowed us to characterize a range of classes of temporal cliques corresponding to 1D mobility models. In case when agents cross each other exactly once, we show that the corresponding class of temporal cliques has spanners of size $2n-3$, following the conjecture from~\cite{casteigts2021temporal}. 
We note that in our model when two agents meet, they are forced  to switch positions along the line. 
As a next step, one could lift this constraint and try to characterize graphs arising from the more general model, where the adjacency between two agents do not necessarily imply that they swap places. Another possible generalization is to study mobility models in higher dimensions. It is known that (static) unit-disk graphs have infinitely many minimal forbidden induced subgraphs~\cite{AZ18}. Nevertheless, it would be interesting to study this class from the point of view of forbidden patterns.
We also propose a more constrained 2D model giving a \emph{directed} temporal graph: here vertices are agents in $\mathbb R^2$, each moving in a fixed direction and at time $t$, we have a (directed) edge from agent $i$ to agent $j$ if and only if the line through the positions of agents $i$ and $j$ is orthogonal to the moving direction of agent $i$, see Figure~\ref{fig:chicken}. We call this the ``chicken model'', imagining that each agent has eyes on each side and sees others when they are on one side or the other. Note that when all agents have parallel trajectories and move in the same direction, it is then similar to our 1D-mobility model.
\begin{figure}
    \centering
    \includegraphics[width=0.4\textwidth]{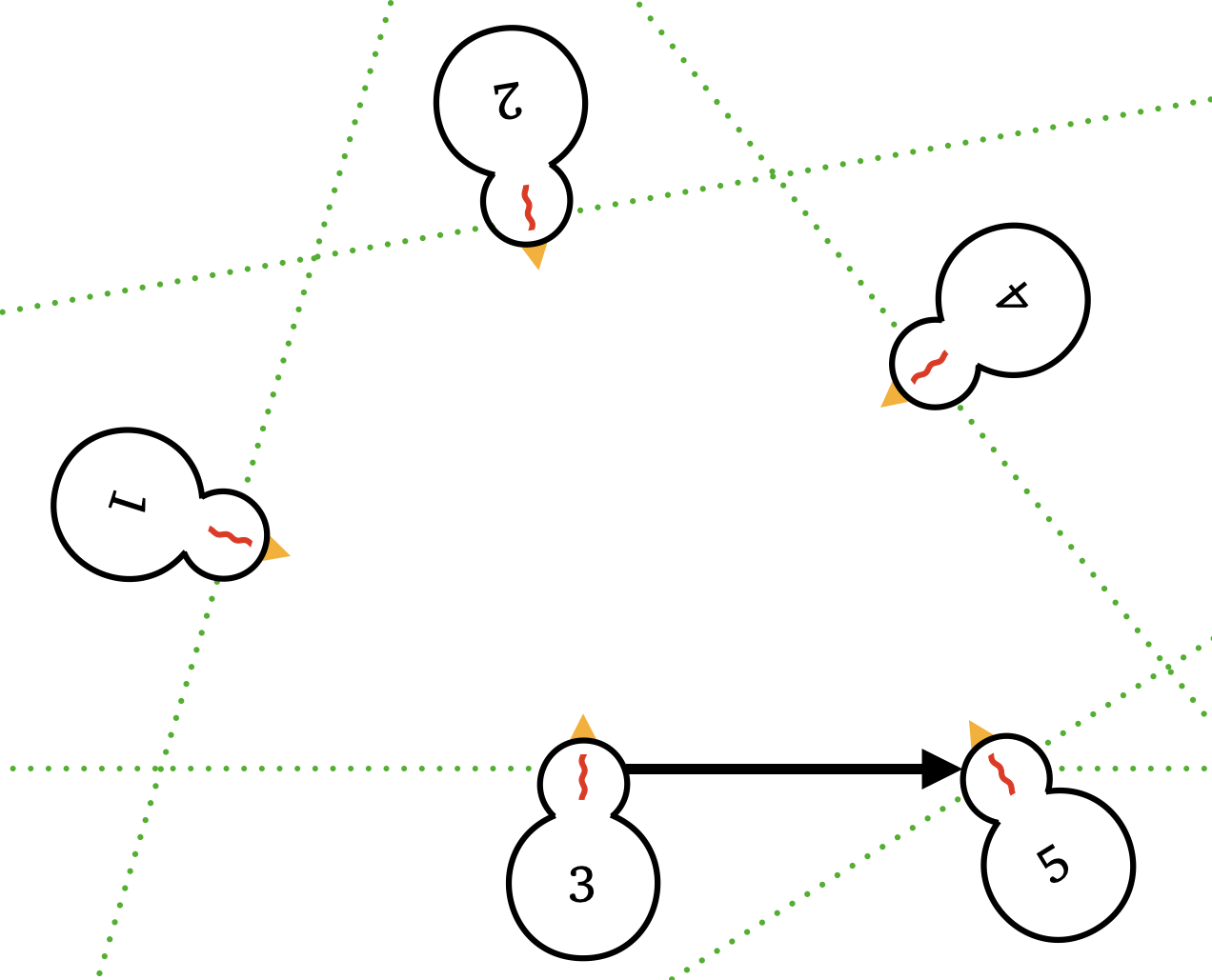}
    \caption{Illustration of the appearance of the directed edge $3\rightarrow 5$ in our 2D-model where agents, like chickens, have a visibility region orthogonal to their direction of movement.
    }
    \label{fig:chicken}
\end{figure}
More generally, can we find mobility models in higher dimensions that also provide interesting temporal cliques?

Our work can also be seen as a characterization of square integers matrices in terms of patterns (if we fill the adjacency matrix of a 1D-mobility clique with the time of appearance of each edge), perhaps it could be generalized to a study of well-structured matrices as in the seminal work of~\cite{LaurentST17} on Robinsonian matrices which are closely related to interval graphs.

\subsubsection*{Acknowledgement}\label{sec:appendix}

This work was supported by the French ANR project TEMPOGRAL (ANR-22-CE48-0001). The third author has received funding from the European Union's Horizon 2020 research and innovation program under the Marie Skłodowska-Curie grant agreement No 945332.

\bibliographystyle{splncs04}
\bibliography{ref}

\begin{thebibliography}{10}
\providecommand{\url}[1]{\texttt{#1}}
\providecommand{\urlprefix}{URL }
\providecommand{\doi}[1]{https://doi.org/#1}

\bibitem{AZ18}
Atminas, A., Zamaraev, V.: On forbidden induced subgraphs for unit disk graphs.
  Discrete \& Computational Geometry  \textbf{60}(1),  57--97 (2018)

\bibitem{awerbuch1984efficient}
Awerbuch, B., Even, S.: Efficient and reliable broadcast is achievable in an
  eventually connected network. In: Proceedings of the third annual ACM
  Symposium on Principles of Distributed Computing. pp. 278--281 (1984)

\bibitem{AxiotisF16}
Axiotis, K., Fotakis, D.: On the size and the approximability of minimum
  temporally connected subgraphs. In: 43rd International Colloquium on
  Automata, Languages, and Programming,. LIPIcs, vol.~55, pp. 149:1--149:14
  (2016), \url{https://doi.org/10.4230/LIPIcs.ICALP.2016.149}

\bibitem{bandlow2008elementary}
Bandlow, J.: An elementary proof of the hook formula. Electron. J. Comb.
  \textbf{15}(1) (2008),
  \url{http://www.combinatorics.org/Volume\_15/Abstracts/v15i1r45.html}

\bibitem{XuanFJ2003}
Bui{-}Xuan, B., Ferreira, A., Jarry, A.: Computing shortest, fastest, and
  foremost journeys in dynamic networks. Int. J. Found. Comput. Sci.
  \textbf{14}(2),  267--285 (2003). \doi{10.1142/S0129054103001728},
  \url{https://doi.org/10.1142/S0129054103001728}

\bibitem{CasteigtsFQS2012}
Casteigts, A., Flocchini, P., Quattrociocchi, W., Santoro, N.: Time-varying
  graphs and dynamic networks. {IJPEDS}  \textbf{27}(5),  387--408 (2012)

\bibitem{casteigts2021temporal}
Casteigts, A., Peters, J.G., Schoeters, J.: Temporal cliques admit sparse
  spanners. Journal of Computer and System Sciences  \textbf{121},  1--17
  (2021)

\bibitem{damaschke1990forbidden}
Damaschke, P.: Forbidden ordered subgraphs. Topics in Combinatorics and Graph
  Theory: Essays in Honour of Gerhard Ringel pp. 219--229 (1990)

\bibitem{feuilloley2021graph}
Feuilloley, L., Habib, M.: Graph classes and forbidden patterns on three
  vertices. SIAM Journal on Discrete Mathematics  \textbf{35}(1),  55--90
  (2021)

\bibitem{frame1954hook}
Frame, J.S., Robinson, G.d.B., Thrall, R.M.: The hook graphs of the symmetric
  group. Canadian Journal of Mathematics  \textbf{6},  316--324 (1954)

\bibitem{Guzman-ProHH23}
Guzm{\'{a}}n{-}Pro, S., Hell, P., Hern{\'{a}}ndez{-}Cruz, C.: Describing
  hereditary properties by forbidden circular orderings. Appl. Math. Comput.
  \textbf{438},  127555 (2023)

\bibitem{guzman2023describing}
Guzm{\'a}n-Pro, S., Hell, P., Hern{\'a}ndez-Cruz, C.: Describing hereditary
  properties by forbidden circular orderings. Applied Mathematics and
  Computation  \textbf{438},  127555 (2023)

\bibitem{hell2014ordering}
Hell, P., Mohar, B., Rafiey, A.: Ordering without forbidden patterns. In:
  Algorithms-ESA 2014: 22th Annual European Symposium. Proceedings 21. pp.
  554--565. Springer (2014)

\bibitem{holme2012temporal}
Holme, P., Saram{\"a}ki, J.: Temporal networks. Physics reports
  \textbf{519}(3),  97--125 (2012)

\bibitem{kempe2000connectivity}
Kempe, D., Kleinberg, J., Kumar, A.: Connectivity and inference problems for
  temporal networks. In: Proceedings of the thirty-second annual ACM Symposium
  on Theory of Computing. pp. 504--513 (2000)

\bibitem{LatapyVM2018}
Latapy, M., Viard, T., Magnien, C.: Stream graphs and link streams for the
  modeling of interactions over time. Social Netw. Analys. Mining
  \textbf{8}(1),  61:1--61:29 (2018)

\bibitem{LaurentST17}
Laurent, M., Seminaroti, M., Tanigawa, S.: A structural characterization for
  certifying {R}obinsonian matrices. Electron. J. Comb.  \textbf{24}(2), ~2
  (2017)

\bibitem{michail2016introduction}
Michail, O.: An introduction to temporal graphs: An algorithmic perspective.
  Internet Mathematics  \textbf{12}(4),  239--280 (2016)

\bibitem{skrien1982relationship}
Skrien, D.J.: A relationship between triangulated graphs, comparability graphs,
  proper interval graphs, proper circular-arc graphs, and nested interval
  graphs. Journal of graph Theory  \textbf{6}(3),  309--316 (1982)

\bibitem{stanley1984number}
Stanley, R.P.: On the number of reduced decompositions of elements of {C}oxeter
  groups. European Journal of Combinatorics  \textbf{5}(4),  359--372 (1984)

\bibitem{tenner2006reduced}
Tenner, B.E.: Reduced decompositions and permutation patterns. Journal of
  Algebraic Combinatorics  \textbf{24},  263--284 (2006)

\bibitem{Villani2021}
Villani, N.: Dynamic Unit Disk Graph Recognition. Master's thesis, Université
  de Bordeaux (2021),
  \url{https://www.labri.fr/perso/acasteig/files/rapport-neven.pdf}

\bibitem{VillaniC2021}
Villani, N., Casteigts, A.: Some thoughts on dynamic unit disk graphs.
  Algorithmic Aspects of Temporal Graphs IV (2021),
  \url{https://www.youtube.com/watch?v=yZRNLjbfxxs}, satellite workshop of
  ICALP

\end{thebibliography}

\end{document}